\documentclass{article}
\usepackage{arxiv}
\usepackage{amsmath, amssymb, amsthm}
\newtheorem{lemma}{Lemma}
\newtheorem{theorem}{Theorem}
\newtheorem{corollary}{Corollary}
\newtheorem{definition}{Definition}

\usepackage[numbers]{natbib}

\usepackage[T1]{fontenc}
\usepackage[utf8]{inputenc}
\usepackage{amsmath}
\usepackage{amssymb}
\usepackage{algorithm}
\usepackage{algorithmic}
\usepackage{float}
\usepackage{booktabs}
\usepackage{hyperref}
\usepackage{listings}
\usepackage{tikz}
\usetikzlibrary{positioning}
\usetikzlibrary{arrows.meta}
\usepackage{lineno}
\usepackage{inconsolata}
\usepackage{enumitem}
\usepackage{xcolor}
\usepackage{placeins}

\newcommand{\expll}[1]{\mathrm{expl}(#1)}

\DeclareMathOperator{\bsdfs}{\textsc{BS-DFS}}
\DeclareMathOperator{\search}{\textsc{Search}}
\DeclareMathOperator{\fruitful}{\textsc{Fruitful}}
\DeclareMathOperator{\ER}{\mathrm{ER}}
\DeclareMathOperator{\WS}{\mathrm{WS}}
\DeclareMathOperator{\pre}{\mathrm{pred}}
\DeclareMathOperator{\suc}{\mathrm{succ}}

\newcommand{\fullref}[1]{\autoref{#1} (\nameref{#1})}

\renewcommand{\shorttitle}{Enumerating Length-Bounded Simple Paths and Cycles}
\renewcommand{\undertitle}{}
\renewcommand{\headeright}{}

\title{Enumerating Length-Bounded Simple Paths and Cycles in Directed Graphs with $O(k(n+m))$ 
    Delay Using Edge-Consistent Node Barriers}
\author{
  Frank Bauernöppel \\
  Computer Engineering\\
  Hochschule für Technik und Wirtschaft, Berlin, Germany\\
  \texttt{frank.bauernoeppel@htw-berlin.de}
  \And
  Jörg-Rüdiger Sack \\
  Carleton University Ottawa, School of Computer Science, Canada \\
  \texttt{sack@scs.carleton.ca}
}
\date{}

\begin{document}






\maketitle


\begin{abstract}
Enumerating simple paths and cycles subject to a given length bound is a
fundamental problem in graph algorithms with applications
from network analysis to fraud detection.
Recent algorithms for this problem, namely
$\textsc{BC-DFS}$ (Peng et al., 2019, 2021) and
$\textsc{CYCLE\_SEARCH}$ (Gupta and Suzumura, 2021),
employ cached barrier values to prune fruitless searches.

Both algorithms turn out to produce incomplete output,
and their delay-bound arguments rely on flawed claims.
For $\textsc{CYCLE\_SEARCH}$ this is known;
here we establish the analogous results for $\textsc{BC-DFS}$ by
exhibiting graphs on which paths are missed, by identifying the defect
in its barrier-update procedure, and by refuting the monotonicity
claim on which its delay-bound proof rests.

As our main contribution, we introduce \emph{edge-consistency}, a local
invariant on barrier values analogous to heuristic consistency in informed
search. It provides an incremental mechanism for maintaining admissible
barrier estimates and yields concise correctness proofs.

We use edge-consistency as a unifying framework for design and analysis of
\emph{Bounded-Scope Depth-First Search} ($\textsc{BS-DFS}$) ---
a new algorithm for enumerating simple paths or cycles of length at most
$k$ in a directed graph. For $\textsc{BS-DFS}$
we prove a worst-case delay of at most $3(k+1)(n+m)$ elementary steps
between consecutive events (start, each output, termination) 
and an amortized delay of at most $2(k+1)(n+m)$ steps per event,
the $p$-th event being reached within $2p(k+1)(n+m)$ steps;
both bounds are in $O(k(n+m))$.
Barrier admissibility alone is not sufficient for the delay bound:
for two variants with simpler barrier management,
we exhibit a graph family forcing $\Omega(k^2(n+m))$ delay between outputs.

Experiments on two families of random graphs confirm our findings,
support the significance of the incompleteness result,
and show that achieving completeness has modest empirical cost.
\end{abstract}

\section{Introduction}

\subsection{Related Work}

The enumeration of simple paths and cycles in directed graphs is a
classical problem in graph algorithms, with applications ranging from
chemistry and circuit analysis to network science and computational biology.

The classical line of work places no bound on the length of the reported
paths or cycles. Early work by Tiernan~\cite{tiernan1970} introduced a
backtracking algorithm for enumerating elementary circuits, but suffered
from exponential run-time in some dense graphs;
Tarjan~\cite{DBLP:journals/siamcomp/Tarjan73} reduced this redundancy by
a marking scheme. A breakthrough was achieved by
Johnson~\cite{DBLP:journals/siamcomp/Johnson75}, whose influential
algorithm combines depth-first search with a Boolean blocking-unblocking
mechanism to avoid repeatedly exploring fruitless search branches;
it runs in $O((c+1)(n+m))$ time, where $c$ denotes the number of simple
cycles. Subsequent work, including that of Szwarcfiter and
Lauer~\cite{szwarcfiter1976}, refined these ideas, and Tarjan's algorithm
for strongly connected components~\cite{DBLP:journals/siamcomp/Tarjan72}
continues to serve as a preprocessing step in many enumeration
algorithms. Birmelé et al.~\cite{birmele2013} give output-sensitive
algorithms that list \emph{all} simple cycles, and all simple
$s$-$t$-paths, in \emph{undirected} graphs, optimal in the total output
size. A survey on enumeration algorithms in graphs was compiled in 2016
by Grossi~\cite{Grossi2016}.

Since the number of simple paths and cycles may grow exponentially with
the size of the graph, unrestricted enumeration is generally impractical
on large instances; moreover, many associated decision problems,
including the classical Hamiltonian path and Hamiltonian cycle problems,
are NP-complete~\cite{GareyJohnson1979}. Practical applications therefore
often restrict the problem, most commonly by a bound $k$ on the path or
cycle length. In many contemporary applications the underlying graphs are
huge, but only paths or cycles up to a small length bound are of
interest; examples are the analysis of large social networks, fraud
detection in transaction networks, and motif discovery in biological
networks.

Rizzi et al.~\cite{DBLP:journals/corr/RizziSS14} impose such a length
bound, listing all length-bounded simple $s$-$t$-paths. Their focus is
the \emph{weighted} case, matching the classic $K$-shortest-paths time bounds in only $O(n+m)$ space while
a short remark deals with the unweighted case. For the \emph{unweighted} case, they suggest
using a reverse (from the target) breadth-first search at each node visited
to eliminate the fruitless (no output generating) successors from the depth-first recursion.

The length-bounded setting has further stimulated the development of algorithms pruning
the depth-first search efficiently by maintaining per-node barrier values.
Two notable such algorithms are \textsc{CYCLE\_SEARCH}, introduced by Gupta and
Suzumura~\cite{DBLP:journals/corr/abs-2105-10094}, and
\textsc{BC-DFS}, proposed by Peng et al.~\cite{10.14778/3372716.3372720,peng_efficient_2021}.
Both algorithms build upon ideas originating in Johnson's algorithm,
but replace the classical binary blocking mechanism by numeric barrier
values. These barrier values are intended to cache information about
previously explored searches and thereby avoid redundant fruitless exploration.
These algorithms are shown to be very efficient in practical experiments,
and they are claimed to achieve a worst-case delay of $O((k-1)(n+m))$ and $O(km)$,
respectively, per output.

Unfortunately, the correctness of these barrier-based approaches has
proven to be more delicate than originally anticipated.
For $\textsc{CYCLE\_SEARCH}$, counter-examples and proof gaps were identified by
Bauernöppel and Sack~\cite{boundedcycles2025}.
As a remedy, they propose an approach similar to
Rizzi's unweighted case: use depth-first search starting at $s$ and prune
it by exact barrier values, freshly computed by a reverse (from $t$)
depth-limited breadth-first search for each node visited in the
depth-first search.
In contrast to Rizzi, the underlying graph $G$ is not mutated during recursion.

Although the breadth-first search at every node visit does not
affect the asymptotic delay bound, it repeatedly recomputes essentially the same information.
Peng et al.~\cite{10.14778/3372716.3372720,peng_efficient_2021}
(where a similar approach for path enumeration is named T-DFS)
report that
``T-DFS demonstrates the worst performance on most of the graphs although
it is a polynomial delay algorithm with a nice theoretical guarantee''.
The central challenge is therefore developing an efficient
barrier based output-sensitive enumeration algorithm
with proven correctness and delay bound.

\subsection{Our Contributions}

We introduce \emph{Bounded-Scope Depth-First Search}
($\textsc{BS-DFS}$), a simple and efficient algorithm for enumerating
all length-bounded simple paths or simple cycles in a directed graph
for a given node pair $s, t$ or a single node $s$, respectively.
While inspired by earlier barrier-based approaches,
$\textsc{BS-DFS}$ warrants independent presentation and analysis through
the notion of \emph{edge-consistency}, a local invariant that forms the
basis of its correctness proof.

$\textsc{BS-DFS}$ is an output-sensitive algorithm. Counting the start of
the run, each output, and the termination as the \emph{events} of that run,
we prove that $\textsc{BS-DFS}$ executes at most $3(k+1)(n+m)$ elementary
steps between consecutive events (\autoref{thm:worst-case-delay}), and,
amortized, reaches the $p$-th event within $2p(k+1)(n+m)$ elementary steps
(\autoref{thm:amortized-delay}). Both bounds are in $O(k(n+m))$.

Our experiments on two families of random graphs confirm the delay bound with
substantial margin. The largest delay observed between consecutive events
is $0.44\,(k+1)(n+m)$ elementary steps,
below a sixth of the constant proven in \fullref{thm:worst-case-delay},
and it does not grow over four decades of run length (\autoref{fig:delay-bounds}).
Restoring completeness costs a factor of $1.6$ in running time per interval
compared to $\textsc{BC-DFS}$ (\autoref{fig:runtime}).

We also present and discuss edge-consistent variants of $\textsc{BS-DFS}$,
varying in how barrier values are maintained, including a lazy variant
using dependency lists similar to Johnson's algorithm and $\textsc{CYCLE\_SEARCH}$.
The variants separate correctness from efficiency. The coarse \emph{loose} and
\emph{lazy} schemes remain complete under a markedly simpler argument, yet we
exhibit a graph family on which both require $\Omega(k^2(n+m))$ steps between
two consecutive outputs (\fullref{thm:loose-breaker}), so neither attains
$O(k(n+m))$ delay.

For $\textsc{BC-DFS}$~\cite{10.14778/3372716.3372720,peng_efficient_2021},
we show that completeness fails, by a counter-example,
and identify the underlying defect in its barrier-update procedure.
We further locate the point at which their correctness argument breaks down.

These contributions are not only of theoretical interest.
Both $\textsc{BC-DFS}$ and $\textsc{CYCLE\_SEARCH}$ have already been cited
in subsequent work~\cite{DBLP:conf/spaa/BlanusaIA22,DBLP:journals/topc/BlanusaAI23}
and $\textsc{CYCLE\_SEARCH}$ is used in the popular graph library
\textsc{NetworkX}~\cite{networkx,networkx_simple_cycles}.

Our experiments on random graphs show that the resulting omissions are
not rare edge cases: at length bound $k = 10$, $\textsc{BC-DFS}$ misses
$9.4\%$ of all paths on one graph family and $33.6\%$ on the other
(\autoref{fig:missed-paths}), in the latter case losing paths on every instance.
More broadly, the paper illustrates how subtle barrier propagation
schemes can be and why correctness arguments based on explicit
invariants are essential.

The paper is organized as follows. After fixing
terminology in \autoref{section:terminology} and presenting
$\bsdfs$ in \autoref{sec:algorithm}, we prove  soundness (\autoref{sec:soundness}),
establish completeness by introducing edge-consistency (\autoref{sec:completeness}),
and derive the delay bounds (\autoref{sec:delay-bounds}).
We then study several edge-consistent variants of $\bsdfs$ (\autoref{sec:bsdfs-variants})
and separate them by delay,
analyze the shortcomings of \textsc{BC-DFS} (\autoref{sec:analyzing-bcdfs}),
and report first experimental results (\autoref{sec:experiments})
before concluding in \autoref{sec:conclusions-and-outlook}.
All source code, counter-examples,
and experiments are available online at~\cite{bsdfs_github}.

\section{Terminology}\label{section:terminology}

A \emph{directed graph} $G$ is an ordered pair $G=(V,E)$,
where $V$ denotes a finite set of \emph{nodes} and $E \subseteq V \times V$
denotes a set of ordered pairs of nodes $(u,v)$, the \emph{edges} of $G$.
Let $n=|V|$ and $m=|E|$ denote the number of nodes and edges of $G$, respectively.
Throughout this paper, all graphs are directed and \emph{simple},
i.e., they contain neither self-loops ($(u,u) \notin E$) nor multiple edges.

For an edge $(u,v) \in E$, node $u$ is a \emph{predecessor} of $v$ and $v$ is a \emph{successor} of $u$.
The sets of predecessors and successors of a node $v$ are denoted by $\pre(v)$ and $\suc(v)$, respectively.
The edge $(u,v) \in E$ is said to be \emph{outgoing} from $u$ and \emph{incoming} to $v$.

A \emph{path} in $G$ is a sequence of nodes $P=(v_0,v_1,\ldots,v_l)$ such that
$(v_i,v_{i+1}) \in E$ for all $0 \le i < l$.
When a path $P$ begins at a node $x$ ($v_{0}=x$) and ends at a node $y$ ($v_{l}=y$),
it is called an $x$-$y$-path.
The \emph{length} of a path is the number of edges it contains, $\|P\| = l$.
A \emph{simple path} is a path in which all nodes are pairwise distinct.
A \emph{simple cycle} is a node sequence $C=(v_{0}, v_{1}, \dots, v_{l})$
with $l \ge 2$, $(v_{0}, v_{1}, \dots, v_{l-1})$ is a simple path, and $v_{l} = v_{0}$.
Two such sequences represent the same simple cycle if one is a cyclic rotation
of the other, i.e. $(v_0,\dots,v_l)$ and $(v_0',\dots,v_l')$ agree up to the
choice of starting node; a cycle is thus determined by its cyclic sequence of
edges, independently of which of its nodes is written first.
A simple cycle containing node $s$ is called a simple $s$-cycle.
The \emph{length} of a cycle is the number of edges it contains, $\|C\| = l$.
When appropriate, a path or cycle is identified with the set of nodes it contains.

For a path $P=(v_{0}, v_{1}, \dots, v_{l})$ and a node $v$ with $(v_l, v) \in E$, we write
$Pv$ for the path obtained by appending $v$ to $P$; iterating, $Pvw$ appends
first $v$, then $w$. The dot $\cdot$ is reserved for concatenating two paths,
as in $P_1 \cdot P_2$.

\begin{definition}[Relative Distance]
    Let $S \subseteq V$ be a set of nodes. For nodes $x, y$ we define
    \[
        dist_S(x,y) \;=\; \min \bigl\{ \, \|P\| \;:\;
            P \text{ is an } x\text{-}y\text{-path in } G
            \text{ with } P \cap S \subseteq \{x,y\} \, \bigr\} ,
    \]
    where $\min \emptyset = \infty$, so that $dist_S(x,y) = \infty$ if no such
    path exists.
    Thus $dist_S(x,y)$ is the length of a shortest path from $x$ to $y$
    among all paths which may intersect $S$ only at the endpoints $x$ and $y$.
    Note that $dist_\emptyset(x,y)$ is the ordinary edge distance
    $dist(x,y)$ in $G$.
    All distances are taken in the graph $G$ unless stated otherwise; where the ambient
    graph is ambiguous, we indicate it by a superscript, as in $dist^{G'}_S(x,y)$.
\end{definition}

The relative distance provides a triangle inequality in the following sense:

\begin{lemma}[Triangle Inequality]\label{lem:dist-triangle}
    Let $S \subseteq V$ be a set of nodes.
    For any nodes $x, z \in V$ and any intermediate node $y \in V \setminus S$,
    $$dist_S(x,z) \;\le\; dist_S(x,y) + dist_S(y,z).$$
\end{lemma}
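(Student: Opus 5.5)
The plan is the usual concatenation argument, checking that the condition on $S$ is preserved. First I dispose of the trivial cases. If $dist_S(x,y)=\infty$ or $dist_S(y,z)=\infty$, the right-hand side is $\infty$ and there is nothing to prove. Otherwise I choose a shortest admissible $x$-$y$-path $P_1$ and a shortest admissible $y$-$z$-path $P_2$. Here admissible means $P_1\cap S\subseteq\{x,y\}$ and $P_2\cap S\subseteq\{y,z\}$, with $\|P_1\|=dist_S(x,y)$ and $\|P_2\|=dist_S(y,z)$. These minima are attained because path lengths are nonnegative integers.

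Next I form the walk $W=P_1\cdot P_2$, identifying the common node $y$. It runs from $x$ to $z$ and has length $\|P_1\|+\|P_2\|$. The definition of $dist_S$ only asks for a \emph{path} (a node sequence with consecutive edges), not a simple one. So $W$ is an $x$-$z$-path in the sense of \autoref{section:terminology}, and it remains to check that it meets $S$ only at its endpoints.

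The key step is this endpoint check, and it is exactly where the hypothesis $y\notin S$ enters. Any node of $W$ lying in $S$ lies in $P_1\cap S\subseteq\{x,y\}$ or in $P_2\cap S\subseteq\{y,z\}$. Since $y\notin S$, such a node must be $x$ or $z$. Hence $W\cap S\subseteq\{x,z\}$, so $dist_S(x,z)\le\|W\|=dist_S(x,y)+dist_S(y,z)$.

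The one subtlety is an interior occurrence of $x$ or $z$ in $W$: for instance, $P_2$ might pass through $x\in S$. Read literally, $W\cap S\subseteq\{x,z\}$ as sets of nodes still holds, so the inequality follows directly from the definition. If one instead reads the definition as requiring simple paths, or as forbidding interior visits to $S$ even at $x$ or $z$, I would shortcut $W$. I take the last occurrence of $x$ and the first occurrence of $z$ after it, or more generally remove every closed subwalk, to obtain a simple $x$-$z$-path that uses a subset of the nodes of $W$ and is no longer. This preserves the intersection condition and only decreases the length, so the inequality holds under either reading.
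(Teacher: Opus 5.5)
Your proof is correct and follows the paper's argument: concatenate minimizing paths $P_1$ and $P_2$ at $y$, and use $y \notin S$ to get $(\{x,y\}\cup\{y,z\})\cap S = \{x,z\}\cap S$. Your closing shortcutting remark is not needed under the paper's definition, which allows non-simple paths, but it is valid and harmless.
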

\begin{proof}
    If either term on the right is $\infty$ the bound is trivial.
    Otherwise, let path $P_1$ realize $dist_S(x,y)$ and path $P_2$ realize $dist_S(y,z)$.
    The concatenation $P_1 \cdot P_2$ is an admissible $x$-$z$-path:
    its nodes meet $S$ only within $(\{x,y\} \cup \{y,z\}) \cap S = \{x,z\} \cap S$ because $y \notin S$.
    Hence, the minimum $dist_S(x,z)$ over all admissible paths satisfies
    $dist_S(x,z) \le \|P_1\| + \|P_2\| = dist_S(x,y) + dist_S(y,z)$.
\end{proof}

\section{Algorithm $\bsdfs$}\label{sec:algorithm}

\subsection{$\bsdfs$ for Simple $s$-$t$-Path Search}

Throughout, we assume $0 < k \le n$ and $s \ne t$ for finding all length $k$ bound
simple $s$-$t$-paths in the directed graph $G$. The upper bound $k \le n$ is
without loss of generality: a simple $s$-$t$-path has length at most $n - 1$;
the value $k = n$ is retained to cover simple cycle search, where a cycle through $s$ may have length up to $n$.

\begin{algorithm}[H]
    \caption{$\bsdfs$ --- Bounded-Scope Depth-First-Search}\label{alg:bsdfs}
    \textbf{\textsc{Global Data}}
    \begin{tabbing}
    \hspace{3em}\= \kill 
    \quad $G$:        \> directed graph \\
    \quad $s, t$:     \> start and target nodes \\
    \quad $k$:        \> length bound \\
    \quad $b[\cdot]$: \> barrier value for each node \\
    \quad $S$:        \> search-path / call stack
    \end{tabbing}
    \begin{algorithmic}[1]
        \STATE \textbf{$\bsdfs(G, s, t, k)$}
        \STATE \quad \textbf{for each} $v \in G.\mathrm{nodes}$ \textbf{do}
        \STATE \quad\quad $b[v] \gets 0$
        \STATE \quad $S \gets \langle\rangle$
        \STATE \quad $\search(s)$
    \end{algorithmic}
\end{algorithm}

\begin{algorithm}[H]
    \caption{$\search$}\label{alg:search}
    \begin{algorithmic}[1]
        \STATE \textbf{$\search(v)$}
        \STATE \quad append $v$ to $S$ \quad \# $S$ is now the search path
        \STATE \quad $h \gets \|S\|$
        \STATE \quad $sd \gets k + 1$
        \STATE \quad \textbf{for each} $w \in \suc(v)$ \textbf{do}
        \STATE \quad\quad \textbf{if} $b[w] + h < k$ \textbf{then}
        \STATE \quad\quad\quad \textbf{if} $w = t$ \textbf{then}
        \STATE \quad\quad\quad\quad output $S \cdot t$
        \STATE \quad\quad\quad\quad $sd \gets 1$
        \STATE \quad\quad\quad \textbf{else if} $w \notin S$ \textbf{then}
        \STATE \quad\quad\quad\quad $d \gets \search(w)$
        \STATE \quad\quad\quad\quad $sd \gets \min(sd, d + 1)$
        \STATE \quad \textbf{if} $sd \le k$ \textbf{then}
        \STATE \quad\quad $\fruitful(v, sd)$
        \STATE \quad \textbf{else}
        \STATE \quad \quad $b[v] \gets k - h + 1$ \quad \# fruitless
        \STATE \quad pop $v$ from $S$
        \STATE \quad \textbf{return} $sd$
    \end{algorithmic}
\end{algorithm}

\begin{algorithm}[H]
    \caption{$\fruitful$}\label{alg:fruitful}
    \begin{algorithmic}[1]
        \STATE \textbf{$\fruitful(v, sd)$}
        \STATE \quad $b[v] \gets sd$
        \STATE \quad $Q \gets$ queue containing tuple $(v, sd)$
        \STATE \quad \textbf{while} $Q$ not empty \textbf{do}
        \STATE \quad\quad $(q, d) \gets$ dequeue $Q$
        \STATE \quad\quad \textbf{for each} $p \in \pre(q)$ \textbf{do}
        \STATE \quad\quad\quad \textbf{if} $p \notin S$ \textbf{and} $b[p] > d + 1$ \textbf{then}
        \STATE \quad\quad\quad\quad $b[p] \gets d + 1$
        \STATE \quad\quad\quad\quad enqueue $(p, d + 1)$ onto $Q$
    \end{algorithmic}
\end{algorithm}

We give a short overview of the algorithm. All details will be elaborated upon
and proven in later sections. After initialization,
$\bsdfs(G, s, t, k)$ (\autoref{alg:bsdfs}) sets
the search-path stack $S$ empty and calls $\search(s)$, starting the recursive
depth-first exploration of $G$. A call $\search(v)$ (\autoref{alg:search})
appends its node $v$ to $S$ (line 2), so that during the call $S = (s = v_0,
v_1, \dots, v_h = v)$ is the current \emph{search path}, of length $h = \|S\|$
(line 3), which we also call the \emph{depth} of the call;
it mirrors the $\search$ call stack: a \emph{frame} --- the
activation record of a call $\search(v)$ --- is open exactly while
$v \in S$, the frame of $v_i$ having been pushed with search path
$(v_0, \dots, v_i)$, and the target test creates no frame for $t$.
We name a call by its current
node, $\search(v)$; in the prose we write $P$ for the contents of $S$ at entry
(the \emph{prefix}, before $v$ is appended), $Pv = P \cdot v$ for the search
path while the call is active, and $Pvw$ for the search path of the child call
on a successor $w$. These are snapshots of the single stack $S$ at successive
moments. The successors $w$ of $v$ are examined (line 5), each guarded by a
\emph{pruning condition} $b[w] + h < k$ (line 6) to avoid excessive fruitless
searches. When a successor equals the target ($w = t$), the path $S \cdot t$ is
output and $sd$ (``shortest distance'') is set to $1$ (lines 8--9); otherwise a
recursive search $\search(w)$ is performed at the successor $w$ (line 11).
During the for-loop, $sd$ tracks the shortest path length to $t$ found
via an edge to $t$ or in any child search (line 12). After the for-loop, $sd \le k$
marks the \emph{fruitful} case (line 13), in which output was produced, and $sd
= k + 1$ the \emph{fruitless} case (line 15).
In the fruitful case, $\fruitful(v, sd)$ (\autoref{alg:fruitful}) assigns $b[v] \gets sd$ and
propagates distance values $d$ backwards to predecessors in a breadth-first \emph{cascade};
in the fruitless case, $b[v] \gets k - h + 1$ (line 16).
Finally, $\search(v)$ pops $v$ from $S$ (line 17),
restoring the prefix, and returns $sd$ (line 18).

\paragraph{Note for implementations.}

The algorithm maintains the current search path in a single stack $S$,
appending a node upon entry to $\search$ and removing it immediately
before returning. By augmenting the stack with an auxiliary boolean
\emph{on-stack} flag for each node, both stack operations and the
membership tests --- $w \notin S$ in $\search$ and $p \notin S$ in
$\fruitful$ --- can be performed in $O(1)$ time.
The resulting space requirement remains $O(n+m)$.

\subsection{$\bsdfs$ for Simple $s$-Cycle Search}

The procedure $\search$ is formulated in an edge-oriented manner.
This leads to the following reduction.

\begin{lemma}[Cycle Reduction]\label{lem:cycle-reduction}
    Let $G'=(V',E')$ with $V' = (V\setminus\{s\}) \cup \{s_o, s_i\}$, where every
    edge $(s, v)$ of $G$ becomes $(s_o, v)$, every edge $(u, s)$ becomes $(u, s_i)$,
    and all edges not incident to $s$ are kept unchanged.

    Then $\bsdfs(G, s, s, k)$ and $\bsdfs(G', s_o, s_i, k)$ produce the same output.
    In particular, $\bsdfs(G, s, s, k)$ enumerates exactly all simple $s$-cycles of length $\le k$ in $G$.
\end{lemma}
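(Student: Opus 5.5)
Identifying $s_o$ and $s_i$ with $s$ gives a natural map $\phi\colon V'\to V$. The plan is to show that the two runs, $\bsdfs(G,s,s,k)$ and $\bsdfs(G',s_o,s_i,k)$, proceed in lock-step under $\phi$. I prove by induction over the sequence of elementary steps that the two executions perform the same operations, with $\phi$ applied. Three things stay synchronized throughout:
\begin{itemize}
\item[(i)] the search stacks satisfy $S = \phi(S')$;
\item[(ii)] $b[v] = b'[v]$ for every $v \in V\setminus\{s\}$;
\item[(iii)] $b[s] = b'[s_o]$, while the value $b'[s_i]$ is never read.
\end{itemize}
Successor and predecessor lists are assumed to be ordered consistently under $\phi$. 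Concretely, $\suc_{G'}(s_o)=\suc_G(s)$, $\pre_{G'}(s_i)=\pre_G(s)$, and for $v\neq s$, $\suc_{G'}(v)$ equals $\suc_G(v)$ with $s$ replaced by $s_i$.

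The key observation is that the only place where $G$ and $G'$ could behave differently is the treatment of the node $s$ as a successor. In $G$ with $t=s$, every edge $(v,s)$ is caught by the test $w=t$ in line 7. There it produces output $S\cdot s$ and sets $sd\gets 1$, and it never reaches the test $w\notin S$. So $s$ is never re-entered by a recursive call, even though it lies on the stack. In $G'$, the same edge is $(v,s_i)$ with $s_i=t$, so the same branch fires. The pruning tests $b[t]+h<k$ compare $b[s]$ with $b'[s_i]$, and these two values must agree. Both are initialized to $0$, so it suffices to check that they stay $0$ until the end, or at least whenever they are read.

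In $G$, $b[s]$ is written only at the very end of the root call $\search(s)$, in line 14 or 16. In $\fruitful$, $s\in S$ blocks updates to $b[s]$ until the root pops. In $G'$, $b'[s_i]$ is never written: $\search(s_i)$ is never called, and $\fruitful$ writes only to predecessors $p$ of some dequeued $q$. Since $s_i$ has no outgoing edges, it is never a predecessor. Hence $b[s]=b'[s_i]=0$ at every read of a target barrier before the root finishes. The depths $h$ coincide because the stacks correspond under (i). Outputs then coincide as sequences via $\phi$: $S\cdot t$ in $G$ is $\phi(S'\cdot s_i)$.

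The remaining verification is routine. Line 10 behaves identically for nodes other than $s$, by (i). The $\fruitful$ cascade in $G'$ traverses $\pre_{G'}(q)$, which for $q\ne s_i$ equals $\pre_G(q)$ with $s$ renamed $s_o$. In both runs the entry for $s$ (respectively $s_o$) is rejected, because $s$ and $s_o$ sit at the bottom of the stack throughout, so (ii) and (iii) are preserved. Returned $sd$ values coincide by induction on the recursion.

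For the final sentence of the lemma, I invoke the correctness of $\bsdfs$ for $s$-$t$-paths on $G'$, with $s_o\neq s_i$, as established later in the paper. This shows that the output is exactly the set of simple $s_o$-$s_i$-paths of length $\le k$. These correspond bijectively under $\phi$ to simple $s$-cycles of length $\le k$ written starting at $s$; each cycle has exactly one rotation starting at $s$. The constraint $l\ge 2$ holds automatically because $G$ has no self-loops. The main obstacle is the bookkeeping showing that $b[s]$ and $b'[s_i]$ agree whenever read. The argument above relies on $s$ being permanently on the stack in $G$, and on $s_i$ being a sink with no incoming propagation target in $G'$.
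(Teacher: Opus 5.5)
Your argument is the paper's argument made explicit. Both use a step-by-step correspondence $s\leftrightarrow s_o$, $t\leftrightarrow s_i$, rely on Observations~\ref{obs:source-barrier} and~\ref{obs:target-barrier} to make the shared barrier $b[s]$ harmless, and inherit the enumeration claim from $s$-$t$ correctness on $G'$. Two statements are false as written, though neither affects the conclusion.

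First, the clause in (iii) that $b'[s_i]$ is never read is wrong. It is read in every pruning test of an edge $(v,s_i)$, as your own next paragraph uses. The copy that never serves as a pruning value is $b'[s_o]$, since $s_o$ has no incoming edges. What you need, and what that paragraph actually proves, is $b[s]=b'[s_o]=b'[s_i]=0$ until the root's final assignment.

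Second, the lock-step does not hold throughout. Your claim that $\pre_{G'}(q)$ equals $\pre_G(q)$ with $s$ renamed fails for $q=s_o$: $\pre_{G'}(s_o)=\emptyset$, while $\pre_G(s)$ is in general nonempty. So when the root call is fruitful, $\fruitful(s,sd)$ in $G$ dequeues $s$ and scans its predecessors, which are all off the stack $(s)$, and may lower their barriers and cascade further. In $G'$, $\fruitful(s_o,sd)$ does nothing after writing $b'[s_o]$, so invariant (ii) breaks there. This happens only after the root's for-loop, hence after the last output. That is why the paper claims coincidence only ``up to the last output''; restrict your induction to that point and the proof is complete.
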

\begin{proof}
The initial call $\search(s)$ explores only the \emph{outgoing} edges $(s,w) \in E$ of the source,
while the target test $w=t$ recognizes only the \emph{incoming} edges $(v,t)\in E$ of the target.
Thus, under the correspondence $s \leftrightarrow s_o$ and $t \leftrightarrow s_i$,
the two executions $\bsdfs(G, s, s, k)$ and $\bsdfs(G', s_o, s_i, k)$ coincide step by step
up to the last output.
The single barrier $b[s]$ of $G$ serves as both $b[s_o]$ and $b[s_i]$ of $G'$, but,
because of Observation~\ref{obs:source-barrier} and Observation~\ref{obs:target-barrier}, this is benign.
Thus, $\bsdfs(G, s, s, k)$ yields exactly the simple $s$-cycles of length $\le k$ in $G$.
\end{proof}

\subsection{Pre-processing for Finding All Simple Cycles}\label{sec:all-cycles}

Enumerating \emph{all} simple cycles of length at most $k$ in $G=(V,E)$ ---
not only those through a fixed node --- reduces to the single-node case by a
pre-processing step. Order the nodes as $V = \{v_1, v_2, \dots, v_n\}$ and let
$G^i$ denote the subgraph of $G$ induced by $V^i = \{v_i, v_{i+1}, \dots, v_n\}$.
Every simple cycle $C$ of length $\le k$ occurs in exactly one of
these graphs as a cycle through the distinguished node,
namely in $G^i$ for the smallest index $i$ among the nodes of $C$.

Preprocessing creates a list $L$ of all indices $i$
where $G^i$ contains at least one simple $v_i$-cycle of length $k$ or less.
For each graph $G^i$, this can be decided by a distance-limited breadth-first search
from $v_i$, testing if any predecessor of $v_i$ can be reached within distance $k-1$.
This takes $O(n+m)$ time per graph and the total pre-processing time is $O(n(n+m))$.
After pre-processing, for all $v_i \in L$, $\bsdfs(G^i, v_i, v_i, k)$ is executed,
producing all simple cycles in $G$ of length $\le k$.

Further pre-processing is common for reducing the practical
runtime~\cite{DBLP:journals/siamcomp/Johnson75,
DBLP:journals/corr/abs-2105-10094, peng_efficient_2021}: reducing $G^i$ to
the strongly connected component containing $v_i$, or removing nodes $x$ with
$dist^{G^i}(v_i, x) + dist^{G^i}(x, v_i) > k$,
which cannot lie on any qualifying cycle.

\medskip
In the remainder of the paper we focus on the path-search setting and,
unless stated otherwise, assume that $s \neq t$.
The assumption is not restrictive: the case $s = t$ is covered verbatim
by \fullref{lem:cycle-reduction}, and it lets us state path-avoidance
conditions without endpoint exemptions.

\subsection{Basic Observations}

The following elementary observations capture structural properties and
invariants of $\bsdfs$ that will be used repeatedly in the
correctness proofs. Their proofs are straightforward and are therefore
omitted.

\begin{enumerate}[label=(\arabic*),ref=\arabic*]

	\item\label{obs:simple-search-path}
	Every search path is simple.

	\item\label{obs:simple-output}
	Every output path is a simple $s$-$t$-path.

	\item\label{obs:search-path-uniqueness}
	Every search path occurs at most once.

	\item\label{obs:no-duplicate-output}
	No output path is produced more than once.

	\item\label{obs:return-value}
	Every return value $sd$ is non-negative.

	\item\label{obs:path-barrier-fixed}
	During a call $\search(v)$, the barrier value $b[v]$ can be modified only by that call's final assignment.

	\item\label{obs:source-barrier}
	For the source, $b[s]=0$ remains during execution until the final assignment of the initial call.

	\item\label{obs:target-barrier}
	For the target, $b[t]=0$ remains throughout the execution,
	except in the case $s=t$ where \autoref{obs:source-barrier} applies.

	\item\label{obs:cascade-lowers}
	In $\fruitful$, every cascade assignment strictly decreases the barrier value of the updated node.

\end{enumerate}

\section{Soundness}\label{sec:soundness}

In this section we show that every output produced by $\bsdfs(G, s, t, k)$ is a
simple $s$-$t$-path of length $\le k$ in $G$.

\begin{lemma}[Parent Pruning Guard]\label{lem:parent-pruning-guard}
	During $\bsdfs(G, s, t, k)$, every call $\search(v)$ with search path $Pv$ satisfies at entry
	$$b[v] \le k - h, \qquad h = \|Pv\|.$$
\end{lemma}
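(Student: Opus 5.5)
Two cases according to how the call $\search(v)$ is created; in each case the claimed inequality is read off from the pruning test that guards the call.

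\emph{Initial call.} The call $\search(s)$ is issued directly by $\bsdfs$. At entry the search path is $Pv = (s)$, so $h = \|Pv\| = 0$. All barriers were just set to $0$ in \autoref{alg:bsdfs}, so $b[s] = 0$. Since $0 < k$, we get $b[s] = 0 \le k = k - h$.

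\emph{Recursive call.} Every other call $\search(w)$ is issued in line~11 of \autoref{alg:search} by a parent call $\search(v)$. During that parent call the search path is $Pv$, of length $h' = \|Pv\|$, and the child's search path is $Pvw$, of length $h = h' + 1$. The child is entered only if the guard in line~6 holds, that is $b[w] + h' < k$. Since barrier values and $h'$ are integers, this gives $b[w] \le k - h' - 1 = k - h$. It remains to check that this inequality, tested in line~6, still holds when the child call starts.

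\emph{Main obstacle.} The only delicate point is timing. Between the test in line~6 and the start of $\search(w)$ on line~11, the parent performs only the checks $w = t$ and $w \notin S$. No code path runs in this window that writes to any $b[\cdot]$: neither a $\fruitful$ cascade nor a barrier assignment from a returning child can occur there. So the value of $b[w]$ at the child's entry is exactly the value that passed the guard, and $b[w] \le k - h$ holds at entry.

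This completes the proof for all calls. No earlier lemma beyond integrality of the barrier values is needed; \autoref{obs:path-barrier-fixed} is not required here, because the claim concerns only the moment of entry.
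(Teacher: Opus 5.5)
Your proof is correct and takes essentially the paper's approach: the paper also reads the bound off the parent's pruning test $b[v] + (h-1) < k$ plus integrality, and handles the root call via $b[s]=0$. The paper cites Observation~\ref{obs:source-barrier} for the root and phrases this as an induction on $h$ whose hypothesis is never used. Your check that no barrier write occurs between the test in line~6 and the child's entry makes explicit a timing step the paper leaves implicit.
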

\begin{proof}
	By induction over the search-path length $h$.
	\emph{Base case} ($h = 0$): $v = s$. By Observation~\ref{obs:source-barrier} $b[s] = 0 \le k$ at entry.

	\emph{Inductive step}: $\search(v)$ with search path $Pv$ is invoked from
	some parent call $\search(u)$ with search path $P = P' \cdot u$ of length $\|P\| = h - 1$.
	The recursion is reached only when the parent's pruning condition holds for $v$: $b[v] + (h-1) < k$, i.e.\ $b[v] < k - h + 1$, hence $b[v] \le k - h$.
\end{proof}

\begin{lemma}[Barrier non-negative]\label{lem:barrier-ge-0}
	During $\bsdfs(G, s, t, k)$ holds
	\begin{enumerate}
		\item $b[x] \ge 0$ for all $x \in V$,
		\item after initialization, only strictly positive values are assigned.
	\end{enumerate}
\end{lemma}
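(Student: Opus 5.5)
Since initialization sets every $b[x]$ to $0$, part 1 follows once part 2 is established. I would prove part 2 by examining every assignment to the barrier array after initialization. There are exactly three such sites: the fruitless assignment $b[v] \gets k - h + 1$ in line 16 of $\search$, the initial assignment $b[v] \gets sd$ in $\fruitful$, and the cascade assignment $b[p] \gets d+1$ in $\fruitful$. For each site I will show that the assigned value is at least $1$.

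\emph{Fruitless assignment.} The search path $Pv$ is simple by Observation~\ref{obs:simple-search-path}. More usefully, the Parent Pruning Guard (\autoref{lem:parent-pruning-guard}) gives $b[v] \le k-h$ at entry. Combined with $b[v] \ge 0$, this yields $h \le k$, so $k - h + 1 \ge 1$. The catch is that using $b[v]\ge 0$ here is the very statement being proved. I therefore plan a simultaneous induction over the sequence of assignments: assume all barrier values are non-negative up to the current step, and show the next assigned value is positive. The base case $h=0$ needs no hypothesis at all, since then $k-h+1 = k+1 \ge 2$.

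\emph{Fruitful assignments.} The value $sd$ is either $1$, set when an edge to $t$ passes the guard, or $\min(sd, d+1)$ with $d$ a child's return value. By Observation~\ref{obs:return-value}, $d \ge 0$, so $d+1 \ge 1$; the initial value $k+1$ is also positive. Hence $sd \ge 1$ always, and $b[v]\gets sd$ assigns a positive value. In the cascade, every enqueued pair $(q,d)$ has $d \ge 1$: the seed has $d = sd \ge 1$, and each newly enqueued pair has $d+1 \ge 2$. So every cascade assignment $b[p]\gets d+1$ is at least $2$.

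The main obstacle is the apparent circularity in the fruitless case, where the Parent Pruning Guard only bounds $b[v]$ from above. I resolve it with the induction on the chronological order of assignments described above. If Observation~\ref{obs:return-value} is itself regarded as depending on this lemma, it can be folded into the same induction: a return value is either $k+1$ or an $sd$ built from $1$ and child returns plus one, hence positive by induction on recursion depth, independent of barrier values.
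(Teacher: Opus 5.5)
Your proof is correct and is essentially the paper's own argument. The paper likewise inducts over the chronological sequence of barrier assignments. It handles the fruitless case by combining the Parent Pruning Guard with the inductive hypothesis $b_{entry}[v] \ge 0$ to get $h \le k$, and it handles the fruitful and cascade cases via $sd \ge 1$ and $d+1 \ge 2$. Your extra remark, that the non-negativity of return values (Observation~\ref{obs:return-value}) can be shown independently of the barriers by induction over the recursion, is a sound detail that the paper leaves implicit.
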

\begin{proof}
	We argue by induction over the sequence of assignments to entries of $b$;
	recall that no assignment changes $b[v]$ while $v$ is on the search path (Observation~\ref{obs:path-barrier-fixed}).
	\begin{enumerate}
		\item Initialization: $b[x] \gets 0$ for all $x \in V$.
		\item Fruitful assignment $b[v] \gets sd$: the value $sd$ is either $1$, set when an
		edge to $t$ is found, or $d + 1$ for some child return value $d \ge 0$
		(Observation~\ref{obs:return-value}); in either case $sd \ge 1$.
		\item Fruitless assignment $b[v] \gets k - h + 1$: by the inductive hypothesis and
		\fullref{lem:parent-pruning-guard}, at entry $0 \le b_{entry}[v] \le k - h$, so
		the assigned value $k - h + 1 > k - h \ge 0$.
		\item Cascade assignment $b[p] \gets d + 1$: the cascade starts from $(v, sd)$ with
		$sd \ge 1$ (case 2), and each enqueued pair carries a strictly larger value
		than its dequeued $(q, d)$; so every assigned $d + 1 \ge 2 > 0$.
	\end{enumerate}
	In any case but initialization, a strictly positive value is assigned.
\end{proof}

\begin{lemma}[Search Path Length]\label{lem:search-path-length}
	In $\bsdfs(G, s, t, k)$, every call $\search(v)$ has search-path length $h = \|Pv\|$ with $0 \le h \le k$.
\end{lemma}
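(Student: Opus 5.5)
The lower bound $h \ge 0$ is immediate, since $h = \|Pv\|$ is a path length. For the upper bound I will combine \fullref{lem:parent-pruning-guard} with \fullref{lem:barrier-ge-0}. At entry to any call $\search(v)$ with search path $Pv$, the Parent Pruning Guard gives $b[v] \le k - h$. The first part of \fullref{lem:barrier-ge-0} gives $b[v] \ge 0$ at every moment of the execution, in particular at that entry. Chaining the two yields $0 \le b[v] \le k - h$, hence $h \le k$.

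The only point needing care is that both lemmas refer to the same moment, namely entry to the call. The Parent Pruning Guard is stated at entry. Barrier non-negativity is a global invariant over all assignments, so it also holds at entry. For $h \ge 1$ the guard's bound itself comes from the parent's pruning test $b[v] + (h-1) < k$ evaluated just before the recursive call. Nothing changes $b[v]$ between that test and the entry, since the parent performs no assignment in between. So there is no gap there.

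Alternatively, as a self-contained sanity check, I could argue by induction on $h$ directly. The root has $h = 0 \le k$, using the standing assumption $0 < k$. A child at depth $h$ is created only if $b[w] + (h-1) < k$ with $b[w] \ge 0$, which forces $h - 1 < k$, i.e.\ $h \le k$.

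I expect no real obstacle. The mildly delicate point is the circularity between this lemma and \fullref{lem:barrier-ge-0}, whose proof uses the Parent Pruning Guard. Since \fullref{lem:barrier-ge-0} is already established earlier by induction over assignments, I simply invoke it.
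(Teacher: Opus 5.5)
Your proposal is correct and is essentially the paper's own argument: the Parent Pruning Guard gives $b[v] \le k - h$ at entry, and barrier non-negativity then yields $h \le h + b[v] \le k$. Your remarks on timing and on the absence of circularity are accurate, since the guard is proved without using non-negativity. The direct induction you sketch is also valid, though not needed.
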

\begin{proof}
	By \fullref{lem:parent-pruning-guard}, $b[v] \le k - h$, so $h + b[v] \le k$; and by \fullref{lem:barrier-ge-0}, $h \le h + b[v] \le k$.
\end{proof}

\begin{lemma}[Termination]\label{lem:termination}
	The algorithm $\bsdfs(G,s,t,k)$ always terminates.
\end{lemma}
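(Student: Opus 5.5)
Termination follows from finiteness of the recursion tree together with termination of each non-recursive piece of work. I will show three things in turn: every individual loop terminates, the recursion depth is bounded, and the number of distinct calls is finite.

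\emph{Step 1: the recursion tree is finite.} Each call $\search(v)$ corresponds to a search path $Pv$. By Observation~\ref{obs:simple-search-path} this path is simple, and by \fullref{lem:search-path-length} its length satisfies $h \le k$. By Observation~\ref{obs:search-path-uniqueness}, distinct calls have distinct search paths. A finite graph has only finitely many simple paths starting at $s$, at most $\sum_{h=0}^{k} n^h$ of them. Hence there are only finitely many calls to $\search$, and the recursion depth is at most $k+1$. I would also note that the depth bound alone does not suffice, since finiteness of the branching must be argued; but the branching is bounded by $|\suc(v)| \le n-1$, so either route gives finiteness. Beyond its recursive calls, each call does only finite work: one loop over $\suc(v)$, one output, and one invocation of $\fruitful$ or a single assignment.

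\emph{Step 2: every invocation of $\fruitful$ terminates.} This is the one step needing a genuine argument, because the queue could in principle be refilled repeatedly. I would use a potential argument.
\begin{itemize}
\item A pair $(p, d+1)$ is enqueued only when $b[p]$ strictly decreases (Observation~\ref{obs:cascade-lowers}).
\item By \fullref{lem:barrier-ge-0}, $b[p] \ge 0$ throughout.
\item All barrier values are integers. At entry to a cascade, each $b[p]$ is bounded by some finite value: either $0$, $k+1$ or less (fruitless assignments give at most $k+1$, and $sd \le k$), or an earlier cascade value.
\end{itemize}
So each node can be lowered only finitely many times, at most $\max b + 1$ times. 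The total number of enqueues is therefore finite, and each dequeue processes a finite predecessor list. A cleaner alternative is to note that the enqueued values $d$ are nondecreasing in BFS order. Whenever an assignment actually fires, the new value satisfies $d+1 < b[p]$, and $b[p]$ is at most $k+1$ for any node that has been assigned. Nodes still at $0$ can never be lowered at all. Hence every enqueued $d$ is at most $k$, which bounds the cascade length.

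Combining the steps: finitely many calls, each performing finitely many loop iterations and one terminating cascade, so $\bsdfs(G,s,t,k)$ terminates. The main obstacle is Step 2, specifically making sure the integer bound on barriers holds at every moment. I would establish it by a short induction over assignments: every assignment writes an integer in $[0, k+1]$, since $sd \le k$ in the fruitful case, $k-h+1 \le k+1$ in the fruitless case, and cascade values are strictly below the previous barrier.
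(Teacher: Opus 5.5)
Your proof is correct and follows the paper's route: finitely many $\search$ calls because search paths are simple and never repeat (Observations~\ref{obs:simple-search-path} and~\ref{obs:search-path-uniqueness}), finite for-loops, and cascade termination because every enqueue accompanies a strict decrease of a nonnegative integer barrier. The paper phrases that last step as the potential $\sum_{x \in V} b[x]$ strictly decreasing, so it needs neither your upper bound $k+1$ nor the depth bound of Lemma~\ref{lem:search-path-length}; still, proving the upper bound by your own induction, rather than citing Corollary~\ref{cor:barrier-upper}, is right, since that corollary itself depends on termination.
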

\begin{proof}
	We consider all places where loops or recursion occur.

	\emph{For-loops.}
	In each call $\search(v)$ the for-loop iterates once over the finite set $\suc(v)$, and no successor is reconsidered;
	likewise each step of the $\fruitful$ cascade iterates once over $\pre(q)$ for the dequeued node $q$.
    Thus, every for-loop is finite.

	\emph{Search recursion.}
	By Observation~\ref{obs:simple-search-path} the search path is always simple, and by Observation~\ref{obs:search-path-uniqueness}
    each simple path occurs at most once.
	Since a finite graph has finitely many simple paths, only finitely many recursive calls $\search(\cdot)$ occur.

	\emph{Cascade.}
	Each assignment $b[p] \gets d + 1$ in $\fruitful$ strictly decreases the potential $\sum_{x \in V} b[x]$ (the guard ensured $b[p] > d + 1$),
	which by \fullref{lem:barrier-ge-0} is bounded below by $0$; so only finitely many assignments occur.
	As a pair is enqueued only together with such an assignment, the queue is emptied after finitely many steps.

	Since the cascades, the search recursion, and all for-loops are finite, the
	entire execution of $\bsdfs(G,s,t,k)$ is finite.
\end{proof}

Summarizing all above claims, we conclude that the algorithm produces only
valid output (soundness):

\begin{theorem}[$\bsdfs$ Soundness]\label{thm:bsdfs-soundness}
	Every path produced by $\bsdfs(G, s, t, k)$ is a simple $s$-$t$-path in $G$ of length $\le k$.
\end{theorem}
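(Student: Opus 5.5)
We will combine the observations with the lemmas of this section and check three properties of any emitted path in turn: it is an $s$-$t$-path in $G$, it is simple, and its length is at most $k$. Output happens only at line 8 of $\search$. This line is executed inside some call $\search(v)$ whose current search path is $S = Pv = (v_0=s, v_1, \dots, v_h = v)$, and only when the successor $w \in \suc(v)$ under examination equals $t$. The emitted sequence is therefore $S \cdot t = (v_0, \dots, v_h, t)$.

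\emph{It is an $s$-$t$-path.} We show by induction on $h$ that every search path is a path in $G$ starting at $s$. The initial call pushes $s$ alone. A child call $\search(w)$ is issued from $\search(u)$ only for $w \in \suc(u)$, so it extends the parent's path $P'u$ by the edge $(u,w) \in E$. The final edge $(v_h, t)$ exists because $t = w \in \suc(v)$. Hence $S\cdot t$ is an $s$-$t$-path in $G$.

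\emph{It is simple.} Observation~\ref{obs:simple-search-path} says $S$ is simple. This also follows from the guard $w \notin S$ at line 10, so it is not really an extra assumption. It remains to show $t \notin S$. By the assumption $s \neq t$ we have $t \neq v_0$. Every other node $v_i$ with $i \ge 1$ was entered through the \textbf{else} branch at line 10, which is reached only when $w \neq t$. Hence $t$ never lies on a search path, and $S \cdot t$ is simple, matching Observation~\ref{obs:simple-output}. The cycle case $s=t$ is not treated here; it is handled by \fullref{lem:cycle-reduction}.

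\emph{Its length is at most $k$.} This is the only step that needs the quantitative lemmas. The output path has length $h+1$. The output is reached only if the pruning condition $b[t] + h < k$ holds at line 6. By \fullref{lem:barrier-ge-0} we have $b[t] \ge 0$, so $h < k$, that is, $h + 1 \le k$. Observation~\ref{obs:target-barrier} even gives $b[t]=0$, but the lower bound suffices.

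The only point needing care is that the barrier of $t$ might be altered during the run, for example by the $\fruitful$ cascade. This concern is covered by \fullref{lem:barrier-ge-0}, which holds over the whole execution. Together with \fullref{lem:termination}, which guarantees that the set of outputs is finite and well defined, this completes the argument.
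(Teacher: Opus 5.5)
Your proof is correct and follows the paper's argument: the emitted sequence is $Pv\cdot t$, it is a simple $s$-$t$-path (the paper cites Observation~\ref{obs:simple-output}, which you re-derive from the $w \notin S$ guard and from $t$ never receiving a frame when $s \ne t$), and its length $h+1 \le k$ follows from the pruning test $b[t]+h<k$. The only difference is cosmetic: you use just $b[t] \ge 0$ from Lemma~\ref{lem:barrier-ge-0}, whereas the paper uses $b[t]=0$ from Observation~\ref{obs:target-barrier}; either suffices.
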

\begin{proof}
	Output is produced in a call $\search(v)$ when a successor $w = t$ passes the pruning condition;
	the output is $Pv \cdot t$, which by Observation~\ref{obs:simple-output} is a simple $s$-$t$-path in $G$.
	Its length is $\|Pv\| + 1 = h + 1$; the pruning condition $b[t] + h < k$ held,
    and $b[t] = 0$ by Observation~\ref{obs:target-barrier}, so $h < k$ and the length is at most $k$.
	By \fullref{lem:termination}, $\bsdfs(G, s, t, k)$ is finite.
\end{proof}

We close this section by stating two corollaries.

\begin{corollary}[Fruitful Return]\label{cor:fruitful-return}
	If a call $\search(v)$ returns a \emph{fruitful value} $sd \le k$, then in fact $sd \le k - h$, where $h = \|Pv\|$.
\end{corollary}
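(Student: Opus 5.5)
The plan is to prove, by induction on the recursion (equivalently, on the order in which calls return), the slightly stronger statement that whenever $\search(v)$ returns a value $sd \le k$, this value is realized by an actual $v$-$t$-path whose concatenation with $Pv$ is an output path. Concretely, I would show that $sd$ equals $\|R\|$ for some path $R$ from $v$ to $t$ such that $Pv \cdot R$ (with $v$ not repeated) was output during the call. The corollary then follows directly from \fullref{thm:bsdfs-soundness}. That output has length $h + \|R\| = h + sd$ and is at most $k$, so $sd \le k - h$.

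For the induction, fix a call $\search(v)$ at depth $h$ returning $sd \le k$. Since $sd$ starts at $k+1$ and is only ever decreased, its final value is the last minimizer among the values assigned in lines 9 and 12. There are two cases.

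\emph{Case $sd = 1$ from line 9.} Here the successor $t$ passed the pruning test and $S \cdot t = Pv\,t$ was output. Its length is $h+1 \le k$ by soundness, so $1 \le k-h$.

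\emph{Case $sd = d+1$ from line 12.} Here $d$ is the return value of a child call $\search(w)$ at depth $h+1$. Because $d + 1 \le k$, we have $d \le k-1 \le k$, so $d$ is itself a fruitful value and the inductive hypothesis applies to the child. That gives $d \le k-(h+1)$, and hence $sd = d+1 \le k-h$.

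This second case already closes the induction purely arithmetically, without any path bookkeeping. So the simplest write-up is a direct induction on the recursion depth, from deepest calls upward, of the inequality $sd \le k-h$ for fruitful returns. The base of the induction is a call whose fruitful value comes only from line 9. Note that by \fullref{lem:search-path-length} depths are bounded by $k$, so the induction is well-founded; one can equally induct on the time of return.

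The only delicate point is the first case: to get $h < k$ I need $b[t] = 0$ (Observation~\ref{obs:target-barrier}), so that the pruning condition $b[t] + h < k$ gives $h \le k-1$. In the cycle case $s=t$ this relies on \fullref{lem:cycle-reduction}, or equivalently on Observation~\ref{obs:source-barrier}, which keeps $b[s]=0$ while the initial call is active. I expect that to be the only step needing care; the rest is bookkeeping.
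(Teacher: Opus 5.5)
Your proposal is correct and follows essentially the same route as the paper: induction over the recursion, with the closing-edge case using $b[t]=0$ and the pruning test to get $h<k$, and the recursive case applying the inductive hypothesis to the child at depth $h+1$ to get $sd = d+1 \le k-h$. The stronger path-realization invariant you first sketch is not needed, as you note yourself.
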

\begin{proof}
	By induction on the search recursion, well-founded by \fullref{lem:termination}.

	\emph{Base case:}
	a successor $w = t$ passes the pruning condition $b[t] + h < k$ in $\search(v)$, so $sd \gets 1$ is executed and the returned value satisfies $sd \le 1$.
	Since $b[t] = 0$ by Observation~\ref{obs:target-barrier}, the pruning condition gives $h < k$, hence $sd \le 1 \le k - h$.

	\emph{Inductive step:}
	no successor $w = t$ passes the pruning condition; then the fruitful value $sd$ is set by a recursive call.
	At least one child returned a fruitful value $d$; let $w$ be a child achieving the minimum such $d$.
	Its search path has length $h + 1$, so by the inductive hypothesis $d \le k - (h+1)$.
	In $\search(v)$ we set $sd = d + 1$, hence $sd \le k - h$.
\end{proof}

\begin{corollary}[Barrier Upper Bound]\label{cor:barrier-upper}
    Throughout the execution of $\bsdfs(G,s,t,k)$,
    \[
        0 \le b[x] \le k \quad \text{for all } x \in V,
    \]
    except that a completely fruitless run sets $b[s] = k+1$ as its last action.
    Thus, $k+1$ is a sentinel value, strictly above every barrier value that is ever read.
\end{corollary}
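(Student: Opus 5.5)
The lower bound $b[x] \ge 0$ is already \fullref{lem:barrier-ge-0}. So the plan is to prove the upper bound by induction over the sequence of assignments to $b$, in the same style as that lemma. The initialization assigns $0 \le k$. The remaining assignments fall into three kinds (fruitful, fruitless, cascade), and I would bound each in turn.

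\emph{Fruitful and cascade assignments.} A fruitful assignment $b[v] \gets sd$ happens only when $sd \le k$. By \fullref{cor:fruitful-return} we even have $sd \le k - h$. For a cascade assignment $b[p] \gets d+1$, the guard $b[p] > d+1$ says the new value is strictly smaller than the old one. By the inductive hypothesis the old value is at most $k$, or at most $k+1$ if the old value were the sentinel. In either case $d+1 \le k$, since $d+1 < b[p] \le k+1$. So cascades never create a value above $k$.

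\emph{Fruitless assignments.} A fruitless assignment sets $b[v] \gets k-h+1$. This is $\le k$ exactly when $h \ge 1$, i.e.\ when $v$ is not the root call. At the root ($h=0$, $v=s$) the assigned value is $k+1$. That is the final assignment of the initial call, which by Observation~\ref{obs:source-barrier} is the last action of the run. It happens only in the fruitless branch, i.e.\ when the root returned $sd = k+1$. That means no output was produced, since any output forces $sd \le k$ by propagation of $sd \gets 1$ and $\min(sd, d+1)$ through the stack. Hence this case is exactly a completely fruitless run.

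\emph{Sentinel claim.} Barrier values are only read in the pruning test on line 6 and in the cascade guard. Every such read occurs strictly before the final root assignment, so every value read is at most $k < k+1$.

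The step needing care, and the main obstacle, is the one relating $sd \le k$ at the root to output and making sure no $k+1$ value is ever read. Concretely, I must confirm that a fruitful root call cannot later take the fruitless branch, and that after the root's assignment nothing else executes. Both follow directly from the structure of \autoref{alg:bsdfs} and Observation~\ref{obs:source-barrier}. There is one more subtlety: the root could be fruitful with $sd \le k$ and then enter $\fruitful$. That case is covered by the fruitful bound above, so no value exceeds $k$ there.
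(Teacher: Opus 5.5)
Your proposal is correct and follows the paper's proof essentially step for step. Both proofs take the lower bound from \fullref{lem:barrier-ge-0}, induct over the assignments, bound the fruitful write by $sd \le k$ (the paper cites \fullref{cor:fruitful-return}), bound cascade writes by the strict-decrease guard, and identify the root's fruitless write $k+1$ as the run's terminal action. Your extra check that a fruitless root means no output was produced makes explicit what the paper leaves implicit in ``completely fruitless run''.
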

\begin{proof}
    The lower bound is \fullref{lem:barrier-ge-0}.
    For the upper bound we induct over the sequence of assignments to $b$, showing each assigned value is $\le k$ apart from the stated exception.
    \begin{enumerate}
        \item Initialization: $b[x] \gets 0 \le k$.
        \item Fruitful assignment: $b[v] \gets sd \le k - h \le k$ by
              \fullref{cor:fruitful-return}.
        \item Fruitless assignment $b[v] \gets k - h + 1$: for $h \ge 1$ this is $\le k$; for
              $h = 0$, i.e.\ $v = s$, it is $k+1$, which occurs only when the initial call
              $\search(s)$ is fruitless and is then the algorithm's last
              action.
        \item Cascade assignment $b[p] \gets d + 1$:
                The upper bound follows from the inductive hypothesis and Observation~\ref{obs:cascade-lowers}
    \end{enumerate}
    Hence, every value is $\le k$ except the terminal $b[s] = k+1$;
    as that assignment is the algorithm's last action, every barrier read during the execution sees a value $\le k$.
\end{proof}

\section{Completeness}\label{sec:completeness}

In this section we prove the completeness of $\bsdfs(G,s,t,k)$, namely that it produces every simple $s$-$t$-path in $G$ of length at most $k$.
The proof is based on a graph-local property of the barrier values, called \emph{edge-consistent labeling}.
We first introduce this property and establish its basic consequences.
We then show that it is maintained throughout the execution of $\bsdfs$, and finally use it to prove completeness.

\subsection{Edge-Consistent Labeling}

The following discussion is naturally interpreted through the framework of heuristic search.
The barrier value $b[x]$ may be viewed as an estimate of the remaining distance from $x$ to the target $t$,
while the search-path length $h$ represents the cost already incurred.
In the terminology of $A^*$-search~\cite{DBLP:journals/tssc/HartNR68}, $h$ plays the role of the $g$-value and $b[x]$ the role of a heuristic estimate.
The pruning condition $b[w] + h < k$ is then $f$-value pruning against a budget:
a successor is explored only when the estimated total cost $g + h$ stays below
$k$. The edge-consistent labeling is precisely the \emph{consistency}
(monotonicity) condition $h(x) \le w(x,y) + h(y)$ of $A^*$, with unit edge
weights $w \equiv 1$ and required only for edges whose two endpoints lie outside the current search path.

The completeness argument then follows the classical pattern.
Consistency implies admissibility: the barrier never overestimates the true remaining distance to the target.
We shall show that $b[x]\le dist_S(x,t)$, where $S$ is the current search path.
Consequently, whenever a simple $s$-$t$-path of length at most $k$ exists through a successor, the pruning condition cannot exclude that successor.
Every feasible path therefore survives all pruning tests and is eventually generated.

The principal difference from classical heuristic search is that the heuristic is not fixed. In $A^*$,
consistency is a static property verified once for a given heuristic.
In contrast, the barrier values of $\bsdfs$ are continually modified during the search:
fruitless returns raise barriers, while update cascades may lower them.
Consistency is therefore not a precondition but an invariant that must be preserved under every state update.
Establishing this invariant constitutes the main technical component of the completeness proof.
In this respect $\bsdfs$ is closer to \emph{learning
    real-time} $A^*$~\cite{DBLP:journals/ai/Korf90}, where the heuristic is
likewise refined as the search proceeds, than to the static algorithm.

\begin{definition}[Edge-Consistent labeling]\label{def:edge-consistent-labeling}

    Let $G=(V,E)$ be a directed graph and let $S\subseteq V$.
    A function $b:V\rightarrow\mathbb N_0$ is an
    \emph{edge-consistent labeling with respect to $S$} if
    \[
    b[x]\le b[y]+1
    \]
    for every edge $(x,y)\in E$ satisfying
    $\{x,y\}\cap S=\emptyset$.
\end{definition}

\begin{lemma}[Barrier Distance Bound]\label{lem:bar-invariant}
    Let $G=(V,E)$ be a directed graph, $S \subseteq V$,
    and $b: V \rightarrow \mathbb{N}_0$ an edge-consistent labeling w.r.t.\ $S$. Then
    $$b[x] \le b[y] + l \quad \text{for every $x$-$y$-path $P$ of length $l$ in $G$ with $P \cap S = \emptyset$;}$$
    in particular, $b[x] - b[y] \le dist_S(x,y)$ for all $x, y \in V \setminus S$.
\end{lemma}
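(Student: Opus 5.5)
The first claim follows by induction on the path length $l$, and the distance bound is then a direct specialization.

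Let $P=(x=v_0,v_1,\dots,v_l=y)$ be an $x$-$y$-path with $P\cap S=\emptyset$. For $l=0$ we have $x=y$, and the inequality $b[x]\le b[y]$ is trivial. For $l\ge 1$, consider the first edge $(v_0,v_1)$. Both endpoints lie on $P$, so $\{v_0,v_1\}\cap S=\emptyset$. Hence Definition~\ref{def:edge-consistent-labeling} applies and gives $b[v_0]\le b[v_1]+1$.

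The suffix $(v_1,\dots,v_l)$ is a $v_1$-$y$-path of length $l-1$ that also avoids $S$. The inductive hypothesis gives $b[v_1]\le b[y]+l-1$. Chaining the two inequalities yields $b[x]\le b[y]+l$. Equivalently, one can telescope $\sum_{i}(b[v_i]-b[v_{i+1}])\le l$ directly.

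For the ``in particular'' part, fix $x,y\in V\setminus S$. If $dist_S(x,y)=\infty$, there is nothing to show. Otherwise, pick a path $P$ realizing $dist_S(x,y)$. By definition, $P\cap S\subseteq\{x,y\}$, and since $x,y\notin S$ this gives $P\cap S=\emptyset$. The first part therefore applies with $l=dist_S(x,y)$, so $b[x]-b[y]\le dist_S(x,y)$.

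No step is a real obstacle. The only point needing care is the one just used: the endpoint exemption $\{x,y\}$ in the definition of $dist_S$ must be ruled out by the hypothesis $x,y\notin S$, so that every edge of the realizing path has both endpoints outside $S$ and edge-consistency applies to it. Note that the path need not be simple, and the argument never uses simplicity.
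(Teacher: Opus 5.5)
Your proof is correct and follows essentially the same approach as the paper: induction on $l$ chaining the edge-consistency inequality, then specializing to a shortest path for the distance bound. The only differences are cosmetic. You peel off the first edge where the paper peels off the last. You also make explicit that $P \cap S \subseteq \{x,y\}$ together with $x,y \notin S$ gives $P \cap S = \emptyset$, which the paper leaves implicit.
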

\begin{proof}
    By induction on $l$. For $l = 0$, $x = y$ and the claim holds.
    In the inductive step, let $(z,y) \in E$ be the last edge of the path.
    The induction hypothesis applied to the prefix up to $z$ gives $b[x] \le b[z] + l - 1$,
    and \fullref{def:edge-consistent-labeling} gives $b[z] \le b[y] + 1$; substitution yields $b[x] \le b[y] + l$.
    The distance bound follows by taking a shortest $x$-$y$-path avoiding $S$
    (trivially when $dist_S(x,y) = \infty$).
\end{proof}

Specialized to $y = t$ with $b[t] = 0$, the distance bound reads $b[x] \le dist_S(x,t)$:
the barrier never overestimates the true remaining distance --- \emph{admissibility}
in the sense of $A^*$. What makes the analysis nontrivial is that $S$ changes
during the search, and admissibility is a property \emph{of a given $S$}; the
remainder of this section tracks how the three kinds of state change --- barrier
writes, path pushes, and path pops --- affect it.

We stress that edge-consistency is sufficient but stronger than necessary:
completeness needs only that the barrier be admissible ($b[x] \le dist_S(x,t)$) at the
moment each successor is tested, which is all \fullref{lem:pruning-is-permissive} uses.
Any labeling that under-estimates the remaining distance at that instant therefore
yields completeness, edge-consistent or not; we maintain the stronger, local
edge-consistency only because it \emph{implies} admissibility incrementally across the
search's writes, pushes, and pops.

\begin{lemma}[Pruning is Permissive]\label{lem:pruning-is-permissive}
    Let $b$ be edge-consistent with respect to a set $S$, with $b[t] = 0$, and let
    $P = (v_0, \dots, v_h, v_{h+1}, \dots, v_l = t)$ be a simple path of length $l \le k$
    ending at $t$ whose prefix has node set $S = \{v_0, \dots, v_h\}$.
    Then $b[v_{h+1}] + h < k$.
\end{lemma}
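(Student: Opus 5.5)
The plan is to apply \fullref{lem:bar-invariant} to the suffix of $P$ that starts at $v_{h+1}$. That lemma converts edge-consistency with respect to $S$ into an upper bound on $b[v_{h+1}]$ in terms of the remaining length to $t$.

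Consider the subpath $P' = (v_{h+1}, \dots, v_l = t)$, which has length $l - (h+1)$. Because $P$ is simple, none of $v_{h+1}, \dots, v_l$ coincides with any of $v_0, \dots, v_h$. Since $S = \{v_0,\dots,v_h\}$, this gives $P' \cap S = \emptyset$. We use here that $t \notin S$, which holds by simplicity of $P$ because $t = v_l$ with $l > h$. \fullref{lem:bar-invariant} then yields
\[
b[v_{h+1}] \le b[t] + (l - h - 1) = l - h - 1,
\]
using the hypothesis $b[t] = 0$.

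Adding $h$ gives $b[v_{h+1}] + h \le l - 1 \le k - 1 < k$, which is exactly the pruning condition.

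The only point needing care is the implicit requirement $l \ge h+1$, so that $v_{h+1}$ exists. This is part of the statement's setup, since $v_{h+1}$ is named. The degenerate case $v_{h+1} = t$, i.e.\ $l = h+1$, is covered by the $l=0$ base of \fullref{lem:bar-invariant}, or directly by $b[t]=0$ and $h = l-1 < k$. I expect no real obstacle: the disjointness of $P'$ from $S$, which is the hypothesis \fullref{lem:bar-invariant} needs, follows immediately from simplicity of $P$.
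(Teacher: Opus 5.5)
Your proposal is correct and matches the paper's proof: the suffix $(v_{h+1},\dots,t)$ avoids $S$ by simplicity, so \fullref{lem:bar-invariant} with $b[t]=0$ gives $b[v_{h+1}] \le l-h-1$, hence $b[v_{h+1}]+h \le l-1 < k$. You apply the path form of that lemma directly, whereas the paper routes it through $dist_S(v_{h+1},t) \le l-h-1$; the argument is the same.
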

\begin{proof}
    Since $P$ is simple, the suffix $(v_{h+1}, \dots, v_l = t)$ avoids
    $S = \{v_0, \dots, v_h\}$ and witnesses $dist_S(v_{h+1}, t) \le l - h - 1$.
    By \fullref{lem:bar-invariant} and $b[t] = 0$,
    $b[v_{h+1}] \le l - h - 1$, hence $b[v_{h+1}] + h \le l - 1 < l \le k$.
\end{proof}

\subsection{Fruitless and Fruitful Bounds}

The following lemmas characterize how a single search call affects the barrier values.
A fruitless call can only \emph{increase} its own barrier (\autoref{lem:fruitless-increasing}),
and while a call remains fruitless no barrier value decreases anywhere in its recursion subtree (\autoref{lem:fruitless-monotonicity}).
Conversely, a fruitful call returns a value that is at least the true distance to the target (\autoref{lem:fruitful-lower-bound}),
and the ensuing update cascade assigns barrier values that are likewise lower bounded
by the corresponding distances to $t$ (\autoref{lem:update-distance-bound}).
The complementary upper bounds, which together imply that cascade updates
coincide with the exact distances, are established in \autoref{sec:fruitful-cascades}.

Recall from \autoref{sec:algorithm} that during a call $\search(v)$ (\autoref{alg:search}) the search path is the current stack snapshot
$Pv=(s=v_0,\ldots,v_h=v)$
where $h = \|Pv\|$, and that $P$ denotes the prefix of this path at the time the call is entered.
Since Observation~\ref{obs:target-barrier} shows that $b[t]=0$ throughout the execution,
every lemma below may assume this property without further mention.

\begin{lemma}[Fruitless Increasing]\label{lem:fruitless-increasing}
    Suppose in $\bsdfs(G, s, t, k)$ a call $\search(v)$ returns fruitless.
    Then, for the barrier value of $v$ at entry $b_{entry}[v]$ and at exit $b_{exit}[v]$,
    $b_{exit}[v] > b_{entry}[v]$.
\end{lemma}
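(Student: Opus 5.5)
The plan is to compare the entry value of $b[v]$ with the value written by the call's final assignment. By Observation~\ref{obs:path-barrier-fixed}, while $\search(v)$ is active, $b[v]$ changes only through that final assignment. In particular, no cascade triggered by a fruitful descendant can touch $b[v]$. This also follows from the guard $p \notin S$ in $\fruitful$, since $v \in S$ throughout the call. So $b_{exit}[v]$ equals exactly the value assigned at the end of the call.

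First I identify that assignment. The call returns fruitless, so $sd = k+1$ after the for-loop and the test on line~13 fails. Line~16 then executes $b[v] \gets k - h + 1$ with $h = \|Pv\|$. Hence $b_{exit}[v] = k - h + 1$. Note that $h$ is fixed for the whole call: it is computed once at line~3, and the stack is restored to $Pv$ after each child call.

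Next I bound the entry value. By \fullref{lem:parent-pruning-guard}, every call satisfies $b_{entry}[v] \le k - h$ at entry. Combining the two gives
\[
b_{exit}[v] = k - h + 1 > k - h \ge b_{entry}[v],
\]
which is the claim.

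There is no real obstacle. The only point needing care is justifying that $b[v]$ is untouched between entry and the final assignment, so that the entry bound still applies at the moment of comparison. This is exactly Observation~\ref{obs:path-barrier-fixed}, reinforced by the $p \notin S$ guard in the cascade.
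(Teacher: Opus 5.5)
Your proposal is correct and matches the paper's proof essentially verbatim: the entry bound $b_{entry}[v] \le k-h$ from Lemma~\ref{lem:parent-pruning-guard}, combined with the fruitless assignment $b[v] \gets k-h+1$, which by Observation~\ref{obs:path-barrier-fixed} is the only write to $b[v]$ during the call, gives $b_{exit}[v] = k-h+1 > b_{entry}[v]$. One minor remark: intermediate values of $b[v]$ would not matter anyway. The final assignment overwrites them and is immediately followed by the pop, so the entry bound needs no protection across the call.
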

\begin{proof}
    By \fullref{lem:parent-pruning-guard}, $b_{entry}[v] \le k - h$ at entry, where $h = \|Pv\|$.
    A fruitless return executes the assignment $b[v] \gets k - h + 1$, which by Observation~\ref{obs:path-barrier-fixed}
    is the only write to $b[v]$ during $\search(v)$; hence
    $b_{exit}[v] = k - h + 1 > k - h \ge b_{entry}[v]$.

\end{proof}

\begin{lemma}[Fruitless Monotonicity]\label{lem:fruitless-monotonicity}
	If $\search(v)$ returns fruitless, then no call nested within $\search(v)$ returns fruitful,
    no execution of $\fruitful$ occurs during $\search(v)$, and every barrier value is non-decreasing throughout the execution of $\search(v)$.
\end{lemma}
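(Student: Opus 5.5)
The plan is to prove the three claims in sequence, each feeding the next. The argument rests on two facts about the code: $\search(v)$ is fruitless exactly when its $sd$ stays at $k+1$, and $\fruitful$ is invoked only from the fruitful branch of some call.

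\emph{Step 1 (no nested fruitful call).} Suppose $\search(v)$ returns fruitless with $sd = k+1$. I will show that no call in its subtree returns fruitful.

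First consider the direct children. If any child $\search(w)$ returned a fruitful value $d \le k$, line 12 would set $sd \le d+1$. By \fullref{cor:fruitful-return}, $d \le k-(h+1)$, so $sd \le k-h \le k$. This contradicts fruitlessness.

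Likewise, no successor $w=t$ passed the pruning test, since otherwise line 9 would set $sd=1$. So every direct child returned fruitless.

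Applying the same reasoning to each fruitless child, by induction over the (finite, by \fullref{lem:termination}) recursion tree, every nested call returns fruitless.

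\emph{Step 2 (no $\fruitful$ execution).} $\fruitful$ is invoked only at line 14, i.e.\ only by a call that returns fruitful. The calls active during $\search(v)$ are $\search(v)$ itself and its nested calls. By Step 1 none of them returns fruitful, so no cascade runs during $\search(v)$.

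\emph{Step 3 (monotonicity).} With cascades excluded, the only writes to $b$ that can occur during $\search(v)$ are the fruitless assignments $b[u] \gets k-h_u+1$ at the ends of calls $\search(u)$ within the subtree. Here I include $v$ itself, and $h_u$ denotes the depth of $\search(u)$.

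By \fullref{lem:fruitless-increasing}, each such write strictly raises $b[u]$ above its entry value $b_{entry}[u]$. By Observation~\ref{obs:path-barrier-fixed}, $b[u]$ is not modified between entry and that final assignment. Hence the value immediately before the write equals $b_{entry}[u]$, and the write is an increase.

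Every write is therefore an increase, so all barrier values are non-decreasing throughout $\search(v)$.

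\emph{Main obstacle.} The delicate point is Step 3's appeal to Observation~\ref{obs:path-barrier-fixed}. That observation is stated but not proven, and it is what rules out a write to $b[u]$ between its entry and its final assignment.

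In this setting it can be justified directly. Fruitless writes target only the call's own node, and a node on $S$ cannot be re-entered as a nested call because of the guard $w \notin S$. Cascade writes are absent by Step 2. So Step 3 does not need to rely on the observation in general.

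The second point to watch is the inductive use of \fullref{cor:fruitful-return} in Step 1. It must be applied to the minimal fruitful child, which is exactly how that corollary is phrased.
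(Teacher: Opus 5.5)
Your proposal is correct and matches the paper's proof. The paper argues bottom-up: a fruitful nested call at depth $h'$ forces its parent to $sd \le k-(h'-1)$, and so on up to $\search(v)$. That is the contrapositive of your top-down induction, and your remaining steps (no cascade, then Lemma~\ref{lem:fruitless-increasing} for every write) are the same as the paper's. One small correction to your closing remark: Corollary~\ref{cor:fruitful-return} applies to every fruitful child, not just a minimal one, so Step~1 needs no minimality.
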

\begin{proof}
	Suppose that some call nested within $\search(v)$ returns a fruitful value.
	By Corollary~\ref{cor:fruitful-return}, a fruitful call at depth $h'$ returns a value at most $k-h'$. Its parent therefore computes

	\[
	sd \le (k-h')+1 = k-(h'-1)\le k,
	\]
	and is itself fruitful. Repeating the argument up the recursion stack shows that $\search(v)$ must also be fruitful, contradicting the assumption.
	Therefore, no nested search call is fruitful. Since the procedure $\fruitful$ is invoked only after a fruitful return, no cascade execution can occur within $\search(v)$.
	The only barrier updates performed during $\search(v)$ are therefore fruitless assignments,
    either by $\search(v)$ itself or by calls nested within it.
    By \fullref{lem:fruitless-increasing}, each such update strictly increases the corresponding barrier value.
    Consequently, every barrier value is non-decreasing throughout the execution of $\search(v)$.
\end{proof}

\begin{lemma}[Fruitful Lower Bound]\label{lem:fruitful-lower-bound}
    If a call $\search(v)$ with search path $Pv$ returns a fruitful value $sd$, then

    \[
    sd \ge dist_{Pv}(v,t).
    \]
\end{lemma}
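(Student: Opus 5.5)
We prove the Fruitful Lower Bound by induction on the search recursion, which is well-founded by \fullref{lem:termination}, in the same way as \autoref{cor:fruitful-return}. The idea is that a fruitful value $sd$ is always witnessed by an actual $v$-$t$-path of length $sd$ that avoids the prefix $P$ except at its start $v$. Such a path is admissible for $dist_{Pv}(v,t)$, so the bound follows directly from the definition of relative distance.

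\emph{Base case.} Suppose the final value $sd$ was produced by the assignment $sd \gets 1$ in line~9. Then $(v,t)\in E$, and the one-edge path $(v,t)$ meets $Pv$ only in $v$. Since $s\neq t$ and $t\notin Pv$, this path is admissible, so $dist_{Pv}(v,t)\le 1 = sd$.

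\emph{Inductive step.} Otherwise the minimum was attained in line~12 as $sd = d+1$. Here $d$ is the fruitful value returned by a child $\search(w)$ with $w\notin Pv$ and search path $Pvw$. If both a target edge and a child contribute, the minimum is at least whichever of the two branches realizes it, so it suffices to bound each candidate value separately; the target edge is already covered by the base case. By the inductive hypothesis, $d \ge dist_{Pvw}(w,t)$, so there is a $w$-$t$-path $Q$ of length at most $d$ that meets $Pvw$ only in $\{w,t\}$. Since $t\notin Pv$, the path $Q$ avoids $Pv$ entirely, and it also avoids $v$. The path $v\cdot Q$ is therefore a $v$-$t$-path of length at most $d+1$ that intersects $Pv$ only in $v$. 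This gives $dist_{Pv}(v,t)\le d+1 = sd$. Equivalently, one can cite \fullref{lem:dist-triangle} with $S=Pv$ and intermediate node $w\notin Pv$, together with the monotonicity $dist_{Pv}(w,t)\le dist_{Pvw}(w,t)$. That monotonicity holds because enlarging the avoided set by the endpoint $w$ itself does not restrict paths starting at $w$.

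\emph{Main point of care.} The only delicate step is the bookkeeping of which nodes each witness path must avoid. The inductive hypothesis controls the child's path relative to the larger set $Pvw$, and we must check that prepending $v$ yields a path admissible relative to $Pv$. This works because $v$ belongs to $Pvw$, so $Q$ avoids $v$ and the concatenation stays simple, and because the endpoint exemption for $t$ is harmless given $t\notin Pv$. Once this is verified, taking the minimum over the candidates in line~12 preserves the inequality, since every candidate is at least $dist_{Pv}(v,t)$.
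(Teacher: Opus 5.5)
Your proof is correct and follows the paper's argument: induction on the search recursion, the one-edge case $(v,t)$, and in the child case the inductive hypothesis combined with the endpoint exemption for $w$ and the triangle inequality (\autoref{lem:dist-triangle}). Prepending $v$ to the witness path $Q$ is just that triangle inequality unfolded, and the paper uses the equality $dist_{Pvw}(w,t) = dist_{Pv}(w,t)$ where your inequality already suffices.
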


    \begin{proof}
    We proceed by induction on the search recursion, which is well-founded by \fullref{lem:termination}.
    A fruitful value $sd$ is obtained in one of two ways.
    If an edge $(v,t)$ is discovered, then $sd=1$. Since such an edge yields a path of length one,
    $dist_{Pv}(v,t)\le 1 = sd$.
    Otherwise,
    $sd=d_w+1$
    for some successor $w\notin Pv$ whose recursive call $\search(w)$ returns
    the minimum fruitful value $d_w$ encountered by the loop.
    By the induction hypothesis,
   $ d_w \ge dist_{Pvw}(w,t)$.
    Since $w$ is exempt as an endpoint,
    $dist_{Pvw}(w,t)=dist_{Pv}(w,t)$.
    Furthermore,
    $dist_{Pv}(v,w)\le 1$.
    Applying \fullref{lem:dist-triangle} to the intermediate node $w\notin Pv$ gives
    \[
    dist_{Pv}(v,t)
    \le dist_{Pv}(v,w)+dist_{Pv}(w,t)
    \le 1+d_w
    = sd.
    \]
\end{proof}

\begin{lemma}[Fruitful Distance Lower Bound]\label{lem:update-distance-bound}
    Let the cascade $\fruitful(v, sd)$ be invoked from a fruitful $\search(v)$ with search path $Pv$, prefix $P$.
    Then every assignment $b[u] \gets d+1$ performed by the cascade satisfies $b[u] \ge dist_P(u,t)$.
\end{lemma}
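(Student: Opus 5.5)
Our plan is to prove a slightly stronger statement by induction over the order in which pairs are enqueued in the cascade $\fruitful(v, sd)$: \emph{every pair $(q,d)$ that is ever enqueued satisfies $d \ge dist_P(q,t)$, and moreover $q \notin P$.} The claimed bound for each assignment $b[u] \gets d+1$ then follows at once. Each such assignment is accompanied by enqueuing the pair $(u, d+1)$, so the invariant applied to that pair gives $b[u] = d+1 \ge dist_P(u,t)$.

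Throughout the cascade the stack $S$ is fixed. When $\fruitful(v,sd)$ is called from line 14, $v$ has not yet been popped, so $S = Pv$. The guard $p \notin S$ therefore ensures that every node enqueued after the initial one avoids $Pv$, and in particular avoids $P$.

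For the base case we take the initial pair $(v, sd)$. We note that $v \notin P$ by Observation~\ref{obs:simple-search-path}. By \fullref{lem:fruitful-lower-bound}, $sd \ge dist_{Pv}(v,t)$. We then need $dist_{Pv}(v,t) = dist_P(v,t)$. This holds because $v$ is an endpoint of the paths in question: a $v$-$t$-path that meets $P$ only at its endpoints meets $Pv$ only at $\{v,t\}$, and conversely. Hence $sd \ge dist_P(v,t)$.

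For the inductive step, suppose $(q,d)$ is dequeued and satisfies the invariant, and let $p \in \pre(q)$ with $p \notin S = Pv$ be the node that gets enqueued as $(p,d+1)$. Since $q \notin P$ by the invariant, we may apply \fullref{lem:dist-triangle} with respect to the set $P$ and intermediate node $q$:
\[
dist_P(p,t) \le dist_P(p,q) + dist_P(q,t) \le 1 + d .
\]
Here $dist_P(p,q) \le 1$ holds because the single edge $(p,q)$ meets $P$ only at its endpoints. In fact neither endpoint lies in $P$: $q \notin P$ by the invariant and $p \notin P$ by the guard. Together with $p \notin P$, this re-establishes the full invariant for the new pair.

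The main subtlety, and the reason we strengthen the hypothesis, is that \fullref{lem:dist-triangle} requires the intermediate node to lie outside the avoided set. We must therefore carry $q \notin P$ along the induction rather than only the distance bound. A second point to watch is the switch from $Pv$ to $P$ in the base case, which is exactly the endpoint-exemption argument. Apart from these two points, the argument mirrors the proof of \fullref{lem:fruitful-lower-bound}.
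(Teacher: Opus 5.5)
Your proof is correct and follows essentially the same route as the paper. Both argue by induction on enqueue order with the invariant $d \ge dist_P(q,t)$, handle the base case via \autoref{lem:fruitful-lower-bound} plus endpoint exemption, and handle the step via \autoref{lem:dist-triangle} on the edge $(p,q)$. The only difference is cosmetic: you carry $q \notin P$ inside the inductive invariant, whereas the paper derives it inline, noting that either $q = v$ or $q$ passed the guard $q \notin S$ with $S = Pv \supseteq P$.
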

\begin{proof}
    The cascade does not modify the search path; $P$ and $Pv$ are fixed throughout.
    We show by induction on enqueue order that every pair $(q,d)$ placed on the
    queue satisfies $d \ge dist_P(q,t)$. Since each assignment $b[u] \gets d+1$
    is accompanied by enqueuing $(u, d+1)$, the claim follows.

    The initial pair is $(v, sd)$. By \fullref{lem:fruitful-lower-bound},
    $sd \ge dist_{Pv}(v,t)$, and since $v$ is endpoint-exempt,
    $dist_{Pv}(v,t) = dist_P(v,t)$.

    Every later pair $(u, d+1)$ is enqueued while an earlier pair $(q,d)$ with
    $u \in \pre(q)$ is processed; by the induction hypothesis,
    $d \ge dist_P(q,t)$. The intermediate node $q$ lies off $P$: either
    $q = v \notin P$, or $q$ passed the guard $q \notin S$ when enqueued, and
    $S = Pv \supseteq P$ throughout the cascade.
    Hence, \fullref{lem:dist-triangle} on the edge $(u,q)$ gives
    $dist_P(u,t) \le 1 + dist_P(q,t) \le d + 1 = b[u]$.
\end{proof}

\subsection{Fruitful Cascades}\label{sec:fruitful-cascades}

\begin{lemma}[Cascade Distance]\label{lem:cascade-distance}
  	Suppose the cascade $\fruitful(v,sd)$ is initiated by a $\search(v)$ call whose search path is $Pv$,
    and assume that the barrier labeling is edge-consistent with respect to $Pv$ at cascade entry.
    Then, at cascade exit,
	\[
	b[x] \le sd + dist_{Pv}(x,v)
	\quad
	\text{for every } x \notin P,
	\]
	with equality for every node that was enqueued by the cascade.
\end{lemma}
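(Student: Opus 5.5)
We prove the upper bound and the matching lower bound separately. During the cascade the stack is $S = Pv$, so the guard $p \notin S$ excludes $v$ and every node of $P$. We use three facts throughout. First, barrier values only decrease during the cascade (Observation~\ref{obs:cascade-lowers}). Second, $b[v] = sd$ is set at the start and never changed afterwards, since $v \in S$. Third, whenever a pair $(q,d)$ is enqueued, $b[q] = d$ at that moment.

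\emph{Upper bound.} We show $b_{exit}[x] \le sd + dist_{Pv}(x,v)$ for all $x \notin P$, by induction on $j = dist_{Pv}(x,v)$; the case $j = \infty$ is trivial.
\begin{itemize}
\item For $j = 0$ we have $x = v$ and $b_{exit}[v] = sd$.
\item For $j \ge 1$, fix a shortest admissible path $x = y_0, y_1, \dots, y_j = v$. Its interior avoids $Pv$, and $x \ne v$ with $x \notin P$, so $x \notin Pv$. The suffix from $y_1$ witnesses $dist_{Pv}(y_1,v) \le j-1$, and $y_1 \notin P$. The induction hypothesis therefore gives $b_{exit}[y_1] \le sd + j - 1$.
\end{itemize}
It remains to transfer this bound across the edge $(x,y_1)$, which we do by cases.
\begin{itemize}
\item \emph{$y_1$ was enqueued at least once (this includes $y_1 = v$ via the initial pair).} Consider the last pair $(y_1,d)$ enqueued for $y_1$. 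Then $d = b_{exit}[y_1]$, because any later decrease would have enqueued another pair. When this pair is dequeued, the loop over $\pre(y_1)$ examines $x$. Since $x \notin S$, afterwards $b[x] \le d+1$, and by monotone decrease this persists to exit. Hence $b_{exit}[x] \le sd + j$.
\item \emph{$y_1$ was never enqueued (so $y_1 \ne v$).} Then $b[y_1]$ is unchanged from entry. Both $x$ and $y_1$ lie outside $Pv$, so edge-consistency at cascade entry gives $b_{entry}[x] \le b_{entry}[y_1] + 1$. Therefore $b_{exit}[x] \le b_{entry}[x] \le b_{exit}[y_1] + 1 \le sd + j$.
\end{itemize}
This second case is the only place where the hypothesis of edge-consistency is used.

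\emph{Lower bound and equality.} Mirroring the proof of \fullref{lem:update-distance-bound}, we show by induction on enqueue order that every enqueued pair $(u,d')$ satisfies $d' \ge sd + dist_{Pv}(u,v)$.
\begin{itemize}
\item The initial pair $(v,sd)$ satisfies this with distance $0$.
\item A later pair $(u,d+1)$ arises from a dequeued pair $(q,d)$ with $(u,q) \in E$, $u \notin Pv$, and either $q = v$ or $q \notin Pv$. Prepending the edge $(u,q)$ to an admissible $q$-$v$-path yields an admissible $u$-$v$-path, since the new interior node $q$, if any, lies off $Pv$. Hence $dist_{Pv}(u,v) \le 1 + dist_{Pv}(q,v) \le 1 + d - sd$.
\end{itemize}
Now let $u$ be any enqueued node. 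Its exit value is the value of its last enqueued pair, so $b_{exit}[u] \ge sd + dist_{Pv}(u,v)$. Combined with the upper bound, this gives equality.

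The main obstacle is the first case of the upper-bound step. It relies on a BFS-with-decrease-key argument: the final value of a node is always accompanied by an enqueue, and processing that enqueue propagates the bound to every predecessor. We therefore argue via the \emph{last} enqueue of $y_1$ rather than via dequeue order, which avoids assuming FIFO levels are exact.
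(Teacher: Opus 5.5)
Your proof is correct and follows the paper's argument: strong induction on $dist_{Pv}(x,v)$ splits on whether the next node $y_1$ on a shortest admissible path was ever enqueued, with entry edge-consistency used only in the never-enqueued case, and the lower bound follows by induction on queue order. The only refinement is that you argue via the \emph{last} enqueued pair of $y_1$, where the paper appeals to FIFO order so that each node is relaxed at most once; this makes your version independent of the queue discipline.
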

\begin{proof}
    By Observation~\ref{obs:cascade-lowers}, the cascade only decreases barrier values and,
    because nodes are processed in FIFO order, each node is relaxed at most once.

    \emph{Upper bound.}
    We prove $b[x] \le sd + dist_{Pv}(x,v)$ for every $x \notin P$ by induction on $r = dist_{Pv}(x,v)$.
    The claim is trivial when $r=\infty$.
    For $r=0$, we have $x=v$, and the cascade initializes $b[v]=sd$.

For $r \ge 1$ we have $x \ne v$, and $x \notin P$ by assumption, so
    $x \notin Pv = S$ throughout the cascade. Let $x_1$ be the successor of $x$ on a
    shortest $x$-$v$-path avoiding $Pv$ except at $v$, so $dist_{Pv}(x_1,v) = r-1$ and
    either $x_1 = v$ or $x_1 \notin Pv$; by induction $b[x_1] \le sd + (r-1)$ at exit.
    If $x_1$ is enqueued (in particular if $x_1 = v$), then its pair $(x_1, b[x_1])$ is
    dequeued and the predecessor $x$ is examined: it passes the on-path guard
    $x \notin S$, and whether or not the value guard $b[x] > b[x_1] + 1$ fires,
    $b[x] \le b[x_1] + 1 \le sd + r$ holds afterwards.
    If $x_1$ is never enqueued, $b[x_1]$ keeps its entry value and entry edge-consistency on the
    edge $(x, x_1)$ gives $b_{entry}[x] \le b[x_1] + 1 \le sd + r$; by Observation~\ref{obs:cascade-lowers},
    $b[x] \le b_{entry}[x] \le sd + r$ at exit.

    \emph{Lower bound for enqueued nodes.}
    If $u$ is enqueued, it is relaxed by some dequeued $(q, d)$ with $u \in \pre(q)$, setting $b[u] = d + 1$ where $d = b[q]$.
    By induction on dequeue order, $b[q] \ge sd + dist_{Pv}(q,v)$ (base $b[v] = sd$);
    either $q = v$, and the edge $(u,v)$ gives $dist_{Pv}(u,v) \le 1 = 1 + dist_{Pv}(v,v)$;
    or $q \notin Pv$, and the edge $(u,q)$ with \fullref{lem:dist-triangle} gives $dist_{Pv}(u,v) \le 1 + dist_{Pv}(q,v)$.
    Thus, $b[u] = b[q] + 1 \ge sd + dist_{Pv}(u,v)$. With the upper bound, $b[u] = sd + dist_{Pv}(u,v)$.
\end{proof}

We now isolate the only state transition that can violate admissibility.
Barrier \emph{writes} cannot do so: a fruitless or fruitful return modifies the barrier
of a node that still lies on the current search path,
and admissibility places no restriction on such nodes.
Likewise, a cascade only decreases barrier values (Observation~\ref{obs:cascade-lowers}),
so it cannot raise an admissible value above its distance bound.

A \emph{push}
enlarges the forbidden set, which only relaxes the edge-consistency constraints.
The remaining transition is a \emph{pop}: when $\search(v)$ removes $v$ from the
search path, the forbidden set shrinks from $Pv$ to $P$, and an edge $(x,y)$ is
newly constrained w.r.t.\ $P$ exactly when $v \in \{x,y\}$ --- removing $v$
changes the constrained/exempt status of no other edge. Re-establishing
edge-consistency at a pop is therefore a \emph{local} obligation on the edges
incident to $v$. The next lemma discharges the incoming edges and the case
$y = v$; the single family it defers --- edges leaving $v$ --- is handled in
\fullref{lem:search-preserves-consistency}, and is the only point where the
exact fruitful value is needed.

\begin{lemma}[Update Repairs]\label{lem:update-repairs-consistency}
    Under the cascade $\fruitful(v,sd)$ of  \fullref{lem:cascade-distance},
    edge-consistency with respect to $P$ holds at cascade exit for every edge except, possibly, those leaving $v$.
\end{lemma}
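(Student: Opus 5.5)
We need to show that at cascade exit, $b[x] \le b[y]+1$ holds for every edge $(x,y)$ with $\{x,y\}\cap P=\emptyset$, except edges with $x=v$. The hypothesis is the one of \fullref{lem:cascade-distance}: the labeling is edge-consistent with respect to $Pv$ at cascade entry. We split the constrained edges into two classes. The first class consists of edges avoiding $Pv$ entirely. The second class consists of edges with $y=v$ and $x\notin P$; since the graph has no self-loops, $x\neq v$ here. These two classes are exactly the edges constrained with respect to $P$ that do not leave $v$.

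\emph{Edges entering $v$.} Let $(x,v)\in E$ with $x\notin Pv$. Then $dist_{Pv}(x,v)\le 1$, so \fullref{lem:cascade-distance} gives $b[x]\le sd+1$ at exit. The cascade sets $b[v]=sd$, and $v$ lies on $S$ during the cascade, so its barrier is not modified afterwards. Hence $b[x]\le b[v]+1$.

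\emph{Edges avoiding $Pv$.} Let $(x,y)$ be an edge with $x,y\notin Pv$. The cascade only lowers values (Observation~\ref{obs:cascade-lowers}), so the only way to break consistency is to lower $b[y]$ below $b[x]-1$. We distinguish two cases for $y$.
\begin{itemize}
\item If $y$ was never enqueued, $b[y]$ keeps its entry value. Then entry consistency gives $b[x]\le b_{entry}[x]\le b_{entry}[y]+1=b[y]+1$.
\item If $y$ was enqueued, it is later dequeued with its final value: each node is relaxed at most once by the FIFO argument in \fullref{lem:cascade-distance}, so the dequeued $d$ equals the final $b[y]$. When $y$ is processed, its predecessor $x$ is examined. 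Since $x\notin S=Pv$, the on-path guard passes, and afterwards $b[x]\le d+1=b[y]+1$. Later steps only decrease $b[x]$, so the inequality persists.
\end{itemize}

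\emph{Main obstacle.} The delicate point is the claim that the value carried in the queue equals the final barrier of $y$, i.e.\ that $y$ is not lowered again after being dequeued. \fullref{lem:cascade-distance} asserts this ("each node is relaxed at most once"), but its proof is brief. If that fact were in doubt, I would argue directly from the equality part of \fullref{lem:cascade-distance}. Every enqueued node satisfies $b[u]=sd+dist_{Pv}(u,v)$ at exit. Its enqueued value is at least this lower bound and is at most the exit value, since barriers only decrease. Hence the value at enqueue time already equals the final value.

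With that fact in hand, every constrained edge not leaving $v$ satisfies the consistency inequality at cascade exit, which is the claim.
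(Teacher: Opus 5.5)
Your proof is correct. Two of its three cases, edges into $v$ and a never-enqueued $y$, are exactly the paper's.

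For an enqueued $y \notin Pv$ you take a different route. The paper never looks at the queue there. It combines the exit equality $b[y] = sd + dist_{Pv}(y,v)$ from \fullref{lem:cascade-distance} with the upper bound $b[x] \le sd + dist_{Pv}(x,v)$ and \fullref{lem:dist-triangle} at $y$. That way all three cases pivot on one distance bound, and only exit values matter.

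You instead use the queue mechanics: $y$'s pair is dequeued, every off-path predecessor is then clamped to at most $d+1$, and later steps only lower $b[x]$. This is the same closure step the paper uses inside the upper-bound half of \fullref{lem:cascade-distance}, and again for the loose scheme in \fullref{lem:reset-closure}. Your version needs no distances and no triangle inequality. It shows the repair is a property of the backward sweep itself, whatever values it deposits. Its price is an extra fact: the value carried in the dequeued pair must be the final $b[y]$.

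Your contingency paragraph for that fact does not work. Since the cascade only lowers barriers, the value written at enqueue time is at least the exit value, not at most. The lower bound from \fullref{lem:cascade-distance} points the same way, so together they give no equality.

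Fortunately the fallback is not needed, and there are two ways to close the point:
\begin{itemize}
\item Justify the paper's ``relaxed at most once'' remark directly. FIFO dequeues values in non-decreasing order, so once $b[y] = d_0+1$, no later guard $b[y] > d'+1$ with $d' \ge d_0$ can fire.
\item Bypass it entirely. Take the pair enqueued with the \emph{last} write to $b[y]$. It carries $b_{exit}[y]$ and is dequeued before the queue empties. Examining $x$ at that dequeue gives $b[x] \le b_{exit}[y]+1$ outright, with no appeal to FIFO order.
\end{itemize}
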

\begin{proof}
    Take an edge $(x, y)$ with $\{x,y\} \cap P = \emptyset$ and $x \ne v$;
    then $x \notin Pv$, and either $y = v$ or $y \notin Pv$. By \fullref{lem:cascade-distance},
    $b[x] \le sd + dist_{Pv}(x,v)$.

    \emph{$y = v$:} the edge $(x,v)$ gives $dist_{Pv}(x,v) \le 1$, so $b[x] \le sd + 1 = b[v] + 1$.

    \emph{$y \notin Pv$ and $y$ is enqueued:} then $b[y] = sd + dist_{Pv}(y,v)$ by \fullref{lem:cascade-distance},
    and \fullref{lem:dist-triangle} ($y \notin Pv$) with $dist_{Pv}(x,y) \le 1$ gives
    \[
        b[x] \;\le\; sd + dist_{Pv}(x,v) \;\le\; sd + 1 + dist_{Pv}(y,v) \;=\; b[y] + 1.
    \]

    \emph{Case $y\notin Pv$ and $y$ is not enqueued.}
    Since $y$ is never enqueued, the cascade never writes to $b[y]$.
    Hence
    $b[y] = b_{\mathrm{entry}}[y].$ Entry edge-consistency on $(x,y)$ gives
    \[
    b_{\mathrm{entry}}[x]
    \le
    b_{\mathrm{entry}}[y]+1.
    \]
    Since the cascade only lowers barriers,
    \[
    b[x]
    \le
    b_{\mathrm{entry}}[x]
    \le
    b_{\mathrm{entry}}[y]+1
    =
    b[y]+1.
    \]

\end{proof}

\subsection{Edge-Consistency of $\bsdfs$}
It suffices to show that $\bsdfs$ keeps its labeling edge-consistent w.r.t.\ the path after the pop;
completeness then follows. By the preceding subsection the only nontrivial
obligation at a pop of $v$ is on the edges \emph{leaving} $v$, and discharging it
needs the cascade to have deposited the \emph{exact} distance --- equivalently,
that the fruitful return value satisfies $sd = dist_{Pv}(v,t)$, which in turn
needs the shortest completion to have been produced. Completeness, the exact
value, and preservation at the pop are thus mutually dependent and are
established \emph{together}, by a single induction on the search recursion,
well-founded by \fullref{lem:termination}.

For each call $\search(v)$ the induction carries the following clauses:
\begin{enumerate}
\item[(i)]      completeness of the call (\autoref{lem:call-complete});
\item[(ii)]     the exact fruitful value (\autoref{lem:strict-bar-invariant});
\item[(iii)]    global monotonicity (\autoref{lem:per-call-monotone});
\item[(iv)]     preservation across the call (\autoref{lem:search-preserves-consistency}).
\end{enumerate}

In each of the four lemmas below the \emph{induction hypothesis} is that all four claims
hold for every call nested strictly within the current $\search(v)$.
A pruning test that a budget-respecting path must survive is discharged directly by
\fullref{lem:pruning-is-permissive}; no global completeness statement is invoked inside the induction.

\begin{lemma}[Call Completeness]\label{lem:call-complete}
    Consider a call $\search(v)$ with search path $Pv = (s = v_0, \dots, v_h = v)$
    at which edge-consistency w.r.t.\ $P$ holds at entry.
    Then $\search(v)$ produces every simple $s$-$t$-path of length $\le k$ having $Pv$ as a prefix.
\end{lemma}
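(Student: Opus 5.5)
The proof proceeds inside the joint induction on the search recursion, using the induction hypothesis, i.e., clauses (i)--(iv) for every call nested strictly within $\search(v)$. Fix a simple $s$-$t$-path $Q = (v_0, \dots, v_h, v_{h+1}, \dots, v_l = t)$ with $l \le k$ and prefix $Pv$. Let $w = v_{h+1}$, which is a successor of $v$. The plan has three steps. First, show that when the for-loop of $\search(v)$ reaches $w$, the labeling is edge-consistent w.r.t.\ $Pv$ and $b[t]=0$. Second, conclude from \fullref{lem:pruning-is-permissive} that $w$ passes the pruning test. Third, either $Q$ is output directly, or the child call $\search(w)$ produces $Q$ by clause (i) of the induction hypothesis.

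\emph{Step 1: invariant during the loop.} At entry the labeling is edge-consistent w.r.t.\ $P$. After $v$ is pushed, the forbidden set becomes $Pv \supseteq P$. \fullref{def:edge-consistent-labeling} constrains only edges avoiding the forbidden set, so enlarging it removes constraints, and edge-consistency w.r.t.\ $Pv$ holds when the loop starts.

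Two kinds of loop iteration occur. An iteration with $w' = t$ only outputs and sets the local $sd$; it writes no barrier. An iteration that calls $\search(w')$ does so with prefix $Pv$ and entry consistency w.r.t.\ $Pv$. By clause (iv) for that nested call (\autoref{lem:search-preserves-consistency}), edge-consistency w.r.t.\ $Pv$ holds again after it returns. Pruned iterations change nothing.

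Inducting over the iterations therefore shows that consistency w.r.t.\ $Pv$ holds at every successor test, and in particular at the test of $w$. Throughout, $b[t]=0$ by Observation~\ref{obs:target-barrier}.

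\emph{Step 2: the pruning test.} If $w = t$, then $l = h+1 \le k$ gives $b[t] + h = h < k$. The test passes and line~8 outputs $S\cdot t = Q$.

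Otherwise, $w \notin Pv$ because $Q$ is simple, so the guard $w \notin S$ holds. We apply \fullref{lem:pruning-is-permissive} with $S = \{v_0,\dots,v_h\}$, the node set of $Pv$. Its hypotheses hold by Step 1, and it yields $b[w] + h < k$. Hence $\search(w)$ is invoked with search path $Pvw$.

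\emph{Step 3: the child produces $Q$.} At the child's entry, the labeling is edge-consistent w.r.t.\ the child's prefix $Pv$, again by Step 1. Clause (i) of the induction hypothesis applied to $\search(w)$ therefore produces every simple $s$-$t$-path of length $\le k$ with prefix $Pvw$, and $Q$ is such a path.

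The base of the recursion needs no separate treatment. A call with no nested calls is covered by the case $w = t$, or else by the vacuous fact that no qualifying $Q$ exists: by Step 2 any such $Q$ would force a nested call.

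\emph{Main obstacle.} The delicate point is Step 1: consistency must hold at the moment each particular successor is tested, not merely at entry. Earlier sibling calls and their $\fruitful$ cascades rewrite barriers. These rewrites are controlled only through clause (iv) for the siblings, so the step must check that clause (iv) is stated with respect to the sibling's prefix $Pv$, which is exactly the set used by our pruning test. With that alignment verified, the argument needs only \fullref{lem:pruning-is-permissive} and the induction hypothesis.
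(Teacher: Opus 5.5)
Your proposal is correct and follows the paper's proof essentially step for step. The push relaxes the constraints, and clause (iv) applied to each earlier child keeps edge-consistency w.r.t.\ $Pv$ at every successor test; \autoref{lem:pruning-is-permissive} then admits $w$, and clause (i) for $\search(w)$ produces the path. Your only deviation, checking the case $w = t$ directly via $h = l-1 < k$ instead of through the pruning lemma, is cosmetic and equally valid.
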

\begin{proof}
    We argue within the joint induction on the search recursion; the induction hypothesis
    supplies all four claims for every call nested strictly within $\search(v)$.

    On entry $v$ is pushed, turning $P$ into $Pv$; pushing only
    enlarges the set of exempt nodes, so edge-consistency w.r.t.\ $Pv$ holds. Within
    the for-loop, a successor that is pruned or equals $t$ writes no barrier, while
    a recursively searched successor $w'$ is entered with search path $Pvw'$ and,
    by the induction hypothesis
    (\autoref{lem:search-preserves-consistency}), preserves edge-consistency
    w.r.t.\ $Pvw' \setminus \{w'\} = Pv$ across the call.
    Edge-consistency w.r.t.\ $Pv$ is therefore maintained throughout the for-loop,
    and in particular holds whenever a successor is examined.

    Let $R = (s = v_0, \dots, v_h = v, v_{h+1}, \dots, v_l = t)$ be a simple
    $s$-$t$-path of length $l \le k$ with prefix $Pv$, and let $w = v_{h+1}$. As $w \in \suc(v)$,
    the for-loop examines it; at that moment edge-consistency w.r.t.\ $Pv$ holds
    and $b[t] = 0$ (Observation~\ref{obs:target-barrier}), so by
    \fullref{lem:pruning-is-permissive} $b[w] + h < k$ and $w$ is not pruned.
    If $w = t$, the closing-edge branch outputs $Pv$ followed by $t$, which is $R$.
    Otherwise, $w \ne t$ and, $R$ being simple, $w \notin \{v_0, \dots, v_h\} = Pv$;
    hence $\search(w)$ is entered with search path $Pvw$, at which
    edge-consistency w.r.t.\ $Pv$ holds.
    Since $R$ has prefix $Pvw$, the induction hypothesis
    (\autoref{lem:call-complete} for $\search(w)$) yields that $R$ is produced.
\end{proof}

\begin{lemma}[Strict Barrier Invariant]\label{lem:strict-bar-invariant}
    Consider a call $\search(v)$ with search path $Pv = (s = v_0, \dots, v_h = v)$
    at which edge-consistency w.r.t.\ $P$ holds at entry.
    If $\search(v)$ is fruitful, then $sd = dist_{Pv}(v,t)$.
\end{lemma}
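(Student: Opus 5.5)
The inequality $sd \ge dist_{Pv}(v,t)$ is already \fullref{lem:fruitful-lower-bound}, so only $sd \le dist_{Pv}(v,t)$ remains. We argue within the joint induction on the search recursion, with all four clauses available for every call nested strictly within $\search(v)$. The plan is to exhibit a shortest completion and show that the for-loop sees it.

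Since $\search(v)$ is fruitful, $sd \le k$. By \fullref{cor:fruitful-return}, $sd \le k-h$, and by the lower bound $r := dist_{Pv}(v,t) \le sd \le k-h$, so $r$ is finite. Fix a shortest $v$-$t$-path $Q=(v=u_0,u_1,\dots,u_r=t)$ whose interior avoids $Pv$. Then $R = Pv\cdot(u_1,\dots,u_r)$ is a simple $s$-$t$-path of length $h+r\le k$ with prefix $Pv$. Let $w=u_1$.

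The argument of \fullref{lem:call-complete} shows that edge-consistency w.r.t.\ $Pv$ holds throughout the for-loop. It uses \fullref{lem:search-preserves-consistency} for the children, which is available by the induction hypothesis. Hence, when $w$ is examined, \fullref{lem:pruning-is-permissive} gives that $w$ is not pruned. We then split into two cases.

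\emph{Case $r=1$.} Here $w=t$, and the closing branch sets $sd\gets 1 = r$. Since $sd$ only decreases afterwards and $sd\ge r$, we get $sd = r$.

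\emph{Case $r\ge 2$.} Here $w\ne t$ and $w\notin Pv$, so $\search(w)$ is called with search path $Pvw$ and edge-consistency w.r.t.\ $Pv$ at entry. By \fullref{lem:call-complete}, applied inductively to $\search(w)$, the suffix $(u_1,\dots,u_r)$ yields the output $R$, so $\search(w)$ is fruitful. By clause (ii) of the induction hypothesis, its return value is $d = dist_{Pvw}(w,t)$. The path $(u_1,\dots,u_r)$ avoids $Pvw$ except at the endpoint $w$, so $d \le r-1$. Then line 12 gives $sd \le d+1 \le r$. Combined with \fullref{lem:fruitful-lower-bound}, this yields $sd = r$.

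The main obstacle is ensuring that the state seen when $w$ is examined really satisfies the hypotheses: edge-consistency w.r.t.\ $Pv$ at that moment, after earlier siblings have run and modified barriers. This is exactly what clause (iv) for those siblings supplies. The induction must therefore be ordered so that (iv) for nested calls is available, which it is, because it concerns calls strictly nested within $\search(v)$. A secondary check is that the endpoint exemption makes $dist_{Pvw}(w,t)$ bounded by the length of the suffix of $Q$. This holds because $Q$'s interior avoids $Pv$ and $Q$ is simple, so it does not revisit $w$.
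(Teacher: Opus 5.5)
Your proposal is correct and follows the paper's proof essentially step for step. You take a shortest completion $Q$ avoiding $Pv$, carry edge-consistency w.r.t.\ $Pv$ through the earlier siblings by clause (iv) so that $w$ survives pruning by \fullref{lem:pruning-is-permissive}, and then use either the closing edge or clauses (i) and (ii) for $\search(w)$ to get $sd \le dist_{Pvw}(w,t)+1 \le dist_{Pv}(v,t)$. Your split on $r=1$ versus $r\ge 2$ is the paper's split on $w=t$ versus $w\ne t$. The one step you leave unjustified is that producing $R$ inside $\search(w)$ makes that call fruitful; this is where the paper cites \fullref{lem:fruitless-monotonicity}.
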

\begin{proof}
    Again within the joint induction, with the induction hypothesis available for every nested call.
    The bound $sd \ge dist_{Pv}(v,t)$ is \fullref{lem:fruitful-lower-bound}; it remains to show $sd \le dist_{Pv}(v,t)$.

    Being fruitful, $\search(v)$ has $sd \le k - h$
    (\fullref{cor:fruitful-return}), so $dist_{Pv}(v,t) \le sd \le k - h$ is finite.
    Let $R = (v, w, \dots, t)$ be a shortest $v$-$t$-path with $R \cap Pv \subseteq \{v,t\}$,
    of length $dist_{Pv}(v,t)$, and let $w$ be its node after $v$.
    Prepending $Pv$ yields a simple $s$-$t$-path of length $h + dist_{Pv}(v,t) \le k$
    with prefix $Pv$ whose node after $v$ is $w$.

    After $v$ is pushed, edge-consistency w.r.t.\ $Pv$ holds, and by the induction
    hypothesis (\autoref{lem:search-preserves-consistency}) each successor searched
    before $w$ preserves it; so it holds when $\search(v)$ examines $w$.
    With $b[t] = 0$ (Observation~\ref{obs:target-barrier}), \fullref{lem:pruning-is-permissive}
    gives $b[w] + h < k$, so $w$ is not pruned.

    If $w = t$, then $dist_{Pv}(v,t) = 1$ and the closing-edge branch sets $sd \gets 1 = dist_{Pv}(v,t)$.
    Otherwise, $w \ne t$ and, $R$ being simple with $R \cap Pv \subseteq \{v,t\}$, $w \notin Pv$;
    so $\search(w)$ runs with search path $Pvw$.
    The $s$-$t$-path above has prefix $Pvw$, so by the induction hypothesis (\autoref{lem:call-complete}
    for $\search(w)$) it is produced during $\search(w)$,
    which is therefore fruitful (\fullref{lem:fruitless-monotonicity});
    by the induction hypothesis
    (\autoref{lem:strict-bar-invariant} for $\search(w)$) its return value is $sd_w = dist_{Pvw}(w,t)$.
    The suffix of $R$ from $w$ meets $Pvw$ only at $w$ and $t$ and has length
    $dist_{Pv}(v,t) - 1$, so $dist_{Pvw}(w,t) \le dist_{Pv}(v,t) - 1$.
    The for-loop sets $sd \le sd_w + 1 = dist_{Pvw}(w,t) + 1 \le dist_{Pv}(v,t)$.

    In both cases $sd \le dist_{Pv}(v,t)$, hence $sd = dist_{Pv}(v,t)$.
\end{proof}

\begin{lemma}[Per-Call Global Monotonicity]\label{lem:per-call-monotone}
    Consider a call $\search(v)$ with search path $Pv = (s = v_0, \dots, v_h = v)$
    at which edge-consistency w.r.t.\ $P$ holds at entry, with entry barriers
    $b_{entry}$ and exit barriers $b_{exit}$. Then $b_{exit}[x] \ge b_{entry}[x]$
    for every $x \in V$.
\end{lemma}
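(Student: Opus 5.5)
The plan is to show, by induction over the chronological sequence of barrier writes performed during $\search(v)$, that every value written to any $b[x]$ is at least $b_{entry}[x]$. Since untouched entries keep their entry value, this gives $b_{exit}[x] \ge b_{entry}[x]$ for all $x$. The single tool is admissibility at entry. Edge-consistency w.r.t.\ $P$ at entry, together with $b[t]=0$ (Observation~\ref{obs:target-barrier}) and \fullref{lem:bar-invariant}, gives $b_{entry}[x] \le dist_P(x,t)$ for every $x \notin P$; note $t \notin P$ since $s \ne t$. I also use the elementary monotonicity of relative distance in the forbidden set: for $P \subseteq P'$ and endpoints $x,t$ we have $dist_{P'}(x,t) \ge dist_P(x,t)$, because every path admissible for $P'$ is admissible for $P$. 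Endpoint exemption is harmless here, since the endpoints are the same in both distances.

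First, nodes of $P$ are never written during $\search(v)$. They stay on the stack throughout, so Observation~\ref{obs:path-barrier-fixed} protects them, and the cascade guard $p \notin S$ skips them. All writes therefore hit nodes $x \notin P$, and I split by the kind of write.
\begin{enumerate}
\item \emph{Cascade writes.} A cascade initiated by a nested fruitful call $\search(u)$ with prefix $P' \supseteq P$ writes $b[x] \ge dist_{P'}(x,t)$ by \fullref{lem:update-distance-bound}. This is $\ge dist_P(x,t) \ge b_{entry}[x]$.
\item \emph{Fruitful final assignments.} A call $\search(u)$ with search path $P'u$, where $P' \supseteq P$ (this includes $u = v$, $P' = P$), writes $b[u] \gets sd_u$. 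By \fullref{lem:fruitful-lower-bound}, $sd_u \ge dist_{P'u}(u,t) = dist_{P'}(u,t) \ge dist_P(u,t) \ge b_{entry}[u]$.
\item \emph{Fruitless final assignments.} These write $k-h'+1$ for a call at depth $h'$. By \fullref{lem:fruitless-increasing}, this exceeds the value $b[u]$ held at that call's entry. By the chronological induction hypothesis, or directly by clause (iii) of the joint induction applied to calls already completed, that value is already $\ge b_{entry}[u]$.
\end{enumerate}

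Alternatively, for the fruitless case of $\search(v)$ itself, \fullref{lem:fruitless-monotonicity} gives the whole claim at once, since no cascades occur and all barriers are non-decreasing. So only the fruitful case needs the distance argument above.

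The step I expect to need most care is the fruitless-write case inside a fruitful call. There, $b[u]$ may already have been lowered by an earlier cascade before the nested $\search(u)$ was entered, so I must compare against the current value rather than $b_{entry}$. The chronological induction handles this, provided I state it as the invariant ``$b[x] \ge b_{entry}[x]$ holds at every moment during $\search(v)$''. I also need to make sure the inequality $dist_{P'u}(u,t) = dist_{P'}(u,t)$ uses only endpoint exemption, as in the proof of \fullref{lem:update-distance-bound}.
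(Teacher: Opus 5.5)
Your proof is correct, and it treats the nested calls differently from the paper.

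\textbf{The paper's route.} It stays inside the joint induction. It uses clause (iv) for the children to know that each child $\search(w')$ is entered with edge-consistency w.r.t.\ $Pv$. It then applies clause (iii) to each child and chains $b_{exit} \ge b_{entry}$ over the children in loop order. Only the post-loop writes of $\search(v)$ itself are compared against admissibility w.r.t.\ $P$ at entry: the assignment $b[v] \gets sd$ and the drops of its own cascade. Those comparisons are exactly your cases 1 and 2 with $P' = P$.

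\textbf{Your route.} You compare \emph{every} write in the whole subtree directly with $b_{entry}$. You use admissibility w.r.t.\ $P$ only at the entry of $\search(v)$, together with the forbidden-set monotonicity $dist_{P'} \ge dist_P$ for $P \subseteq P'$. Raises are handled by their strict increase plus your chronological invariant.

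\textbf{What each buys.} The Fruitful Lower Bound, the Fruitful Distance Lower Bound and Fruitless Increasing are all proved without any edge-consistency hypothesis. So your argument needs no induction hypothesis about nested calls. It shows that clause (iii) follows from the entry hypothesis of the call alone, and could be removed from the four-way joint induction. It also gives the slightly stronger statement that $b[x] \ge b_{entry}[x]$ holds at every moment during $\search(v)$, not only at exit. The paper's route keeps the argument local to the writes of the current frame and reuses hypotheses the joint induction already supplies. That is why it never needs monotonicity of $dist$ in the forbidden set, only endpoint exemption.
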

\begin{proof}
    Within the joint induction; the induction hypothesis holds for every nested call.
    The execution of $\search(v)$ writes barriers only through (i) its child calls,
    (ii) the post-loop assignment at $v$, and, in the fruitful case, (iii) the
    cascade $\fruitful(v, sd)$.

    \emph{For-loop.}
    On entry $v$ is pushed and edge-consistency w.r.t.\ $Pv$ holds; a recursively
    searched successor $w'$ is entered with search path $Pvw'$ and, by the
    induction hypothesis (\autoref{lem:search-preserves-consistency}), preserves
    edge-consistency w.r.t.\ $Pv$ across the call. Each child is thus entered with
    the invariant and is, by the induction hypothesis
    (\autoref{lem:per-call-monotone}), globally non-decreasing; chaining over the
    children in order, $b[x] \ge b_{entry}[x]$ for every $x$ when the for-loop completes.

    \emph{Fruitless case.}
    No cascade runs and the only post-loop write is $b[v] \gets k - h + 1$, which
    by \fullref{lem:parent-pruning-guard} exceeds $b_{entry}[v] \le k - h$.

    \emph{Fruitful case.}
    By \fullref{lem:bar-invariant} at entry ($v, t \notin Pv \setminus \{v\}$) with $b[t] = 0$
    (Observation~\ref{obs:target-barrier}) $b_{entry}[v] \le dist_P(v,t) = dist_{Pv}(v,t)$
    (dist is unaffected by whether its first argument lies in the forbidden set),
    and by \fullref{lem:fruitful-lower-bound} the post-loop write is
    $b[v] \gets sd \ge dist_{Pv}(v,t)$; so $b[v]$ does not decrease.
    The cascade then writes only nodes $u \notin Pv$, each assignment being
    $b[u] \gets d + 1 \ge dist_P(u,t)$ by \fullref{lem:update-distance-bound},
    while entry edge-consistency and \fullref{lem:bar-invariant} give
    $b_{entry}[u] \le dist_P(u,t)$.
    Hence, no node barrier falls below its entry value.
\end{proof}

\begin{lemma}[Search Preserves]\label{lem:search-preserves-consistency}
    Consider a call $\search(v)$ with search path $Pv = (s = v_0, \dots, v_h = v)$
    at which edge-consistency w.r.t.\ $P$ holds at entry.
    Then, edge-consistency w.r.t.\ $P$ holds at exit.
\end{lemma}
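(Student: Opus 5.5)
We argue within the joint induction on the search recursion: the induction hypothesis supplies all four clauses for every call nested strictly within $\search(v)$. The first step is to show that edge-consistency w.r.t.\ $Pv$ holds when the for-loop completes. Pushing $v$ only enlarges the exempt set. Each recursively searched successor $w'$ is entered with search path $Pvw'$, and by the induction hypothesis (this lemma for $\search(w')$) it returns with edge-consistency w.r.t.\ $Pv$. Pruned successors and the closing edge write no barrier. After the loop we split into the fruitless and fruitful cases. In each case the forbidden set shrinks from $Pv$ to $P$, so the edges newly constrained are exactly those incident to $v$ with the other endpoint off $P$. All other edges keep their status, and any barrier changes on them must be checked.

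\emph{Fruitless case.} The only write is $b[v] \gets k-h+1$; no cascade runs. Edges not incident to $v$ are untouched and stay consistent. For an outgoing edge $(v,y)$ with $y \notin P$, we need $k-h+1 \le b[y]+1$, i.e.\ $b[y] \ge k-h$. The point is that the fruitless call tested $y$ during its loop. If $y$ was pruned, then $b[y]+h \ge k$ at that moment. If it was not pruned, then $\search(y)$ ran and returned fruitless (by \fullref{lem:fruitless-monotonicity}), setting $b[y] = k-(h+1)+1 = k-h$. ($y=t$ is impossible, since it would have produced output; $y \in Pv\setminus P$ means $y=v$, which is excluded by simplicity.) Barriers only increase during a fruitless call (\fullref{lem:fruitless-monotonicity}), so the bound $b[y]\ge k-h$ persists to exit. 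Incoming edges $(x,v)$ are harmless: $b[v]$ increased (\fullref{lem:fruitless-increasing}), and $b[x]$ is constrained only from above by $b[v]+1$. So we need an upper bound on $b[x]$, namely $b[x] \le k \le$ \dots. This does not follow from size alone. Instead we use entry consistency: at entry of $\search(v)$, with $v \notin P$, the edge $(x,v)$ was constrained, so $b_{entry}[x] \le b_{entry}[v]+1 \le b[v]+1$. The remaining question is whether $b[x]$ itself rose during the call. It can only rise through a fruitless assignment when $x$ was pushed, which sets $b[x] = k-h' + 1$ with $h' \ge h+1$, hence $b[x] \le k-h = b[v]-1$. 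So consistency holds.

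\emph{Fruitful case.} Here the for-loop state is consistent w.r.t.\ $Pv$, and $b[v] \gets sd$ then the cascade runs. This is exactly the hypothesis of \fullref{lem:cascade-distance}: writing $b[v]$ is allowed because $v$ is exempt w.r.t.\ $Pv$. \fullref{lem:update-repairs-consistency} then gives consistency w.r.t.\ $P$ for all edges except those leaving $v$. For an edge $(v,y)$ with $y \notin P$, we need $sd \le b[y]+1$. By \fullref{lem:strict-bar-invariant}, $sd = dist_{Pv}(v,t)$. At exit the labeling is consistent w.r.t.\ $P$ on all other edges, but we cannot invoke admissibility of $b[y]$ from above; we need a \emph{lower} bound on $b[y]$. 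The plan is to show $b[y] \ge dist_{Pv}(y,t)$, as follows. If $y$ was written by the cascade, \fullref{lem:update-distance-bound} gives $b[y] \ge dist_P(y,t) \ge dist_{Pv}(y,t)$; the latter inequality fails in the direction we need, so we instead use the equality of \fullref{lem:cascade-distance}, $b[y] = sd + dist_{Pv}(y,v) \ge sd$. If $y$ was not written by the cascade, we use $y$'s treatment in the loop. If $y$ was pruned, $b[y] \ge k-h \ge sd$ by \fullref{cor:fruitful-return}. If $\search(y)$ ran and was fruitless, $b[y]=k-h \ge sd$. If it ran and was fruitful, $b[y] = dist_{Pvy}(y,t) \ge dist_{Pv}(v,t)-1 = sd-1$ by the triangle inequality. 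In every case we must also check that later children did not lower $b[y]$. Later children are nested calls whose cascades lower only values to exact restricted distances $\ge dist(\cdot,t)$ lower bounds; \fullref{lem:per-call-monotone} for the children (induction hypothesis) shows barriers are non-decreasing across each child, so the loop-time bound persists.

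The main obstacle is the outgoing-edge obligation $b[v] \le b[y]+1$. It requires these case distinctions on how $y$ was handled, together with the persistence of lower bounds across later children via per-call monotonicity. It also requires care that $y=t$ is trivial in the fruitful case: $sd=1 \le b[t]+1$.
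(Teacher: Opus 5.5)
Your proof is correct and follows the paper's argument essentially step for step. It uses the induction hypothesis for the push and the for-loop. The fruitless case rests on the pruned/fruitless-child bound $b[y]\ge k-h$ and on nested raises of at most $k-h$ at incoming tails. The fruitful case rests on \autoref{lem:update-repairs-consistency} plus a case analysis on how $y$ was treated in the loop, kept in force by \autoref{lem:per-call-monotone} for later children.

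The only deviations are local to edges leaving a fruitful $v$. You bound a cascade-written $b[y]$ by the equality $b[y]=sd+dist_{Pv}(y,v)\ge sd$ of \autoref{lem:cascade-distance} rather than by \autoref{lem:update-distance-bound}. You handle the fruitful-child case through the child's exact value rather than the loop minimum $sd\le sd_y+1$. Your announced goal $b[y]\ge dist_{Pv}(y,t)$ should instead read $b[y]\ge sd-1$, which is what your cases actually establish; the stronger form is false in general.
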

\begin{proof}
    Within the joint induction; the induction hypothesis holds for every nested call,
    and \fullref{lem:strict-bar-invariant} and \fullref{lem:per-call-monotone}
    are available for $\search(v)$ itself.

    \emph{Push and for-loop.}
    On entry $v$ is pushed, turning $P$ into $Pv$; pushing only enlarges the set of
    exempt nodes, so edge-consistency w.r.t.\ $Pv$ holds. A pruned successor and the closing-edge
    branch write no barrier, while a recursively searched successor $w'$ is entered with search path
    $Pvw'$ and, by the induction hypothesis, preserves edge-consistency w.r.t. $Pvw' \setminus \{w'\} = Pv$.
    Hence, edge-consistency w.r.t.\ $Pv$ holds when the for-loop completes.

    An edge $(x,y)$ is constrained by edge-consistency w.r.t.\ $P$
    exactly when $\{x,y\} \cap P = \emptyset$; such edges are of
    three kinds:
    \begin{enumerate}
        \item[(I)] edges with $v \notin \{x,y\}$;
        \item[(II)] incoming edges $(x,v)$ with $x \notin Pv$;
        \item[(III)] outgoing edges $(v,y)$ with $y \notin Pv$.
    \end{enumerate}
    Popping $v$ writes no barrier, so $b_{exit}$ denotes the values after the post-loop assignment
    and, in the fruitful case, after the cascade.

    \textbf{Fruitful case ($sd \le k$).}
    The assignment $b[v] \gets sd$ is made, $\fruitful(v, sd)$ runs, and $b_{exit}[v] = sd$ (the
    cascade does not write $b[v]$, as $v \in Pv$ throughout it, Observation~\ref{obs:path-barrier-fixed}).

    \emph{(I) and (II).}
    When the for-loop completes, edge-consistency w.r.t.\ $Pv$ holds; $b[v] \gets sd$ changes only
    $b[v]$, affecting only edges incident to $v$, which are exempt w.r.t.\ $Pv$, so edge-consistency
    w.r.t.\ $Pv$ still holds when $\fruitful(v,sd)$ is invoked. By
    \fullref{lem:update-repairs-consistency}, at the cascade's end edge-consistency w.r.t.\ $P$
    holds everywhere except possibly on edges leaving $v$;
    edges of kind (I) and (II) do not leave $v$, hence are consistent at exit.

    \emph{(III).}
    Let $(v,y)$ be an edge with $y \notin Pv$.
    If $y = t$: as $\search(v)$ is fruitful, $1 \le sd \le k - h$ (\fullref{cor:fruitful-return}),
    so $h < k$ and the closing edge $(v,t)$ passes its pruning test ($b[t] + h = h < k$), firing and
    setting $sd \gets 1$; with all return values $\ge 1$, $sd = 1$. With $b[t] = 0$
    (Observation~\ref{obs:target-barrier}), $b_{exit}[v] = 1 = b[t] + 1$.

    Now let $y \ne t$. By \fullref{lem:strict-bar-invariant} and
    endpoint-exemption, $sd = dist_{Pv}(v,t) = dist_P(v,t)$. The
    edge $(v,y)$ gives $dist_P(v,y) \le 1$, and
    \fullref{lem:dist-triangle} (intermediate $y \notin Pv \setminus \{v\}$) yields
    $sd \le 1 + dist_P(y,t)$, so $dist_P(y,t)
        \ge sd - 1$. We show $b[y] \ge sd - 1$ from the moment the for-loop finishes
    processing $y$ until exit.

    \emph{Base.}
    If $y$ was pruned, $b[y] + h < k$ failed, so $b[y] \ge k - h \ge sd$
    (\fullref{cor:fruitful-return}). If $y$ was recursively searched, then right after the child
    $\search(y)$ (depth $h+1$) returns, $b[y] = k - h \ge sd$ if that child was fruitless, or
    $b[y] = sd_y$ if it was fruitful, where $sd \le sd_y + 1$ (the for-loop minimization), so
    $b[y] = sd_y \ge sd - 1$. In all cases $b[y] \ge sd - 1$ when the for-loop finishes with $y$.

    \emph{Preservation.}
    After the for-loop finishes processing $y$, every further write to $b[y]$
    occurs either within a later child call of $\search(v)$ or in the post-loop
    cascade $\fruitful(v, sd)$; in particular, a cascade $\fruitful(z, sd_z)$ with
    $z$ a strict descendant of $v$ runs entirely within the child subtree
    containing $z$. Each later child is entered with edge-consistency w.r.t.\ $Pv$
    (as above) and is, by the induction hypothesis
    (\autoref{lem:per-call-monotone}), globally non-decreasing, so
    $b[y] \ge sd - 1$ persists through the end of the for-loop.
    The post-loop cascade writes $b[y] \gets d + 1 \ge dist_P(y,t) \ge sd - 1$ by \fullref{lem:update-distance-bound}.
    Hence, $b_{exit}[y] \ge sd - 1$, i.e.\ $b_{exit}[v] = sd \le b_{exit}[y] + 1$.

    \textbf{Fruitless case ($sd = k + 1$).}
    Then $b_{exit}[v] = k - h + 1$, and by \fullref{lem:fruitless-monotonicity} no cascade runs
    within $\search(v)$ and every barrier is non-decreasing throughout it.

    \emph{(I).}
    When the for-loop completes, edge-consistency w.r.t.\ $Pv$ holds; $b[v] \gets k - h + 1$ changes
    only $b[v]$, so every edge $(x,y)$ with $v \notin \{x,y\}$ and $\{x,y\} \cap Pv = \emptyset$ keeps
    $b[x] \le b[y] + 1$, and for such edges $\{x,y\} \cap Pv = \emptyset$ coincides with
    $\{x,y\} \cap P = \emptyset$.

    \emph{(II).}
    Let $(x,v)$ be an edge with $x \notin Pv$. At entry it is constrained w.r.t.\ $P$,
    so $b_{entry}[x] \le b_{entry}[v] + 1 \le (k - h) + 1$ (\fullref{lem:parent-pruning-guard}).
    During $\search(v)$, $b[x]$ changes only through a fruitless $\search(x)$ nested within it (depth
    $h_x \ge h + 1$, assigning $b[x] \gets k - h_x + 1 \le k - h$); so $b[x]$ never exceeds $k - h + 1$,
    giving $b_{exit}[x] \le k - h + 1 = b_{exit}[v] \le b_{exit}[v] + 1$.

    \emph{(III).}
    Let $(v,y)$ be an edge with $y \notin Pv$; we show $b_{exit}[y] \ge k - h$, which gives
    $b_{exit}[v] = k - h + 1 \le b_{exit}[y] + 1$. In the for-loop $y$ was pruned or searched.
    If pruned, $b[y] + h < k$ failed, so $b[y] \ge k - h$. If searched, the child $\search(y)$ has
    depth $h + 1$ and, being fruitless, assigns $b[y] \gets k - (h+1) + 1 = k - h$. Either way
    $b[y] \ge k - h$ at some point, and since barriers are non-decreasing, $b_{exit}[y] \ge k - h$.

    In all cases edge-consistency w.r.t.\ $P$ holds at exit.
\end{proof}

\subsection{Completeness of $\bsdfs$}

We have seen that edge-consistency w.r.t.\ the current search path holds throughout the execution of $\bsdfs(G,s,t,k)$.
This is the global reading of \fullref{lem:search-preserves-consistency}:
starting from $b \equiv 0$ (edge-consistent w.r.t.\ the empty path), the single call $\search(s)$ is
entered with the invariant, and the lemma shows each call preserves it w.r.t.\
the path after its node is popped; chaining over the call tree gives the invariant at every point.
We close with the main theorems in this section.

\begin{theorem}[$\bsdfs$ Completeness]\label{thm:bsdfs-complete}
    The algorithm $\bsdfs(G, s, t, k)$ produces every simple $s$-$t$-path of length $\le k$ in $G$.
\end{theorem}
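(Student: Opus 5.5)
The plan is to reduce the theorem to \fullref{lem:call-complete} applied to the root call $\search(s)$. That lemma asserts that a call $\search(v)$ entered with edge-consistency w.r.t.\ its prefix $P$ produces every simple $s$-$t$-path of length $\le k$ having $Pv$ as a prefix. For the root call $v = s$, the prefix is $P = \langle\rangle$, and every simple $s$-$t$-path has $(s)$ as a prefix. So it suffices to verify the entry hypothesis for $\search(s)$ and to confirm that the joint induction underlying the lemma is legitimately available.

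\emph{Entry hypothesis.} $\bsdfs(G,s,t,k)$ initializes $b[x] \gets 0$ for all $x$ and sets $S \gets \langle\rangle$ before calling $\search(s)$. For every edge $(x,y)$ we then have $b[x] = 0 \le 0 + 1 = b[y] + 1$, so $b$ is edge-consistent w.r.t.\ $\emptyset = P$. Observation~\ref{obs:target-barrier} gives $b[t] = 0$ whenever the target test is read, which \fullref{lem:pruning-is-permissive} needs. This is guaranteed because $s \ne t$ in the path setting; the cycle case reduces to it via \fullref{lem:cycle-reduction}.

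\emph{Validity of the joint induction.} I will state explicitly that clauses (i)--(iv) (Lemmas~\ref{lem:call-complete}, \ref{lem:strict-bar-invariant}, \ref{lem:per-call-monotone}, \ref{lem:search-preserves-consistency}) are proved simultaneously by induction over the call tree. The call tree is finite by \fullref{lem:termination}, so well-founded induction on the height of a call's subtree applies. Leaves have no nested calls, so their induction hypothesis is vacuous. Each lemma uses only the hypotheses for strictly nested calls, except that Lemma~\ref{lem:search-preserves-consistency} also uses Lemmas~\ref{lem:strict-bar-invariant} and~\ref{lem:per-call-monotone} for the same call. Those two do not depend on Lemma~\ref{lem:search-preserves-consistency} for that call, so there is no circularity. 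In addition, every nested call is itself entered with edge-consistency w.r.t.\ its own prefix, by the for-loop argument inside the lemmas. Hence all four clauses hold for every call, in particular the root, and applying Lemma~\ref{lem:call-complete} to $\search(s)$ completes the proof.

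The main obstacle is exactly this well-foundedness and non-circularity check. The concrete bookkeeping is that, within a single call, the intra-call dependencies among the four lemmas form an acyclic order: (i), (ii) and (iii) use only nested calls, and (iv) uses (ii) and (iii) of the same call. The induction on subtree height then goes through. Combined with soundness (\fullref{thm:bsdfs-soundness}) and Observation~\ref{obs:no-duplicate-output}, this also yields that $\bsdfs$ enumerates each qualifying path exactly once, a remark I would add after the proof.
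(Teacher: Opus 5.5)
Your proposal is correct and follows the paper's proof: the root call $\search(s)$ is entered with $b \equiv 0$, which is edge-consistent w.r.t.\ the empty prefix, and \autoref{lem:call-complete} then yields every simple $s$-$t$-path of length $\le k$, since each has $(s)$ as a prefix. Your explicit check that the four-clause joint induction is well-founded and non-circular is not part of the paper's proof of this theorem; the paper states that point once, in the preamble to the edge-consistency subsection.
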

\begin{proof}
    The initial call $\search(s)$ has search path $(s)$ and is entered with the labeling
    $b \equiv 0$, edge-consistent w.r.t.\ $(s) \setminus \{s\} = \emptyset$.
    Every simple $s$-$t$-path of length $\le k$ has $(s)$ as a prefix,
    so by \fullref{lem:call-complete} it is produced by $\search(s)$.
\end{proof}

Together with \fullref{thm:bsdfs-soundness} and
Observation~\ref{obs:no-duplicate-output}, this settles the enumeration
property: $\bsdfs(G,s,t,k)$ outputs simple $s$-$t$-paths of length $\le k$
only, outputs every one of them, and outputs none of them twice.

\section{Delay Bounds}\label{sec:delay-bounds}

For delay analysis, the entire $\bsdfs(G,s,t,k)$ execution timeline is partitioned
into an alternating sequence of \emph{phases}, delimited by the outputs and
by the returns of certain calls between two outputs. We first define
the phases and establish how individual $\search$ calls relate to them;
the counting then rests on a single fact --- within one phase, no node is
entered twice at the same depth --- from which the barrier-write bounds, the
call bounds, and both delay theorems follow. Coarser decompositions ---
\emph{intervals} between consecutive outputs and \emph{periods} between
consecutive phase boundaries --- are introduced when the accounting
requires them.

\subsection{Phases}\label{sec:phases}

Let the execution output $T$ paths and, for $\tau = 1, 2, \dots, T$, let
$S_\tau$ be the $\tau$-th path output and $o_\tau$ the execution step at
which $S_\tau$ is output. We call $S_\tau$ a \emph{spine} and write
$S_\tau = (s = v_0, v_1, \dots, v_l = t)$, suppressing the index $\tau$
on nodes and length when it is clear from context.

Consider the execution around $o_\tau$. The target test in
$\search(v_{l-1})$ recognizes the successor $t$ and the spine path is
output; no call frame for $t$ is created. Hence, the frames open at
$o_\tau$ are exactly $v_0, \dots, v_{l-1}$, at depths $0, \dots, l-1$.
$\search(v_{l-1})$ continues enumerating the successors of $v_{l-1}$ after
$t$ in its successor list and recurses into those that pass the pruning
condition.

It is convenient to bracket the execution:
Let $\beta_0 = o_0 :=$ the start of the execution and $o_{T+1} :=$ its
termination, the step immediately after the initial call $\search(s)$ has
returned, and set $S_0 := S_{T+1} := ()$, the empty spine. This is a
convention of the analysis, not a statement of the algorithm: neither $o_0$
nor $o_{T+1}$ produces a path, both cost no elementary step, and both lie in
the lifetime of no call --- in particular they render no call fruitful.
We call $o_0, o_1, \dots, o_{T+1}$ the \emph{events} of the execution;
the first is the start, the next $T$ are the outputs, the last is the termination.

Let $1 \le \tau \le T$ and let $S_{\tau+1}$ be the next spine. The two node
sequences have a longest common prefix $(v_0, \dots, v_{j})$, $j = j_\tau$;
we call $v_{j}$ the \emph{fork} of $S_\tau$ and $S_{\tau+1}$.
For $\tau < T$ both spines start at $s$, so $j_\tau \ge 0$; for $\tau = T$ the
next spine is empty, hence so is the common prefix, and we write
$j_T = -1$.
As long as the
current recursion level is $i > j$, all sibling searches emanating from
$\search(v_i)$ must, by definition of the fork, end fruitless. The final
step of each $\search(v_i)$ is its cascade $\fruitful(v_i, sd_i)$, since
$\search(v_i)$ itself is fruitful: it led to $S_\tau$.

Set $w_\tau := v_{j_\tau+1}$, the child of the fork on $S_\tau$, and
let $\beta_\tau$ be the moment after $o_\tau$ when $\search(w_\tau)$
returns; its cascade, run before returning, thus lies before $\beta_\tau$.
For $\tau < T$ it returns to its caller $\search(v_{j_\tau})$; for
$\tau = T$ we have $w_T = v_{j_T + 1} = v_0 = s$, so $\beta_T$ is the return
of the initial call --- the terminal unwinding read as the retreat towards
an empty next spine.
In the degenerate case $j_\tau = l - 1$ we have $w_\tau = t$, which has no
$\search$ call; we then set $\beta_\tau := o_\tau$. Finally, set
$\beta_{T+1} := o_{T+1}$.
By Observation~\ref{obs:search-path-uniqueness}, the frame of $w_\tau$
--- open at $o_\tau$ with search path $(v_0, \dots, v_{j_\tau+1})$ ---
occupies one contiguous span of the execution and returns exactly once,
so $\beta_\tau$ is well defined and $o_\tau \le \beta_\tau$ (equality only
in the degenerate case). Moreover, for $\tau < T$, $\search(v_{j_\tau})$
scans the successor branch containing $o_{\tau+1}$ only after the branch
through $w_\tau$ has returned, so $\beta_\tau < o_{\tau+1}$; and
$\beta_T < o_{T+1}$ by the definition of $o_{T+1}$.
The boundaries are thus ordered as
\[
    \beta_0 = o_0 < o_1 \le \beta_1 < o_2 \le \beta_2
    < \dots < o_T \le \beta_T < o_{T+1} = \beta_{T+1}.
\]
For $\tau = 1, \dots, T+1$, the
half-open segment $(\beta_{\tau-1}, o_\tau]$ is called \emph{ascent} $\tau$ and the
half-open segment $(o_\tau, \beta_\tau]$ \emph{descent} $\tau$; a descent may be
empty (the degenerate case above, and always for $\tau = T+1$), and so is
descent $0 := (o_0, \beta_0]$. The phases tile the execution;
each ascent ends with its event, and no descent contains an event.
The names resemble the way in which the spine $S_\tau$ transforms to spine $S_{\tau+1}$,
disregarding search path oscillations by intermediate fruitless sibling searches.

Edge cases. For $T > 0$ the terminal descent $T$ runs down to the return of
$\search(s)$, closing every spine frame of $S_T$; ascent $T+1$ then carries
the terminal event and nothing else. If $j_\tau = 0$, descent $\tau$
retreats to the root frame without closing it. If $T = 0$, the execution
consists of ascent $1$ alone, every call in it is fruitless, and its
descent is empty. Sums indexed by an empty spine are zero.

The next lemmas relate individual calls to the phases;
they carry the entire counting below.

\begin{lemma}[Descent Stack]\label{lem:descent-stack}
    Let $1 \le \tau \le T$.
    \begin{enumerate}
    \item At $\beta_\tau$ the open frames are exactly the
          frames $(v_0, \dots, v_{j_\tau})$ of the common prefix of
          $S_\tau$ and $S_{\tau+1}$, each pushed with its spine search
          path; for $\tau = T$ that prefix is empty, so no frame is open
          at $\beta_T$, and none is open at $\beta_0$ either.
    \item Throughout descent $\tau$, the search path has the form
          $(v_0, \dots, v_i, q_1, \dots, q_r)$: a prefix of $S_\tau$
          with $i \ge j_\tau + 1$, followed by
          $r \ge 0$ nodes whose frames were entered after $o_\tau$.
    \end{enumerate}
\end{lemma}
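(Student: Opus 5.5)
The plan is to read both parts off the LIFO discipline of the $\search$ call stack, using only the facts fixed in \autoref{sec:phases}: at $o_\tau$ the open frames are exactly $v_0,\dots,v_{l-1}$, each pushed with its spine search path; the frame of $w_\tau = v_{j_\tau+1}$ is open at $o_\tau$ and returns exactly once, at $\beta_\tau$ (Observation~\ref{obs:search-path-uniqueness}); and a frame, once popped, is never reopened with the same search path.

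For part 2, fix a moment $\mu$ in descent $\tau$, so $o_\tau < \mu \le \beta_\tau$, and let $i$ be the largest index such that the frame of $v_i$, opened before $o_\tau$ with search path $(v_0,\dots,v_i)$, is still open at $\mu$. Since the stack is LIFO, the frames of $v_0,\dots,v_i$ are then all open. The index satisfies $i \ge j_\tau+1$ because the frame of $w_\tau$ closes only at $\beta_\tau$. The endpoint $\mu = \beta_\tau$ needs a convention: either count the return instant as the last moment of the frame, or treat $\beta_\tau$ itself under part 1. In the degenerate case $j_\tau = l-1$ the descent is empty and there is nothing to prove. Every frame above $v_i$ on the stack at $\mu$ is one of two kinds. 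It cannot be the old frame of $v_{i+1}$, by maximality of $i$. It cannot be any other frame open at $o_\tau$ either, since those are exactly the spine frames, and by LIFO the frames above $v_i$ that were open at $o_\tau$ are $v_{i+1},\dots,v_{l-1}$, all popped before $v_{i+1}$'s. Hence every frame above $v_i$ was pushed after $o_\tau$; call these nodes $q_1,\dots,q_r$.

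For part 1, at $\beta_\tau$ (with $\tau < T$) the call $\search(w_\tau)$ has just returned to its caller $\search(v_{j_\tau})$. By the LIFO order all frames pushed after $w_\tau$'s have already closed. The frames below it are the spine frames $v_0,\dots,v_{j_\tau}$ with their spine search paths, and they are still open because their lifetimes contain $w_\tau$'s. These form the common prefix of $S_\tau$ and $S_{\tau+1}$ by definition of the fork. For $\tau=T$ we have $w_T=s$, so $\beta_T$ is the return of the root call and no frame remains open. At $\beta_0$ (the start) no call has yet been made.

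I expect the main obstacle to be bookkeeping rather than mathematics. The statement must be formalized so that frames are identified by their search paths, not by their nodes, because a node $v_i$ may be re-entered after $o_\tau$ with a different search path, and that new frame belongs among the $q$'s. The boundary conventions at $\beta_\tau$ and the degenerate case must also be handled consistently.
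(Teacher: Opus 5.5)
Your argument is correct and is essentially the paper's: the spine frames open at $o_\tau$ close in order of decreasing index, the deepest one $v_i$ still open has $i \ge j_\tau+1$ because the frame of $w_\tau$ closes only at $\beta_\tau$, every frame above it was pushed after $o_\tau$, and part~1 is read off just after $\search(w_\tau)$ returns. The one case you leave implicit is part~1 when $j_\tau = l-1$: there $w_\tau = t$ has no frame that could return, but $\beta_\tau = o_\tau$ and the stack $(v_0,\dots,v_{l-1})$ is already the common prefix, exactly as the paper notes.
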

\begin{proof}
    The frames open at $o_\tau$ are the spine frames
    $(v_0, \dots, v_{l-1})$, each pushed with its spine search path, and
    by nesting they return in order of decreasing index. The frame of
    $w_\tau$ returns only at $\beta_\tau$, so throughout descent $\tau$
    the deepest open spine frame $v_i$ has $i \ge j_\tau + 1$, the frames
    below it are exactly $(v_0, \dots, v_{i-1})$ --- its search path ---
    and every frame above it was entered after $o_\tau$; this is (2).
    For (1): just after $\search(w_\tau)$ returns, the open frames are
    those below it, $(v_0, \dots, v_{j_\tau})$, which are the frames of
    the common prefix by the definition of the fork. In the degenerate
    case, the stack at $o_\tau = \beta_\tau$ is $(v_0, \dots, v_{l-1})$,
    which equals the common prefix since $j_\tau = l - 1$. At $\beta_0$
    no call has been entered, and at $\beta_T$ the initial call
    $\search(s) = \search(w_T)$ has returned, leaving no frame open ---
    the case $j_T = -1$, under which (2) reads $i \ge 0$.
\end{proof}

\begin{lemma}[Ascent Purity]\label{lem:phase-a-purity}
    No fruitful return, and hence no cascade, occurs in an ascent.
\end{lemma}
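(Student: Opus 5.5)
The plan is to argue by contradiction through the frame structure of ascent $\tau = (\beta_{\tau-1}, o_\tau]$. Suppose some call $\search(x)$ returns fruitful at a moment $\rho$ inside ascent $\tau$. A fruitful return presupposes an output within the lifetime of $\search(x)$: the return value is $\le k$ only if either the closing edge to $t$ fired in $\search(x)$ itself, or some child returned fruitful. Unwinding this recursion (as in \fullref{lem:fruitless-monotonicity}, by induction on the recursion) gives an output produced during the lifetime of $\search(x)$. In other words, some event $o_\sigma$ with $1 \le \sigma \le T$ lies strictly inside the span from the entry of $\search(x)$ to its return $\rho$. Since $o_\sigma < \rho \le o_\tau$ and ascent $\tau$ contains only the event $o_\tau$ (at its right end, and an output is not a return), we get $\sigma \le \tau-1$. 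Hence $\search(x)$ was entered before $o_{\tau-1}$ and was still open after $\beta_{\tau-1}$.

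Next I identify the frames that are open at $\beta_{\tau-1}$. By \fullref{lem:descent-stack}(1), these are exactly the frames of the common prefix $(v_0,\dots,v_{j_{\tau-1}})$ of $S_{\tau-1}$ and $S_\tau$. The case $\tau = 1$ is immediate, since no frame is open at $\beta_0$, so every call returning in ascent $1$ is entered after $\beta_0$ and contains no earlier event. For $\tau \ge 2$, $\search(x)$ must therefore be one of the spine frames $v_i$ with $i \le j_{\tau-1}$. Each such frame lies on the common prefix and is also a prefix frame of $S_\tau$, because the fork frames persist into $S_\tau$. So this frame is still open at $o_\tau$, which means it has not returned by the end of ascent $\tau$. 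This contradicts $\rho \le o_\tau$, so no fruitful return occurs in an ascent. The same holds for the cascade, which is invoked only in a fruitful return, immediately before that return.

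The main obstacle is the frame-identification step: I need to justify that a frame open at $\beta_{\tau-1}$ which lies on the common prefix is exactly the frame that is still open at $o_\tau$, and not some later re-entry of the same node. I would handle this with Observation~\ref{obs:search-path-uniqueness}. The frame of $v_i$ with search path $(v_0,\dots,v_i)$ occupies one contiguous span. Since $S_\tau$ has this prefix, the output $o_\tau$ must occur while this frame is open, and the frame is already open at $\beta_{\tau-1} < o_\tau$, so it is the same frame. A minor point to check is the degenerate descent $\beta_{\tau-1} = o_{\tau-1}$. There, the open frames are $(v_0,\dots,v_{l-1})$, which equals the common prefix, so the argument goes through unchanged.
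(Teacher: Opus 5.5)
Your proof is correct and follows essentially the same route as the paper's. The paper splits on whether the returning call was entered after $\beta_{\tau-1}$ (then its lifetime contains no output, so it is fruitless) or at or before it (then it is a common-prefix frame by \autoref{lem:descent-stack}(1), still open at $o_\tau$); you combine the same ingredients into a single contradiction, using Observation~\ref{obs:search-path-uniqueness} to make explicit the frame identification the paper states in one line, and the case $\tau = T+1$ is covered by the empty common prefix ($j_T = -1$).
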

\begin{proof}
    Consider a return at a time $\theta \in (\beta_{\tau-1}, o_\tau)$. If
    the call was entered after $\beta_{\tau-1}$, its lifetime contains no
    output, since none lies in $(\beta_{\tau-1}, o_\tau)$; so the call is
    fruitless. If it was entered at or before $\beta_{\tau-1}$, its frame
    is open at $\beta_{\tau-1}$, hence by
    \autoref{lem:descent-stack}(1) it is one of
    the common-prefix frames of $S_{\tau-1}$ and $S_\tau$ --- for
    $\tau = 1$ and $\tau = T+1$, no frame is open at $\beta_{\tau-1}$ and
    this case is void. All
    common-prefix nodes lie on $S_\tau$, so their frames are on the stack
    at $o_\tau$ and do not return before $o_\tau$. Finally, no return
    occurs at $o_\tau$ itself: output is produced inside a for-loop.
\end{proof}

\begin{corollary}[Call--Phase Correspondence]\label{cor:call-phase}
    \hspace*{0pt} 
    \begin{enumerate}
    \item Every call opened during a descent returns fruitless within
          that descent.
    \item Every fruitless call lies within a single phase.
    \item Every fruitful call is opened in an ascent and returns in a
          descent; it is open at every output and every boundary
          $\beta_\sigma$ strictly inside its lifetime. In particular, a
          fruitful call spans at least the ascent and the descent
          surrounding its last output, and the initial call $\search(s)$
          is open from $\beta_0$ to $\beta_T$.
    \end{enumerate}
\end{corollary}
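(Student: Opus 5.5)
The plan is to derive all three parts from \autoref{lem:phase-a-purity} and \autoref{lem:descent-stack}, together with one basic fact: a call is fruitful exactly when its lifetime contains an output. One direction is immediate, since $sd \le k$ requires either $sd \gets 1$ at a closing edge, which is an output, or a fruitful child. The other direction follows from \autoref{lem:fruitless-monotonicity} and the fact that producing an output sets $sd \gets 1$. Nested calls also have contiguous lifetimes (Observation~\ref{obs:search-path-uniqueness}), which we use throughout.

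For (1), we take a call opened at a time in $(o_\tau, \beta_\tau]$. By \autoref{lem:descent-stack}(2) its frame sits above the spine frame of some $v_i$ with $i \ge j_\tau+1$, so it is nested inside $\search(w_\tau)$. It therefore returns no later than $\beta_\tau$. Descent $\tau$ contains no event, so the call's lifetime contains no output, and the call is fruitless. If the call opened at $\beta_\tau$ itself, the instant $\beta_\tau$ is the return of $\search(w_\tau)$, so nothing can be opened there; we handle this as a boundary remark.

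For (2), we take a fruitless call. Its lifetime contains no output, so it lies in an interval between consecutive events $o_{\tau-1}$ and $o_\tau$. The only boundary inside that interval is $\beta_{\tau-1}$. If the call opened in descent $\tau-1$, part (1) shows it closes there. If it opened in ascent $\tau$, it returns before $o_\tau$, so it also stays within one phase.

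For (3), we take a fruitful call. By \autoref{lem:phase-a-purity} its return cannot lie in an ascent, so it lies in a descent. By part (1) a call opened in a descent is fruitless, so a fruitful call must be opened in an ascent. Any output or boundary strictly inside its lifetime is passed while the frame is open, since lifetimes are contiguous.

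For the ``in particular'' clause, let $o_\tau$ be the call's last output. The call returns after $o_\tau$ and in a descent. No later output lies inside its lifetime, so the return falls in descent $\tau$; this means the call spans past $o_\tau$ into descent $\tau$. Its opening lies in some ascent $\sigma \le \tau$, and since $o_\tau$ lies in ascent $\tau$, the lifetime covers the tail of ascent $\tau$. The statement ``spans at least the ascent and descent'' has to be read in this sense, as covering the part of ascent $\tau$ up to $o_\tau$. If the call was opened before $\beta_{\tau-1}$, the whole of ascent $\tau$ is covered, and the lifetime is an interval containing points of both phases.

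The initial call opens at $\beta_0$ and returns at $\beta_T$, by the definition of $\beta_T$ when $T>0$ (and vacuously when $T=0$).

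The main obstacle will be the boundary conventions: the half-open segments, the degenerate case $\beta_\tau = o_\tau$, and the case $\tau = T$ with $j_T=-1$. I would state these explicitly as needed rather than expecting them to fall out of the general argument.
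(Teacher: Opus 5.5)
Your proposal is correct and follows essentially the paper's route: \autoref{lem:descent-stack} and \fullref{lem:phase-a-purity}, together with the contiguity of call lifetimes, yield the three parts in the same order. The differences are minor. You state explicitly the characterization ``fruitful iff its lifetime contains an output,'' which the paper uses tacitly. For (1), you argue that a new call is nested under $\search(w_\tau)$ via part (2) of \autoref{lem:descent-stack}, whereas the paper observes that the frames open at $\beta_\tau$ are common-prefix frames via part (1); both come to the same thing.
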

\begin{proof}
    (1) A frame open at $\beta_\tau$ is a common-prefix frame
    (\autoref{lem:descent-stack}(1)), which is open at $o_\tau$
    already and hence was not entered during descent $\tau$. So a call
    opened in descent $\tau$ returns within it; its lifetime contains no
    output, since no descent contains one; so it is fruitless.

    (2) A fruitless call is open at no output --- the frames open at
    $o_\tau$ lie on the output path and are fruitful --- and at no
    boundary $\beta_\tau$, by the prefix argument of (1). Its lifetime
    therefore crosses no phase boundary.

    (3) A fruitful call is open at some output, so by (1) it was not
    opened in a descent, and by \fullref{lem:phase-a-purity} it does not
    return in an ascent. Openness at intermediate boundaries follows
    since a call's lifetime is one contiguous span.
\end{proof}

\begin{lemma}[Descent Origins]\label{lem:spine-tail-origins}
    Every fruitful return in descent $\tau$ occurs at a spine frame of
    $S_\tau$: at a node $v_i \in S_\tau \setminus \{t\}$ whose live
    search path is $(v_0, \dots, v_i)$, with $i > j_\tau$.
    In particular, each node performs at most one fruitful return, and
    hence at most one origin dequeue, per descent.
\end{lemma}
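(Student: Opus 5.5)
We combine \autoref{cor:call-phase}(3) with \autoref{lem:descent-stack}(2). Fix a fruitful return occurring in descent $\tau = (o_\tau, \beta_\tau]$, made by some call $\search(x)$. By \autoref{cor:call-phase}(3) this call was opened in an ascent, so its frame is open throughout the part of its lifetime that overlaps descent $\tau$. In particular it is already open at the instant $o_\tau$, because descent $\tau$ starts immediately after $o_\tau$ and the call was not entered inside the descent (\autoref{cor:call-phase}(1)). The frames open at $o_\tau$ are precisely the spine frames $v_0,\dots,v_{l-1}$ of $S_\tau$, each pushed with its spine search path $(v_0,\dots,v_i)$. So $x = v_i$ for some $i \le l-1$, and $x \ne t$ because no frame is ever created for $t$. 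Its live search path is $(v_0,\dots,v_i)$: frames never change their prefix, and by \autoref{obs:search-path-uniqueness} a frame open at $o_\tau$ is the unique occurrence of that search path.

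Next we show $i > j_\tau$. The frames $v_0,\dots,v_{j_\tau}$ are still open at $\beta_\tau$ by \autoref{lem:descent-stack}(1), so they do not return at any time in $(o_\tau,\beta_\tau]$. Any spine frame returning inside descent $\tau$ therefore has index $i \ge j_\tau+1$. This agrees with \autoref{lem:descent-stack}(2), where the deepest open spine frame always has index at least $j_\tau+1$. In the degenerate case $j_\tau = l-1$ the descent is empty and the claim is vacuous.

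For the ``in particular'' part, the spine $S_\tau$ is a simple path (\autoref{obs:simple-output}), so each node occurs as $v_i$ for at most one index $i$. Each spine frame returns exactly once, since a call's lifetime is one contiguous span. Hence each node makes at most one fruitful return per descent. Each fruitful return calls $\fruitful(v_i,sd_i)$ exactly once, and that cascade dequeues its origin pair $(v_i,sd_i)$ exactly once. By \autoref{lem:phase-a-purity} no cascade runs in an ascent, and every cascade runs inside its initiating call just before that call returns, so every cascade executed during descent $\tau$ has its origin at such a spine frame. Therefore each node is the origin of at most one dequeue per descent.

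The step that needs the most care is the first one: showing that a call returning fruitful inside the descent must already have been open at $o_\tau$ and not merely at some later output. This is exactly where \autoref{cor:call-phase}(1) is needed, together with the fact that a descent contains no output. The remaining steps are bookkeeping.
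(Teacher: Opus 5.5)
Your proof is correct and follows essentially the same route as the paper's proof. A call entered after $o_\tau$ that returns in the descent is fruitless by \autoref{cor:call-phase}(1), so a fruitful returner is open at $o_\tau$ and hence a spine frame; \autoref{lem:descent-stack}(1) then excludes the indices $i \le j_\tau$, and simplicity of the spine gives the per-node bound. The appeal to \autoref{cor:call-phase}(3) is redundant, since part (1) alone forces the call to be open at $o_\tau$.
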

\begin{proof}
    A call entered after $o_\tau$ that returns within descent $\tau$ is
    fruitless (\fullref{cor:call-phase}(1)). A call whose frame is open
    at $o_\tau$ is one of the spine frames $(v_0, \dots, v_{l-1})$, with
    live search path $(v_0, \dots, v_i)$. Of these, the frames with
    $i \le j_\tau$ are still open at $\beta_\tau$
    (\autoref{lem:descent-stack}(1)), so they do not return
    within the descent. A fruitful return in descent $\tau$ is thus a
    return of a spine frame with $i > j_\tau$; distinct spine frames sit
    at distinct nodes.
\end{proof}

\subsection{Attribution of Work}\label{sec:work-attribution}

Every elementary step of the execution belongs to a $\search$
call --- entry and exit bookkeeping, the successor scan of the for-loop,
and the post-loop assignment, in total at most $1 + |\suc(x)|$ steps per
call at $x$ --- or to a cascade --- dequeuing a pair at a node $x$ and
scanning its predecessors, at most $1 + |\pre(x)|$ steps per dequeue ---
or to producing an output path, at most $k + 1$ steps per output.
Each step is counted in the phase in which it is executed. By the
correspondence above, this attribution is unambiguous for all but one
kind of work:

\begin{itemize}
\item \emph{Fruitless calls.} All steps of a fruitless call lie in its
    single containing phase (\fullref{cor:call-phase}(2)).
\item \emph{Cascades.} A cascade runs atomically at a fruitful return,
    hence entirely within one descent (\fullref{lem:phase-a-purity},
    \fullref{lem:spine-tail-origins}) --- at most one per node per
    descent.
\item \emph{Outputs.} Producing $S_\tau$ is the final step of
    ascent $\tau$.
\item \emph{Fruitful calls.} Only here do the steps of one call spread
    over several phases. Each call examines every successor at most once
    during its entire lifetime, so its total work is at most
    $1 + |\suc(x)|$, and we charge a call its \emph{entire} scan cost once
    for each segment in which it executes a step --- a deliberate
    over-charge. For an \emph{interval} the charge is split: a call entered
    in it is counted by the per-phase clause of \fullref{cor:entry-bound},
    and a call already open at the left endpoint $o_\tau$ is a spine frame
    of $S_\tau$, the search path at $o_\tau$ being $S_\tau$ less $t$.
    For a \emph{period} that split is unavailable, the entry bound counting
    fruitless calls only; instead, \emph{every} fruitful call executing a
    step in period $\tau$ is a spine frame of $S_\tau$: it either returns
    within the period, in its descent and hence at a spine frame
    (\fullref{lem:spine-tail-origins}), or is still open at $\beta_\tau$ and
    is then a common-prefix frame (\autoref{lem:descent-stack}(1)), again a
    spine frame. Each segment below therefore carries exactly one spine
    account.
\end{itemize}

\subsection{Barrier Writes}

Consider the sequence of barrier assignments during the execution.
For a node $x$, we classify the assignments to $b[x]$:
\begin{itemize}
\item \emph{raise}:
    $b[x] \gets k - h + 1$, assigned in a fruitless $\search(x)$
    with search path of length $h$;
\item \emph{drop}:
    $b[x] \gets d + 1$, the predecessor write inside some
    cascade call $\fruitful(q, d)$ with $q \in \suc(x)$, under the guard
    $b[x] > d + 1$;
\item \emph{fruitful write}:
    $b[x] \gets sd$, the origin write at the start of a cascade
    $\fruitful(x, sd)$ invoked from a fruitful $\search(x)$.
\end{itemize}
There are no other types of assignments, and no other function call during
$\bsdfs(G, s, t, k)$ visits a node $x$ without assigning $b[x]$.
Note that the order of these writes, due to recursion, can differ from the
function-call order, since the barriers are written in postfix order
before return. Raises occur in both kinds of phases --- fruitless
excursions run during descents as well --- whereas drops and fruitful
writes are confined to descents, at most one fruitful write per node per
descent (\fullref{lem:phase-a-purity}, \fullref{lem:spine-tail-origins}).
Fruitful writes will play no role in the write counting; the following
corollary shows why.

\begin{corollary}[Fruitful Writes Never Lower]\label{cor:fruitful-never-lowers}
    At every fruitful return of $\search(v)$, the written value $sd$
    satisfies $sd \ge b[v]$ at write time. Hence, drops are the only
    writes that lower a barrier.
\end{corollary}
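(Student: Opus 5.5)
The plan is to reuse the fruitful-case argument from \fullref{lem:per-call-monotone} at the moment of the write, rather than at call entry.

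Consider a fruitful $\search(v)$ with search path $Pv$, prefix $P$. By Observation~\ref{obs:path-barrier-fixed}, $b[v]$ is not modified during the call before its final assignment. Hence the value of $b[v]$ at write time equals $b_{entry}[v]$, its value when the call was entered.

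At entry, edge-consistency w.r.t.\ $P$ holds. This is the global invariant assembled from \fullref{lem:search-preserves-consistency} at the start of the completeness subsection: starting from $b\equiv 0$, every call is entered with edge-consistency w.r.t.\ its prefix.

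Since $v,t\notin P$ and $b[t]=0$ by Observation~\ref{obs:target-barrier}, \fullref{lem:bar-invariant} gives
\[
b_{entry}[v] \le dist_P(v,t) = dist_{Pv}(v,t),
\]
where the equality holds because the endpoint $v$ is exempt. (If $dist_P(v,t)=\infty$, the call could not be fruitful, since \fullref{lem:fruitful-lower-bound} would force $sd=\infty$; so the distance is finite.)

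\fullref{lem:fruitful-lower-bound} also gives $sd \ge dist_{Pv}(v,t)$. Chaining the two bounds,
\[
sd \ge b_{entry}[v] = b[v]
\]
at write time, which is the first claim. One could sharpen this to $sd = dist_{Pv}(v,t)$ via \fullref{lem:strict-bar-invariant}, but the lower bound suffices.

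For the "hence" clause, I check the three write types. A raise strictly increases the barrier by \fullref{lem:fruitless-increasing}; the value at write time is again the entry value by Observation~\ref{obs:path-barrier-fixed}. A fruitful write never lowers the barrier by the first part. So only drops can lower a barrier, and by their guard $b[x]>d+1$ they always do.

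The only delicate point is justifying that edge-consistency w.r.t.\ $P$ holds at entry of \emph{every} call, not just the root call. I would state this explicitly by chaining \fullref{lem:search-preserves-consistency} over siblings, as in the for-loop paragraph of \fullref{lem:call-complete}: each child is entered after preceding siblings have preserved consistency w.r.t.\ the parent's path, and induction over the depth gives the claim for every call.
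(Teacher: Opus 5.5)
Your proof is correct and is essentially the paper's argument. The paper cites \fullref{lem:per-call-monotone} to get $b_{exit}[v] \ge b_{entry}[v]$ and notes that $b[v]$ is unwritten between entry and the write, so $b_{exit}[v] = sd$; you instead inline exactly that lemma's fruitful-case reasoning for the node $v$ (\fullref{lem:bar-invariant} at entry combined with \fullref{lem:fruitful-lower-bound}), resting on the same global edge-consistency-at-entry invariant, and your handling of the ``hence'' clause via \fullref{lem:fruitless-increasing} and the drop guard is also fine.
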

\begin{proof}
    In the actual execution, edge-consistency w.r.t.\ $P$ holds at the
    entry of every call (the joint induction of
    \autoref{sec:completeness}, cf.\
    \fullref{lem:search-preserves-consistency}), so
    \fullref{lem:per-call-monotone} applies to $\search(v)$:
    $b_{exit}[v] \ge b_{entry}[v]$. Between entry and the fruitful
    write, $b[v]$ is unwritten, because $v$ stays on the search path:
    on-path nodes are neither re-entered nor written by cascades; for
    the same reason, $b_{exit}[v] = sd$. Hence
    $sd = b_{exit}[v] \ge b_{entry}[v] = b[v]$ at write time.
\end{proof}

\begin{lemma}[Raise Bound]\label{lem:raise-bound}
    Consider, at a node $x \ne s$, a sequence of consecutive raises with no
    intervening drop at $x$. The sequence contains at most $k$ raises.
    At $x = s$, at most one raise occurs during the entire execution.
\end{lemma}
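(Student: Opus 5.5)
The plan is to show that consecutive raises at $x$ with no drop in between write strictly increasing values, all lying in $\{1,\dots,k\}$. That caps such a run at $k$ raises.

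\emph{Monotonicity of raise values.} Fix $x \ne s$ and look at the sequence of writes to $b[x]$. A raise $b[x] \gets k-h+1$ is executed by a fruitless $\search(x)$ at depth $h$. By \fullref{lem:fruitless-increasing} it satisfies $k-h+1 > b_{entry}[x]$, where $b_{entry}[x]$ is the value of $b[x]$ at entry of that call. Consider two consecutive raises, at depths $h_1$ and then $h_2$, with no drop at $x$ in between.

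\begin{itemize}
\item The first raise leaves $b[x] = k-h_1+1$.
\item Any write to $b[x]$ between the two raises is either a drop (excluded by hypothesis) or a fruitful write.
\item A fruitful write never lowers $b[x]$ (\fullref{cor:fruitful-never-lowers}).
\end{itemize}

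So at entry of the second raising call we have $b[x] \ge k-h_1+1$. Since a raising call writes $b[x]$ only as its final action (Observation~\ref{obs:path-barrier-fixed}), \fullref{lem:fruitless-increasing} gives $k-h_2+1 > b_{entry}[x] \ge k-h_1+1$. Hence $h_2 < h_1$: the raise values strictly increase and the depths strictly decrease. The one point to check is that $b_{entry}[x]$ really reflects all intervening writes. This holds because the second raising call is entered after the first raising call has returned, since the first call's frame is closed when it writes.

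\emph{Range.} For $x \ne s$, every call $\search(x)$ has depth $h \ge 1$, and $h \le k$ by \fullref{lem:search-path-length}. So each raise value $k-h+1$ lies in $\{1,\dots,k\}$. Strictly increasing values from a set of size $k$ give at most $k$ raises in the sequence.

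\emph{Source.} For $x = s$, the only call $\search(s)$ is the initial one. By Observation~\ref{obs:search-path-uniqueness}, the search path $(s)$ occurs once, and $s$ cannot be re-entered at greater depth because every search path starts at $s$ and is simple (Observation~\ref{obs:simple-search-path}). So at most one raise occurs, namely when the whole run is fruitless.

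\emph{Main obstacle.} The delicate step is ruling out that an intervening write lowers $b[x]$ other than through a drop. That is exactly what \fullref{cor:fruitful-never-lowers} supplies, and it rests on the joint induction of \autoref{sec:completeness}. Once it is available, the counting above is immediate.
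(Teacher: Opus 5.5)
Your proposal is correct and follows essentially the paper's own argument. Each raise strictly exceeds the current barrier (\fullref{lem:fruitless-increasing}), only fruitful writes can intervene and they never lower it (\fullref{cor:fruitful-never-lowers}), the values lie in $\{1,\dots,k\}$, and $\search(s)$ is entered exactly once. The only cosmetic difference is that you bound the raised value $k-h+1$ directly via $1 \le h \le k$ (\fullref{lem:search-path-length}), whereas the paper cites positivity of assigned values (\fullref{lem:barrier-ge-0}) and \fullref{cor:barrier-upper}.
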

\begin{proof}
    By \fullref{lem:fruitless-increasing}, each raise at $x$ writes a value
    strictly exceeding the current $b[x]$. Between two consecutive raises of
    the sequence, $b[x]$ is modified only by fruitful writes, which never
    lower it (\fullref{cor:fruitful-never-lowers}); hence the written values
    are strictly increasing along the sequence. By
    \fullref{lem:barrier-ge-0}(2), each written value is at least $1$, and
    by \fullref{cor:barrier-upper}, $b[x] \le k$ for $x \ne s$. A strictly
    increasing chain of values in $\{1, 2, \dots, k\}$ has length at most
    $k$.
    For $x = s$, $\search$ is invoked at node $s$ exactly once during the
    entire execution (the initial call from $\bsdfs(G, s, t, k)$), so at
    most one raise at $s$ can occur.
\end{proof}

\begin{lemma}[Drop Bound]\label{lem:drop-bound}
    Consider, at a node $x$, a sequence of consecutive drops with no
    intervening raise or fruitful write at $x$. The sequence contains at
    most $k - 1$ drops. At $x = s$, no drop ever occurs.
\end{lemma}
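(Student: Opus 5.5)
The argument mirrors \fullref{lem:raise-bound}, with the monotonicity reversed. A drop at $x$ is executed only under the guard $b[x] > d+1$ and assigns $b[x] \gets d+1$. By Observation~\ref{obs:cascade-lowers}, each drop therefore strictly decreases $b[x]$.

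The statement excludes intervening raises and fruitful writes at $x$, so between two consecutive drops of the sequence the value $b[x]$ is not written at all; the only writers of $b[x]$ are the three kinds classified above. Hence the values written along the sequence form a strictly decreasing chain.

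For the range of these values I use that the cascade starts from $(v,sd)$ with $sd \ge 1$ and that every enqueued pair carries a strictly larger value than the pair it was produced from. So every drop writes a value $d+1 \ge 2$; this is the observation in case~4 of the proof of \fullref{lem:barrier-ge-0}. From above, \fullref{cor:barrier-upper} gives $b[x] \le k$ for every barrier that is read, and the drop values are such barriers, since the terminal sentinel $k+1$ at $s$ is never followed by anything. The drop values therefore lie in $\{2, \dots, k\}$, and a strictly decreasing chain in this set has at most $k-1$ elements.

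For $x = s$ I will show that no drop occurs. Throughout the execution $s$ lies on the search path $S$: it is pushed first and popped only at the very end of the initial call. Every cascade runs inside some fruitful call, hence while $s \in S$. The drop guard requires $p \notin S$, so $s$ can never be dropped. The cascade of the initial call itself writes $b[s]$ only as its origin write, which is a fruitful write, not a drop.

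The main obstacle is the justification that the values strictly decrease and that nothing between drops resets them. This rests on the exhaustiveness of the write classification, which the paper states just above the lemma, and on Observation~\ref{obs:cascade-lowers}. A second point needs care: drops may come from different cascades in different descents. The argument above depends only on the value guard and the range bound, not on a drop's cascade origin, so this does not affect the count.
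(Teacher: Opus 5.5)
Your proof is correct and follows the paper's argument. Strict decrease comes from the guard $b[x] > d+1$ with no intervening writes, the range is bounded via the Barrier Upper Bound corollary, and the $p \notin S$ guard rules out drops at $s$. The only difference is cosmetic: you place the chain in $\{2,\dots,k\}$ using the cascade's $d+1 \ge 2$, whereas the paper places it in $\{1,\dots,k-1\}$ using that the first drop is strictly below the current $b[x] \le k$; both give at most $k-1$ drops.
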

\begin{proof}
    Each drop at $x$ writes $b[x] \gets d + 1$ under the guard
    $b[x] > d + 1$. The first drop of the sequence therefore writes a value
    strictly below the current $b[x] \le k$
    (\fullref{cor:barrier-upper}; the sentinel exception concerns only the
    final assignment at $s$, where no drop occurs), and each subsequent drop
    strictly decreases $b[x]$ further, since no other write intervenes. By
    \fullref{lem:barrier-ge-0}(2), each written value is at least $1$. A
    strictly decreasing chain of values in $\{1, 2, \dots, k-1\}$ has length
    at most $k - 1$.

    For $x = s$, the initial call $\bsdfs(G, s, t, k)$ pushes $s$ onto the
    search path and $s$ remains there throughout the execution; the
    $p \notin S$ guard in the cascade therefore prevents any drop at $s$.
\end{proof}

\begin{lemma}[Cascade Witness Path]\label{lem:cascade-witness}
    Under the cascade $\fruitful(v, sd)$ of \fullref{lem:cascade-distance}, invoked from a
    \emph{fruitful} $\search(v)$, every enqueued node $x$ admits an $x$-$t$-path of length $b[x]$
    whose nodes avoid $P$.
\end{lemma}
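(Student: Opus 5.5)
The plan is to combine the exact-value statement of \fullref{lem:cascade-distance} with \fullref{lem:strict-bar-invariant}, and then glue two paths together. By \fullref{lem:cascade-distance}, every enqueued node $x$ satisfies, at cascade exit, $b[x] = sd + dist_{Pv}(x,v)$. Since every value written is final for that node (FIFO, each node relaxed at most once), the same value holds from the moment of enqueueing onward. I will therefore produce two pieces:
\begin{itemize}
\item an $x$-$v$-path $A$ of length $dist_{Pv}(x,v)$ meeting $Pv$ only at $v$;
\item a $v$-$t$-path $B$ of length $sd$ meeting $Pv$ only at $v$.
\end{itemize}

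\emph{Constructing $A$.} The enqueued node $x$ differs from $v$, since $v$ is never enqueued by the guard. We also have $x \notin Pv$, because the guard $p \notin S$ held when $x$ was enqueued. Hence $A$ avoids $P$ entirely, and avoids $v$ except at its endpoint. Its length is finite because $b[x] \le k$.

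\emph{Constructing $B$.} Because $\search(v)$ is fruitful, and edge-consistency w.r.t.\ $P$ holds at entry (the joint induction, as invoked in \fullref{cor:fruitful-never-lowers}), \fullref{lem:strict-bar-invariant} gives $sd = dist_{Pv}(v,t)$. So a shortest $v$-$t$-path $B$ of length $sd$ exists, meeting $Pv$ only in $\{v,t\}$; since $t \notin P$ by $s \ne t$, it avoids $P$.

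\emph{Main obstacle.} The concatenation $A \cdot B$ has length exactly $b[x]$ and avoids $P$, but it need not be simple: $A$ and $B$ may share nodes other than $v$. The statement only asks for an $x$-$t$-path, and the paper's notion of path permits repeated nodes, so $A\cdot B$ already suffices. If simplicity were wanted, shortcutting would shorten the path below $b[x]$, and that would contradict \fullref{lem:update-distance-bound}, since $b[x] \ge dist_P(x,t)$ forces any $P$-avoiding $x$-$t$-path to have length at least $b[x]$. So the length is exactly right only when no shortcut is possible.

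I would take care to cite \fullref{lem:strict-bar-invariant} rather than only the lower bound \fullref{lem:fruitful-lower-bound}. This exactness is the one nontrivial ingredient: without it $sd$ could exceed the true distance, and no witness of length exactly $sd$ would be guaranteed.
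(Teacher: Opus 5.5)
Your proof is correct, and its skeleton matches the paper's. You take $b[x] = sd + dist_{Pv}(x,v)$ from \fullref{lem:cascade-distance}, a shortest $x$-$v$-path meeting $Pv$ only at $v$, and then a $v$-$t$-piece of length $sd$. The difference is where that second piece comes from.

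You get it from \fullref{lem:strict-bar-invariant}, i.e.\ $sd = dist_{Pv}(v,t)$. This pulls in the whole joint induction of \autoref{sec:completeness}. It also needs a hypothesis that is not among those of \fullref{lem:cascade-distance}: edge-consistency w.r.t.\ $P$ at the entry of $\search(v)$. That hypothesis holds in the actual run, so your proof is valid where the lemma is used (\fullref{lem:period-seal}).

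The paper instead takes the suffix from $v$ of an output actually produced inside $\search(v)$. By a trivial induction on the recursion, every fruitful return value is the length of such a suffix: either the closing edge $(v,t)$, or $v$ prepended to the suffix realizing the minimizing child's value. The suffix meets $Pv$ only at $v$ because the output path is simple. This argument is purely combinatorial and needs no exactness. So your closing claim is wrong: exactness is not ``the one nontrivial ingredient,'' and a witness of length exactly $sd$ is guaranteed even if $sd$ exceeds the distance, because $sd$ is by construction the length of a realized path.

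There are two smaller slips.

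First, $v$ itself is enqueued, since it is the initial pair $(v, sd)$. So ``every enqueued node differs from $v$'' is false. The case $x = v$ is trivial (take $A = (v)$), but your argument for $x \notin Pv$ does not cover it.

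Second, your shortcutting remark has the inequality backwards. The bound $b[x] \ge dist_P(x,t)$ from \fullref{lem:update-distance-bound} does not force every $P$-avoiding $x$-$t$-path to have length at least $b[x]$. That conclusion needs admissibility, $b[x] \le dist_P(x,t)$, which holds at the exit of $\search(v)$ by \fullref{lem:search-preserves-consistency} and \fullref{lem:bar-invariant}. With both inequalities, $b[x] = dist_P(x,t)$, so the concatenation is in fact a shortest and hence simple path. None of this is needed, though: the lemma only asks for a path, and its use in \fullref{lem:period-seal} shortcuts anyway.
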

\begin{proof}
    By \fullref{lem:cascade-distance}, $b[x] = sd + dist_{Pv}(x,v)$.
    Take a shortest $x$-$v$-path realizing $dist_{Pv}(x,v)$; it meets $Pv$ only at $v$.
    Since $\search(v)$ is fruitful, it produces an output whose suffix from $v$ is a $v$-$t$-path
    of length $sd$ (the shortest produced), with interior nodes searched below $v$ and hence off $Pv$.
    Concatenating the two gives an $x$-$t$-path of length $dist_{Pv}(x,v) + sd = b[x]$
    whose only node in $Pv$ is the junction $v$; as $v \notin Pv \setminus \{v\}$, the path avoids $P$.
\end{proof}

The next lemma carries the witness-path argument that is the technical
core of this section: a descent never re-enters territory a cascade has
touched.

\begin{lemma}[Descent Seal]\label{lem:period-seal}
    In descent $\tau$, no $\search$ call is entered at a node
    that has received a drop since $o_\tau$.
\end{lemma}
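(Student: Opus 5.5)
The plan is to argue by contradiction. Suppose that during descent $\tau$ some call $\search(x)$ is entered at a moment $\theta$, and that $x$ received a drop in the interval $(o_\tau, \theta)$. Take the \emph{first} such entry at $x$ after its first drop, and let the \emph{last} drop at $x$ before $\theta$ be the one written by the cascade $\fruitful(v_i, sd_i)$.

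By \fullref{lem:phase-a-purity} and \fullref{lem:spine-tail-origins}, this cascade originates at a spine frame $v_i$ of $S_\tau$ with $i > j_\tau$, whose search path is $P_i v_i = (v_0,\dots,v_i)$. By \fullref{lem:cascade-witness}, right after the drop there is an $x$-$t$-path $W$ of length $b[x]$ that avoids $P_i = (v_0,\dots,v_{i-1})$.

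The first step is to show that $b[x]$ is unchanged from that drop until $\theta$. A raise at $x$ needs a $\search(x)$ to return, and every call at $x$ returning in this window would have been entered after the drop (by minimality of $\theta$). A fruitful write at $x$ inside descent $\tau$ happens only at a spine frame (\fullref{lem:spine-tail-origins}). Such a frame sits on the stack, so the cascade guard $p \notin S$ would have prevented the drop; and once it has returned it cannot be re-entered in this descent. No further drop occurs, by the choice of the last one. Hence, at entry, $b[x] = \|W\|$. \fullref{lem:parent-pruning-guard} then gives $\|W\| = b[x] \le k - h$, where $h$ is the depth of $\search(x)$.

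Next, describe the stack at $\theta$ by \autoref{lem:descent-stack}(2): it is $(v_0,\dots,v_{i'},q_1,\dots,q_r)$ with $q_r = x$, where the $q$'s are frames entered after $o_\tau$. Since $\search(v_i)$ has already returned and spine frames close in decreasing order, $i' \le i-1$. So the spine part of the stack lies in $P_i$, and $W$ avoids it.

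The path $W$ may still meet some of $q_1,\dots,q_r$. Let $y$ be the \emph{last} node of $W$ that lies on the current stack; it is well defined because $x$ itself qualifies. Then $y = q_m$ for some $m$, with depth $h_y \le h$. Let $W'$ be the suffix of $W$ from $y$. It meets the stack only at $y$, so prefixing the stack up to $y$ gives a simple $s$-$t$-path of length
\[
h_y + \|W'\| \le h + (k-h) = k .
\]
Its prefix up to $y$ is exactly the search path of the open frame $\search(q_m)$. In the actual execution edge-consistency holds at every call entry (the joint induction, \fullref{lem:search-preserves-consistency}), so \fullref{lem:call-complete} applied to $\search(q_m)$ says this path is output during $\search(q_m)$.

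But $\search(q_m)$ was opened during descent $\tau$, so by \fullref{cor:call-phase}(1) it returns fruitless within the descent. It therefore produces no output, which is a contradiction.

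The step I expect to be the main obstacle is the first one: that no write between the last drop and $\theta$ can invalidate the equality $b[x] = \|W\|$. Drops only lower the value and are handled by taking the last one, and raises are excluded by minimality of the entry. What needs care is ruling out a fruitful write at $x$ through the spine-frame and stack-guard argument above, together with checking that the inequality $i' \le i-1$ really keeps $W$ off the spine portion of the stack.
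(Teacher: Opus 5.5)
Your proof is correct and follows the paper's skeleton. Both arguments take a minimal violating entry at $x$, trace the last drop at $x$ to a cascade at an already returned spine frame whose index exceeds that of the deepest open spine frame, and use the witness path $W$ of \fullref{lem:cascade-witness}, which avoids the open spine prefix. Both then show $b[x]$ is unchanged up to $\theta$ and use the pruning guard to get $h + \|W\| \le k$.

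The difference is in how the contradiction is closed. The paper concatenates $(q_1,\dots,x)$ with $W$, shortcuts, and places the resulting output inside $\search(q_1)$ via the global completeness theorem and search-path uniqueness. It must then show $q_1 \ne v_{i+1}$ and contradict the fork index, treating $\tau = T$ separately. You instead splice $W$ at its last stack node $q_m$ and apply the per-call \fullref{lem:call-complete} to $\search(q_m)$. Its lifetime lies inside descent $\tau$ by \fullref{cor:call-phase}(1), so it contains no output. This is shorter and uniform in $\tau$. The entry hypothesis of the per-call lemma is supplied by the global reading of the joint induction, as you say.

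Two small points need repair. First, the witness path of \fullref{lem:cascade-witness} is a concatenation and need not be simple. You should shortcut $W'$ before calling the spliced $s$-$t$-path simple; the shortcut stays off the stack except at $y$ and is no longer than $W'$. The paper performs the analogous shortcutting on its concatenation. Second, in the raise case, minimality is not what forces a returning frame of $x$ to have been entered after the drop. That is the on-path guard $p \notin S$: the frame cannot have been open at the drop. Minimality of $\theta$ is what then excludes such an entry.
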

\begin{proof}
    Suppose not, and let $\theta$ be the \emph{first} violating entry in
    this descent, at a node $x$.

    By \autoref{lem:descent-stack}(2), the search path at $\theta$ is
    $S' = (v_0, \dots, v_i, q_1, \dots, q_r = x)$; here
    $v_i$ is the deepest open spine frame and $i \ge j_\tau + 1$, and
    $r \ge 1$ because the frame entered at $\theta$ is not open at $o_\tau$
    and hence is not a spine frame.
    Write $h = \|S'\|$. By simplicity of the search path
    (Observation~\ref{obs:simple-search-path}), the segment $(q_1, \dots, x)$ has
    length $r - 1$ and avoids $\{v_0, \dots, v_i\}$.

    Consider the last drop at $x$ before $\theta$, necessarily after
    $o_\tau$, and let it write $b[x] = d$ inside a cascade
    $\fruitful(v_{i''}, sd'')$. By \fullref{lem:spine-tail-origins}, the
    origin of that cascade is a spine frame with live search path
    $(v_0, \dots, v_{i''})$. This origin has returned by $\theta$, whereas
    $v_i$ has not. Since spine frames return in order of decreasing index,
    $i'' > i$.

    No further write to $x$ occurs between this drop and $\theta$:
    \begin{itemize}
        \item A later drop at $x$ would contradict the choice of the last drop.
        \item A raise or a fruitful write at $x$ occurs only when a frame of
              $x$ returns. Such a frame cannot have been open at the time of
              the drop, since it would place $x$ on the search path, and
              cascades do not write to search-path nodes. It would therefore
              have to be entered after the drop and before $\theta$. But that
              entry is itself at a node which has received a drop since
              $o_\tau$, contradicting the choice of $\theta$ as the first
              such entry.
    \end{itemize}
    Hence $b[x] = d$ still holds at $\theta$. By
    \fullref{lem:cascade-witness}, there is an $x$-$t$-path $W$ of length
    $d$ avoiding $\{v_0, \dots, v_{i''-1}\} \supseteq \{v_0, \dots, v_i\}$.

    The pruning guard at $\theta$ gives $b[x] \le k - h$, so $d \le k - h$,
    that is, $h + d \le k$.

    Concatenating $(q_1, \dots, x)$ with $W$ yields a $q_1$-$t$-path $Q$ in
    $G \setminus \{v_0, \dots, v_i\}$ of length $(r - 1) + d$. Now $Q$ need
    not be simple, but shortcutting any repeated nodes gives a simple
    $q_1$-$t$-path $Q'$ in the same sub-graph, of length at most
    $(r - 1) + d$. Prepending $(v_0, \dots, v_i, q_1)$ extends $Q'$ to a
    simple $s$-$t$-path $R$ of length at most
    $(i + 1) + (r - 1) + d = h + d \le k$, whose first $i + 2$ nodes
    coincide with the current search path truncated at $q_1$.

    By \fullref{thm:bsdfs-complete}, $R$ is produced by
    $\bsdfs(G, s, t, k)$, and by
    Observation~\ref{obs:search-path-uniqueness} each search path occurs at
    most once during the execution. The search path
    $(v_0, \dots, v_i, q_1)$ is currently active, so $R$ is produced during
    the ongoing call $\search(q_1)$.

    We claim that this call was entered after $o_\tau$. First,
    $q_1 \ne v_{i+1}$: otherwise $(v_0, \dots, v_i, q_1)$ would be the
    search path of the spine frame of $v_{i+1}$, which has already been
    popped because $v_i$ is the deepest open spine frame, and this search
    path would occur twice, contradicting
    Observation~\ref{obs:search-path-uniqueness}; for $i = l - 1$ we have
    $v_{i+1} = t$, which never receives a frame, so $q_1 \ne v_{i+1}$ holds
    outright. Consequently
    $(v_0, \dots, v_i, q_1)$ is not a spine search path, so the frame of
    $q_1$ is not a spine frame. Every frame open at $o_\tau$ is a spine
    frame, and the claim follows.

    No descent contains an output, so no output falls between $o_\tau$ and
    the entry of $\search(q_1)$. Hence the first output after $o_\tau$
    falls inside $\search(q_1)$.

    For $\tau = T$ this is already a contradiction, since $o_T$ is the last
    output. For $\tau < T$, that first output is $o_{\tau+1}$, so
    $S_{\tau+1}$ begins with $(v_0, \dots, v_i, q_1)$. Since
    $q_1 \ne v_{i+1}$, the fork of $S_\tau$ and $S_{\tau+1}$ is $v_i$,
    giving $j_\tau = i \ge j_\tau + 1$ --- a contradiction.
\end{proof}

\subsection{Intervals and Periods}\label{sec:periods}

Two coarser decompositions arise by pairing adjacent phases, see \autoref{fig:timeline}.

\begin{figure}[h]
\centering
\def\hospan#1#2#3#4{%
    \node[brk, anchor=west] (spanL) at (#1,#3) {$($};
    \node[brk, anchor=east] (spanR) at (#2,#3) {$]$};
    \draw[thick] (spanL.east) -- (spanR.west);
    \node[lab, above=2pt] at ({(#1+#2)/2},#3) {#4};
}
\begin{tikzpicture}[x=0.72cm, y=1cm,
    brk/.style = {inner sep=0pt, outer sep=0pt},
    guide/.style     = {gray, dashed},
    lab/.style       = {}]

  \def\bZero{0}   \def\oOne{2.8}   \def\bOne{5.0}   \def\oTwo{7.8}
  \def\bTwo{10.0} \def\oThree{12.8} \def\bThree{15.0} \def\oFour{17.4}

  \foreach \x in {\bZero,\oOne,\bOne,\oTwo,\bTwo,\oThree,\bThree,\oFour}
      \draw[guide] (\x,0.10) -- (\x,3.05);

  \draw[->] (-0.7,0.10) -- (18.3,0.10) node[right] {time};
  \foreach \x in {\bZero,\bOne,\bTwo,\bThree}
      \draw (\x,0.28) -- (\x,-0.08);                      
  \foreach \x in {\oOne,\oTwo,\oThree}
      \draw[line width=1.4pt] (\x,0.40) -- (\x,-0.08);    
  \draw[line width=1.4pt, densely dashed] (\oFour,0.40) -- (\oFour,-0.08);

  \node[below] at (\bZero,-0.10)  {$\beta_0 = o_0$};
  \node[below] at (\oOne,-0.10)   {$o_1$};
  \node[below] at (\bOne,-0.10)   {$\beta_1$};
  \node[below] at (\oTwo,-0.10)   {$o_2$};
  \node[below] at (\bTwo,-0.10)   {$\beta_2$};
  \node[below] at (\oThree,-0.10) {$o_3$};
  \node[below] at (\bThree,-0.10) {$\beta_3$};
  \node[below] at (\oFour,-0.10)  {$o_4 = \beta_4$};

  \hospan{\bZero}{\oOne}{0.80}{ascent 1}
  \hospan{\oOne}{\bOne}{0.80}{descent 1}
  \hospan{\bOne}{\oTwo}{0.80}{ascent 2}
  \hospan{\oTwo}{\bTwo}{0.80}{descent 2}
  \hospan{\bTwo}{\oThree}{0.80}{ascent 3}
  \hospan{\oThree}{\bThree}{0.80}{descent 3}
  \hospan{\bThree}{\oFour}{0.80}{ascent 4}

  \hospan{\bZero}{\bOne}{1.70}{period 1}
  \hospan{\bOne}{\bTwo}{1.70}{period 2}
  \hospan{\bTwo}{\bThree}{1.70}{period 3}
  \hospan{\bThree}{\oFour}{1.70}{period 4}

  \hospan{\bZero}{\oOne}{2.60}{interval 0}
  \hospan{\oOne}{\oTwo}{2.60}{interval 1}
  \hospan{\oTwo}{\oThree}{2.60}{interval 2}
  \hospan{\oThree}{\oFour}{2.60}{interval 3}

\end{tikzpicture}
\caption{Decomposition of an execution with $T = 3$ outputs. Bold ticks mark
the outputs $o_1, \dots, o_T$, the dashed tick the terminal event
$o_{T+1}$, thin ticks the boundaries $\beta_\tau$, with $\beta_0 = o_0$ the
start and $o_{T+1} = \beta_{T+1}$ the termination of the execution.
All segments are left half-open. Periods are cut at the
$\beta_\tau$, intervals at the $o_\tau$; hence each period contains exactly
one event and each interval carries its event at the right end. A descent
may be empty ($o_\tau = \beta_\tau$), and descent $T+1$ always is; the
others are drawn non-empty here. The terminal period carries the terminal
event only; it is drawn wide for legibility.}
\label{fig:timeline}
\end{figure}

The events cut the execution into $T + 1$ \emph{intervals}, indexed
$\tau = 0, \dots, T$: interval $\tau$ is $(o_\tau, o_{\tau+1}]$, that is,
descent $\tau$ followed by ascent $\tau + 1$; interval $0$ is ascent $1$
alone, descent $0$ being empty. Interval $\tau$ therefore opens immediately
after $o_\tau$, closes with the event $o_{\tau+1}$, and carries no event in
its interior. The \emph{delay} of an interval is the number of elementary
steps executed in it --- including those producing the closing output ---
and the worst-case delay of the algorithm is the maximum over all
intervals. For $T = 0$ there is the single interval $0$, the entire
execution.

\emph{Period} $\tau$, for
$1 \le \tau \le T+1$, is the segment $(\beta_{\tau-1}, \beta_\tau]$,
ascent $\tau$ followed by descent $\tau$. By the boundary ordering, the
periods tile the execution and period $\tau$ contains exactly the event
$o_\tau$. There are thus $T+1$ periods and $T+1$ intervals, one for each event after the start.
An interval straddles two periods, and vice versa, and this misalignment is
where the higher constant of the worst-case bound comes from.

Note the asymmetry in the indexing: period $\tau$ contains the event
$o_\tau$, whereas interval $\tau$ is opened by $o_\tau$ and contains
$o_{\tau+1}$. Both consist of two phases and both inherit one spine account
from the frames open at their left endpoint. They differ in the entries:
per node, a period admits a single chain of fruitless entries, an interval
two (\fullref{cor:entry-bound}). That one extra chain is the entire
difference between the worst-case and the amortized constant below.

Using the phase lemmas, we can prove a central property of a phase.

\begin{lemma}[One Entry per Pair per Phase]\label{lem:one-entry-phase}
    Let $\Phi$ be an ascent or a descent. For each node $x$ and each depth
    $h$, at most one call $\search(x)$ with search path of length $h$ is
    entered in $\Phi$.
\end{lemma}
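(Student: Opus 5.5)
The plan is to argue by contradiction. Suppose two calls $\search(x)$, both with search path of length $h$, are entered in $\Phi$ at times $\theta_1 < \theta_2$. I will show that the first call leaves $b[x] = k-h+1$ behind, and that nothing inside $\Phi$ can bring $b[x]$ back down to $k-h$. The second entry would then violate \fullref{lem:parent-pruning-guard}.

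\emph{Step 1: the first call is closed and fruitless by $\theta_2$.} If the frame entered at $\theta_1$ were still open at $\theta_2$, then $x$ would lie on the search path. The guard $w \notin S$ in the parent's for-loop would then forbid entering $\search(x)$ again, so the first call has returned before $\theta_2$, i.e.\ entered and returned inside $\Phi$.
\begin{itemize}
\item If $\Phi$ is an ascent, that return is fruitless by \fullref{lem:phase-a-purity}.
\item If $\Phi$ is a descent, the call was opened in the descent, hence is fruitless by \fullref{cor:call-phase}(1).
\end{itemize}
Either way, its last action is the raise $b[x] \gets k-h+1$.

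\emph{Step 2: $b[x]$ stays at least $k-h+1$ until $\theta_2$.} By \fullref{lem:parent-pruning-guard}, the second entry requires $b[x] \le k-h$ at $\theta_2$, so some write between the raise and $\theta_2$ must lower $b[x]$. I would check the three kinds of write in turn.
\begin{itemize}
\item Raises strictly increase the barrier (\fullref{lem:fruitless-increasing}).
\item Fruitful writes never lower it (\fullref{cor:fruitful-never-lowers}).
\item By that same corollary, only a drop at $x$ can lower $b[x]$.
\end{itemize}
In an ascent no cascade runs (\fullref{lem:phase-a-purity}), so there are no drops at all, and we reach a contradiction. In descent $\tau$, the drop would fall after $\theta_1 > o_\tau$. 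The entry at $\theta_2$ would then be a $\search$ call entered in descent $\tau$ at a node that has received a drop since $o_\tau$, which \fullref{lem:period-seal} excludes.

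The main obstacle is the descent case, but the heavy lifting, namely the witness-path argument, is already packaged in \fullref{lem:period-seal}. What remains is bookkeeping. I need to confirm that the drop indeed occurs after $o_\tau$; it does, since $\theta_1$ lies in the descent. I also need that no other write type can lower the barrier, which is exactly the content of \fullref{cor:fruitful-never-lowers} together with \fullref{lem:fruitless-increasing}.
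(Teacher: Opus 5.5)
Your proposal is correct and follows the paper's proof essentially step for step. The first frame must return, fruitless, before $\theta_2$ and leave the raise $b[x]=k-h+1$, so only a drop inside $\Phi$ could re-admit $x$ at depth $h$; no drop occurs in an ascent (\autoref{lem:phase-a-purity}), and in a descent \autoref{lem:period-seal} excludes the re-entry. The paper's separate remark that $x=s$ is trivial is already covered by your Step~1, since $s$ stays on the search path for the whole execution.
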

\begin{proof}
    Suppose two such calls are entered in $\Phi$, at times
    $\theta_1 < \theta_2$. While a frame of $x$ is open, $x$ lies on the
    search path (Observation~\ref{obs:simple-search-path}), so the first
    frame has returned before $\theta_2$, hence within $\Phi$. Its return is
    fruitless: in a descent by \fullref{cor:call-phase}(1), in an ascent
    because a fruitful call returns in a descent
    (\fullref{lem:phase-a-purity}). The return is therefore a raise, writing
    $b[x] = k - h + 1$, whereas the pruning guard at $\theta_2$ requires
    $b[x] \le k - h$. Since only drops lower barriers
    (\fullref{lem:fruitless-increasing},
    \fullref{cor:fruitful-never-lowers}), a drop at $x$ occurs between that
    return and $\theta_2$, hence inside $\Phi$.

    If $\Phi$ is an ascent, it contains no cascade and hence no drop
    (\fullref{lem:phase-a-purity}) --- a contradiction. If $\Phi$ is
    descent $\tau$, the drop occurs after $o_\tau$, so $\theta_2$ is an
    entry at a node that has received a drop since $o_\tau$, contradicting
    \fullref{lem:period-seal}. (For $x = s$ the claim is trivial:
    $\search(s)$ is entered once per execution.)
\end{proof}

\begin{corollary}[Two Entries per Pair]\label{cor:two-entries}
    In every interval, for each node $x$ and each depth $h$, at most two
    calls $\search(x)$ with search path of length $h$ are entered.
\end{corollary}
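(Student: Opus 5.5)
Interval $\tau$, for $0 \le \tau \le T$, is by definition the segment $(o_\tau, o_{\tau+1}]$, which decomposes into descent $\tau$, namely $(o_\tau, \beta_\tau]$, followed by ascent $\tau+1$, namely $(\beta_\tau, o_{\tau+1}]$. For $\tau = 0$ the descent is empty, and so is descent $\tau$ in the degenerate case $o_\tau = \beta_\tau$. These are the only two phases meeting the interval, because the boundary ordering $o_\tau \le \beta_\tau < o_{\tau+1}$ established in \autoref{sec:phases} places $\beta_\tau$ inside the interval and no other boundary there. The plan is to attribute each call entered in the interval to the phase in which its entry step occurs and then apply \fullref{lem:one-entry-phase} to each of the two phases separately.

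Fix a node $x$ and a depth $h$. An entry of $\search(x)$ at search-path length $h$ happens at a single execution step, so it lies either in descent $\tau$ or in ascent $\tau+1$. The phases are half-open and tile the execution, so this assignment is unambiguous. By \fullref{lem:one-entry-phase}, at most one such entry falls into descent $\tau$ and at most one into ascent $\tau+1$. Hence the interval contains at most two such entries, and at most one when descent $\tau$ is empty. The case $x = s$ is immediate, since $\search(s)$ is entered only once in the whole execution.

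The one point to check is that the endpoints do not let an entry escape this accounting. The left endpoint $o_\tau$ is excluded from the interval and is an output step, not an entry. The right endpoint $o_{\tau+1}$ belongs to ascent $\tau+1$ and is likewise an output step. For $\tau = T$ the terminal ascent $T+1$ contains only the termination event, so the bound holds trivially. I do not expect a genuine obstacle here: all the substance sits in \fullref{lem:one-entry-phase}, and this corollary only counts phases.
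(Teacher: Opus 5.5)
Your proof is correct and follows the paper's argument. You split interval $\tau$ into descent $\tau$ and ascent $\tau+1$, apply \fullref{lem:one-entry-phase} to each phase, and note that the initial interval (empty descent) and the terminal interval (call-free ascent) admit at most one entry; your extra bookkeeping on the half-open endpoints is sound and simply spells out what the paper leaves implicit.
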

\begin{proof}
    An interval consists of two phases --- descent $\tau$ followed by
    ascent $\tau+1$ --- and \fullref{lem:one-entry-phase} permits one entry
    per pair in each. In the initial interval the descent is empty, and in
    the terminal one the ascent carries no call, so both admit at most one.
\end{proof}

The constant $2$ counts phases, not calls, and it is not an artifact of the
analysis: Johnson's Lemma~3 --- at most two \textsc{Circuit} calls per
vertex between consecutive
outputs~\cite{DBLP:journals/siamcomp/Johnson75} --- is the $k \to \infty$
shadow of \fullref{cor:two-entries} at the same constant, the barrier
collapsing to a binary blocked flag and the node-depth pair to the node.

Counting entries per node rather than per pair turns the two statements
above into the form the delay proofs use.

\begin{corollary}[Entry Bound]\label{cor:entry-bound}
    Let $x \ne s$. In any phase, at most $k$ calls at $x$ are entered; in
    any period, at most $k$ \emph{fruitless} calls at $x$ are entered. The
    call $\search(s)$ is entered once during the entire execution. Charging
    every call its entire scan cost, the calls counted in either case
    account for at most
    $\sum_{x \in V} k\,(1 + |\suc(x)|) = k\,(n+m)$ steps.
\end{corollary}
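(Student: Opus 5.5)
The plan is to derive the per-phase clause directly from \fullref{lem:one-entry-phase} by summing over depths. By \fullref{lem:search-path-length}, every call has depth $h \in \{0, \dots, k\}$. At a node $x \ne s$ depth $0$ is impossible, since the only depth-$0$ call is $\search(s)$; so the admissible depths are $h \in \{1, \dots, k\}$, which gives exactly $k$ values. \fullref{lem:one-entry-phase} allows at most one entry per pair $(x,h)$ in a phase, so a phase contains at most $k$ entries at $x$. The claim about $s$ is immediate: $\search(s)$ is called only from $\bsdfs(G,s,t,k)$, and within a search the node $s$ is always on $S$, so the guard $w \notin S$ blocks any re-entry.

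For the period clause, write period $\tau$ as ascent $\tau$ followed by descent $\tau$. By \fullref{lem:one-entry-phase}, each of the two phases admits at most one entry per pair $(x,h)$. The goal is to exclude a second \emph{fruitless} entry at the same pair across the two phases. Suppose a fruitless call at $(x,h)$ is entered in ascent $\tau$ and another in descent $\tau$. By \fullref{cor:call-phase}(2), the first call returns inside the ascent with a raise $b[x] = k-h+1$. For the second entry to pass its guard, $b[x]$ must fall back to $\le k-h$. Raises only increase $b[x]$ (\fullref{lem:fruitless-increasing}) and fruitful writes never lower it (\fullref{cor:fruitful-never-lowers}), so a drop at $x$ must occur in between. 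Ascents contain no drops (\fullref{lem:phase-a-purity}), so this drop lies in descent $\tau$, after $o_\tau$. That makes the second entry an entry at a node that has received a drop since $o_\tau$, which contradicts \fullref{lem:period-seal}. The argument is the same as in \fullref{lem:one-entry-phase}, applied across the phase boundary $o_\tau$, and it works because the drop is forced to lie after $o_\tau$. Hence there is at most one fruitless entry per pair in a period, and at most $k$ fruitless entries per node $x \ne s$.

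The cost clause follows by charging. Each call at $x$ costs at most $1+|\suc(x)|$ steps, so the counted calls cost at most $\sum_{x\ne s} k(1+|\suc(x)|)$. The single call at $s$ adds $1+|\suc(s)| \le k(1+|\suc(s)|)$ because $k \ge 1$. Using $\sum_x |\suc(x)| = m$, the total is at most $k(n+m)$.

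The main obstacle is the period clause. The phase clause is a direct count, but across the ascent/descent boundary I must check that the intermediate drop really falls after $o_\tau$ and that \fullref{lem:period-seal} applies to the entry in the descent. Both follow from ascent purity: all drops of period $\tau$ lie in descent $\tau$. The restriction to fruitless calls is what makes the argument work, because a fruitful call entered in the ascent does not return there, so no raise needs to be undone.
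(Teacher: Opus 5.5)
Your proof is correct. The phase clause and the cost clause follow the paper's argument: pairwise distinct depths $h \in \{1,\dots,k\}$ via \autoref{lem:one-entry-phase}, then charging $1+|\suc(x)|$ per call.

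The period clause takes a different route.
\begin{itemize}
\item The paper does not establish distinct depths there. It shows that the raises at $x$ within a period form one sequence with no intervening drop. The ascent has no drops. A raise in the descent ends a call entered in the descent, whose entry precedes $x$'s first drop by \autoref{lem:period-seal}. Its return does too, since cascades do not write on-path nodes. The paper then bounds this chain by \fullref{lem:raise-bound}, as a strictly increasing sequence of values in $\{1,\dots,k\}$.
\item You instead rerun the contradiction of \autoref{lem:one-entry-phase} across the boundary $o_\tau$. The drop needed to undo the ascent raise is forced into descent $\tau$. This yields directly at most one fruitless entry per node--depth pair per period.
\end{itemize}

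Your version makes both clauses instances of a single counting principle. It needs neither \autoref{lem:raise-bound} nor the step that carries the seal from a call's entry to its return. The paper's version exhibits the drop-free raise chain, which also orders the depths strictly decreasingly in time; its amortized discussion reuses that structure ("one chain of fruitless entries replaces two"). For the count itself the two arguments are equivalent, related by the correspondence $h \mapsto k-h+1$ between entry depth and raised value.
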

\begin{proof}
    A call at $x \ne s$ is entered at a depth $h$ with $1 \le h \le k$: the
    pruning guard gives $b[x] + h - 1 < k$, and $b[x] \ge 0$
    (\fullref{lem:barrier-ge-0}). It therefore suffices that the entries
    counted have pairwise distinct depths, which for a phase is
    \fullref{lem:one-entry-phase}.

    For a period, each fruitless call at $x$ ends in exactly one raise, and
    the raises at $x$ within the period form a single sequence with no
    intervening drop: the ascent contains no drops
    (\fullref{lem:phase-a-purity}); a raise in the descent is the return of
    a call entered in the descent (\fullref{cor:call-phase}(1),
    \fullref{lem:spine-tail-origins}), so by \fullref{lem:period-seal} its
    entry --- and, since cascades do not write on-path nodes, its return ---
    precedes $x$'s first drop of the period; and the drops of the period all
    lie in the descent. By \fullref{lem:raise-bound} such a sequence has at
    most $k$ raises. A call at $x$ costs at most $1 + |\suc(x)|$ steps over
    its whole lifetime (\autoref{sec:work-attribution}).
\end{proof}

\subsection{Delay Theorems}\label{sec:delay-theorems}

\begin{theorem}[Worst-Case Delay]\label{thm:worst-case-delay}
    For a graph $G$ with $n$ nodes and $m$ edges, and the standing length
    bound $0 < k \le n$, $\bsdfs(G,s,t,k)$ enumerates with a worst-case
    delay of at most $3(k+1)(n+m) \in O(k(n+m))$ elementary
    steps between consecutive events $o_0, o_1, \dots, o_{T+1}$.
\end{theorem}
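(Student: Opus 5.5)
We bound the delay of an arbitrary interval $\tau$, that is $(o_\tau, o_{\tau+1}]$, which consists of descent $\tau$ followed by ascent $\tau+1$. Following \autoref{sec:work-attribution}, every step is charged to one of four accounts. The plan is to bound three of them by $(k+1)(n+m)$ each and to check that the fourth is absorbed.

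\emph{Step 1: calls entered in the interval.} By \fullref{cor:entry-bound}, applied once to each of the two phases, at most $k$ calls per node $x \ne s$ are entered in each phase. Charging each such call its whole scan cost, the calls entered in the interval cost at most $2k(n+m)$ steps; $\search(s)$ is never entered inside an interval except interval $0$, where it is covered by the same per-phase count.

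\emph{Step 2: the spine account.} The calls already open at $o_\tau$ are the spine frames of $S_\tau$, which sit at distinct nodes. Charging each of them its full scan cost gives at most $\sum_x (1+|\suc(x)|) = n+m$ steps.

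\emph{Step 3: cascades.} Cascades run only in descent $\tau$ (\fullref{lem:phase-a-purity}), and each dequeue at $x$ costs at most $1+|\pre(x)|$. I count dequeues per node: each is either an origin dequeue or follows a drop at $x$.
\begin{itemize}
\item Origin dequeues number at most one per node per descent (\fullref{lem:spine-tail-origins}).
\item For drops at $x$ within descent $\tau$, I invoke \fullref{lem:period-seal}. After the first drop at $x$ in the descent, no call at $x$ is entered, so no raise occurs at $x$ afterwards. A fruitful write at $x$ could still occur after drops only if $x$ is an open spine frame, but on-path nodes are never dropped. So the drops at $x$ form one chain without intervening raise or fruitful write, and \fullref{lem:drop-bound} gives at most $k-1$ of them.
\end{itemize}
Hence there are at most $k$ dequeues per node, for a cascade cost of at most $\sum_x k(1+|\pre(x)|) = k(n+m)$.

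\emph{Step 4: totalling.} Steps 1 to 3 give $2k(n+m) + (n+m) + k(n+m) = (3k+1)(n+m)$. The output itself costs at most $k+1 \le 2(n+m)$ steps, so the total is $\le (3k+3)(n+m) = 3(k+1)(n+m)$. The boundary intervals (interval $0$ with an empty descent, interval $T$ with a trivial ascent) are special cases with fewer terms.

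\emph{Main obstacle.} The delicate step is the cascade count in Step 3. The origin dequeue must be merged with the drop chain so that the total per node is $k$ rather than $k+1$. If the merge fails, I would instead exploit the slack in Step 1: in interval $0$ and the terminal interval, one phase contributes nothing. For general intervals, I would instead note that entries per node in a descent before the first drop occupy depths disjoint from those after, tightening the count there. A further point to check is that charging fruitful spine calls their full scan once is legitimate: a frame open at $o_\tau$ may have scanned earlier successors, but this is only an overcharge.
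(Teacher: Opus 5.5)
Your proof is correct and follows the paper's own proof essentially verbatim. It uses the same four accounts: two phase-budgets of entries via the Entry Bound, the spine frames open at $o_\tau$, at most one origin dequeue plus a single drop chain per node via Descent Seal and Drop Bound, and the output cost. The only differences are that you bound the spine account by $n+m$ instead of the paper's $k+m$, and then absorb the output cost via $k+1\le 2(n+m)$. The worry in your closing paragraph is moot: the Drop Bound already gives at most $k-1$ drops, so you get $1+(k-1)=k$ dequeues per node without any merging.
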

\begin{proof}
    The segments between consecutive such events are exactly the
    intervals.
    Fix interval $\tau$ and tally the steps executed in it, following
    \autoref{sec:work-attribution}.

    \begin{description}
    \item[Search work, calls entered in the interval.]
        Interval $\tau$ consists of the two phases descent $\tau$ and
        ascent $\tau+1$, one of which is void in the initial and in the
        terminal interval. By \fullref{cor:entry-bound}, at most $k$ calls per node are
        entered in each phase, and charging every one of them its entire
        scan cost:
        $$\sum_{x \in V} 2k \cdot (1 + |\suc(x)|) \;=\; 2k\,(n+m).$$
    \item[Search work, calls open at $o_\tau$.]
        A call executing a step in the interval without being entered in it
        is open at $o_\tau$, since a call's lifetime is one contiguous
        span. The frames open at $o_\tau$ are the spine frames of
        $S_\tau$: the search path at $o_\tau$ is $S_\tau$ less $t$. They
        sit at pairwise distinct nodes of $S_\tau \setminus \{t\}$ --- at most $k$
        of them (\fullref{lem:search-path-length}). Charging each its entire
        scan cost:
        $$\sum_{v_i \in S_\tau \setminus \{t\}}
            \bigl(1 + |\suc(v_i)|\bigr) \;\le\; k + m.$$
        For the initial interval this account is empty: $S_0$ is the empty
        spine and no frame is open at $o_0 = \beta_0$
        (\autoref{lem:descent-stack}(1)).
    \item[Drop work.]
        All dequeues executed in interval $\tau$ lie in descent $\tau$,
        since ascent $\tau+1$ contains no cascade
        (\fullref{lem:phase-a-purity}). Per node, they comprise at most one
        origin dequeue (\fullref{lem:spine-tail-origins}) and the dequeues
        of the drops of the descent, which form a single sequence with no
        intervening raise or fruitful write --- by \fullref{lem:period-seal}
        no entry, hence no raise and no open frame of $x$, occurs after
        $x$'s first drop, and no frame of $x$ is open at that first drop ---
        at most $k - 1$ (\fullref{lem:drop-bound}); at most $k$ dequeues in
        total, each costing $1 + |\pre(x)|$:
        $$\sum_{x \in V} k \cdot (1 + |\pre(x)|) \;=\; k\,(n+m).$$
    \item[Output work.]
        Interval $\tau$ closes with the event $o_{\tau+1}$; producing an
        output costs at most $k + 1$ steps, and the terminal event
        $o_{T+1}$ costs none.
    \end{description}

    Summing, the per-interval delay is bounded by
    $$2k(n+m) + (k + m) + k(n+m) + (k + 1) \;\le\; 3(k+1)(n+m),$$
    the last inequality using $k \le n$ and $m \ge 1$; for $m = 0$ the run
    is the single fruitless call $\search(s)$ and the bound is trivial.
\end{proof}

For the initial and the terminal interval we can derive tighter bounds.

\begin{corollary}[Boundary Intervals]\label{cor:boundary-intervals}
    The initial interval has delay at most $k(n+m) + (k+1) \le
    (k+1)(n+m)$, and at most $k(n+m)$ when no output is produced.
    The terminal interval has delay at most $2(k+1)(n+m)$.
\end{corollary}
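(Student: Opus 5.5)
The plan is to rerun the tally from the proof of \fullref{thm:worst-case-delay} on the two boundary intervals and observe which accounts vanish.

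\emph{Initial interval.} Interval $0$ is $(o_0, o_1]$. Descent $0$ is empty, so it consists of ascent $1$ alone.

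Search work for calls entered in the interval: \fullref{cor:entry-bound} allows at most $k$ calls per node $x \ne s$ in this single phase, plus the one call $\search(s)$. This gives at most $k(n+m)$ steps, since $s$ contributes one call, which is at most $k$.

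The remaining accounts are as follows.
\begin{itemize}
\item The spine account for frames open at $o_0$ is empty, because no frame is open at $\beta_0 = o_0$ (\autoref{lem:descent-stack}(1)).
\item Drop work is zero, because an ascent contains no cascade (\fullref{lem:phase-a-purity}).
\item Output work is at most $k+1$ if $o_1$ is an actual output. It is zero if $T=0$, in which case $o_1$ is the terminal event.
\end{itemize}
Summing gives $k(n+m) + (k+1)$, and $k(n+m)$ when no output is produced. The inequality $k(n+m) + (k+1) \le (k+1)(n+m)$ amounts to $k+1 \le n+m$. For $m \ge 1$ this follows from $k \le n$. For $m=0$ the run is the single fruitless call and the bound is trivial, as in the theorem.

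\emph{Terminal interval.} Interval $T$ (with $T \ge 1$) is descent $T$ followed by ascent $T+1$. Ascent $T+1$ carries only the terminal event, which costs nothing and enters no call. The accounts are:
\begin{itemize}
\item Calls entered in the interval are all entered in descent $T$, so \fullref{cor:entry-bound} applied to this one phase gives at most $k(n+m)$.
\item The spine account for frames of $S_T$ open at $o_T$ is at most $k+m$, as before.
\item Drop work in descent $T$ is at most $k(n+m)$, as in the theorem.
\item Output work is zero.
\end{itemize}
The total is $2k(n+m) + k + m$. Using $k \le n$, we get $k + m \le n+m \le 2(n+m)$, so the total is at most $2(k+1)(n+m)$.

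For $T=0$ the terminal interval coincides with the initial one, and the bound $k(n+m)$ already covers it.

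The only delicate point is justifying that exactly one phase contributes entries in each boundary interval. The initial interval has an empty descent $0$, and the terminal ascent $T+1$ contains no call because every frame has closed by $\beta_T$ (\autoref{lem:descent-stack}(1)). This is also noted in the proof of \fullref{cor:two-entries}. Once that is settled, the rest is bookkeeping on the existing accounts.
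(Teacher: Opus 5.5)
Your proposal is correct and follows the paper's own proof essentially step for step. The initial interval is ascent $1$ alone, carrying one phase-budget of entries, no spine frames, no dequeues, and at most $k+1$ output steps. The terminal interval is descent $T$ followed by the void ascent $T+1$, carrying one phase-budget of entries, the spine account of at most $k+m$, one descent's worth of dequeues, and no output. Your extra remarks on the single call at $s$, on $m=0$, and on $T=0$ only make explicit what the paper leaves implicit.
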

\begin{proof}
    The initial interval coincides with the first ascent (or with the
    entire run, if no output is produced). It carries one phase-budget of
    entries --- at most $k(n+m)$ scan steps (\fullref{cor:entry-bound})
    --- no spine frames (\fullref{lem:descent-stack}(1)) and, containing
    no cascade (\fullref{lem:phase-a-purity}), no dequeues; the closing
    output, if any, costs at most $k+1$. The terminal interval, being
    descent $T$ followed by the void ascent $T+1$, carries one
    phase-budget of entries, the spine frames of $S_T$ open at $o_T$ ---
    at most $k+m$ scan steps --- its dequeues --- at most $k(n+m)$ ---
    and no output:
    $$k(n+m) + (k+m) + k(n+m) \;\le\; 2(k+1)(n+m),$$
    both final inequalities using $k \le n$ and $m \ge 1$.
\end{proof}

The accounts in \fullref{thm:worst-case-delay} are of different kinds,
and only the first is doubled by the interval:
two phase-budgets of entries, one spine, one descent's worth of dequeues.
Per-node accounting cannot do better than this.
The two entry chains of an interval can be nearly full simultaneously --- they are the
chains of two adjacent periods, separated by a single drop --- and entries
at a node do not trade off against its dequeues: after $r$ entries in a
phase the barrier at $x$ is at least $r$, leaving room for \emph{more}
drops, not fewer.

At period granularity the entry account halves. The raises of a period ---
those of its ascent together with those of its descent --- form a single
chain with no intervening drop, so per node one chain of fruitless entries
replaces two (\fullref{cor:entry-bound}); the spine, drop and output
accounts are unchanged, the period being closed by the boundary
$\beta_\tau$ and containing exactly one event, namely $o_\tau$.
As the periods correspond one-to-one to the events $o_1, \dots, o_{T+1}$,
the per-period tally below is a per-event tally.

\begin{theorem}[Amortized Delay]\label{thm:amortized-delay}
    For a graph $G$ with $n$ nodes and $m$ edges, and the standing length
    bound $0 < k \le n$, $\bsdfs(G,s,t,k)$ reaches event $o_p$
    within $2p(k+1)(n+m)$ elementary steps, for $p = 1, \dots, T+1$.
    In particular, it produces its first $p$ outputs within $2p(k+1)(n+m)$ steps,
    so the amortized per-output delay is at most $2(k+1)(n+m) \in O(k(n+m))$.
\end{theorem}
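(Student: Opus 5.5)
Since the periods $(\beta_{\sigma-1},\beta_\sigma]$ tile the execution and period $\sigma$ contains exactly the event $o_\sigma$, event $o_p$ is reached by the end of period $p$ at the latest. It therefore suffices to show that every period $\sigma$ executes at most $2(k+1)(n+m)$ elementary steps. Summing over $\sigma = 1, \dots, p$ then gives $2p(k+1)(n+m)$. Applying this with $p$ equal to the number of outputs, or $p = T+1$, yields the amortized per-output bound. The per-period tally follows \autoref{sec:work-attribution}, with one account per kind of work, mirroring the proof of \fullref{thm:worst-case-delay}.

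\emph{Fruitless search work.} Every fruitless call lies within a single phase (\fullref{cor:call-phase}(2)), hence within a single period. By the period clause of \fullref{cor:entry-bound}, at most $k$ fruitless calls per node $x \ne s$ are entered in the period. Charging each its full scan cost gives at most $k(n+m)$ steps; the single call at $s$ is covered by the fruitful account.

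\emph{Fruitful search work.} As argued in \autoref{sec:work-attribution}, every fruitful call executing a step in period $\sigma$ is a spine frame of $S_\sigma$. These frames sit at distinct nodes of $S_\sigma \setminus \{t\}$, at most $k$ of them. Charging each its entire scan cost gives at most $k+m$ steps. For $\sigma = T+1$ the spine is empty and no frame executes, so this account vanishes.

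\emph{Drop work.} All cascades in the period lie in descent $\sigma$ (\fullref{lem:phase-a-purity}). Per node there is at most one origin dequeue (\fullref{lem:spine-tail-origins}), plus a single uninterrupted drop sequence of length at most $k-1$ (\fullref{lem:drop-bound}, using \fullref{lem:period-seal} exactly as in the worst-case proof). This gives at most $k(n+m)$ steps.

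\emph{Output work.} The single event $o_\sigma$ costs at most $k+1$ steps.

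Summing the four accounts,
\[
k(n+m) + (k+m) + k(n+m) + (k+1) \;\le\; 2(k+1)(n+m),
\]
using $k \le n$ and $m \ge 1$, since $(k+m)+(k+1) \le 2(n+m)$ reduces to $2k+m+1 \le 2n+2m$. The case $m=0$ is trivial, as in the worst-case theorem.

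The main obstacle is the fruitless-entry account. The period clause of \fullref{cor:entry-bound} must genuinely cover the whole period, including descent $\sigma$, where raises and drops coexist. I would recheck that every raise in the descent precedes the first drop at that node. This follows from \fullref{lem:period-seal}, together with the fact that the node lies on the search path while its frame is open, so cascades cannot write it. Consequently the raises of the ascent and of the descent form one drop-free chain, bounded by $k$ via \fullref{lem:raise-bound}. The second point to check is that the spine account absorbs every fruitful call touching the period, including common-prefix frames open at $\beta_\sigma$. This holds because those frames are spine frames of $S_\sigma$ as well.
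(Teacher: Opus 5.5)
Your proposal is correct and is essentially the paper's own proof: the same per-period tally into fruitless entries ($k(n+m)$ via the period clause of the Entry Bound corollary), spine frames of $S_\sigma$ ($k+m$), the dequeues of one descent ($k(n+m)$) and the single event ($k+1$), summed over the first $p$ periods. One small slip: for $T=0$ the call $\search(s)$ is fruitless and there is no spine account, so it is not ``covered by the fruitful account''. It is absorbed by the fruitless account instead, since the sum $\sum_{x\in V} k(1+|\suc(x)|)$ includes $x=s$, which is entered only once.
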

\begin{proof}
    Fix a period $\tau$, $1 \le \tau \le T+1$, and tally the steps executed
    in it, following \autoref{sec:work-attribution}.

    \begin{description}
    \item[Search work, fruitless calls.]
        Each fruitless call lies within one phase
        (\fullref{cor:call-phase}(2)), hence within one period. By
        \fullref{cor:entry-bound}, at most $k$ of them are entered per node
        per period:
        $$\sum_{x \in V} k \cdot (1 + |\suc(x)|) \;=\; k\,(n+m).$$
    \item[Search work, fruitful calls.]
        Every fruitful call executing a step in period $\tau$ either returns
        within the period --- in the descent, at a spine frame of $S_\tau$
        (\fullref{lem:phase-a-purity},
        \fullref{lem:spine-tail-origins}) --- or is open at the boundary
        $\beta_\tau$, hence a common-prefix frame
        (\autoref{lem:descent-stack}(1)), also a spine frame of $S_\tau$.
        Either way, these are pairwise distinct frames at pairwise distinct
        nodes on $S_\tau \setminus \{t\}$ --- at most $k$ of them
        (\fullref{lem:search-path-length}). Charging each its entire scan
        cost:
        $$\sum_{v_i \in S_\tau \setminus \{t\}}
            \bigl(1 + |\suc(v_i)|\bigr) \;\le\; k + m.$$
        For $\tau = T+1$ the account is void: $S_{T+1}$ is the empty spine,
        and no fruitful call executes a step in the terminal period ---
        for $T \ge 1$ because no frame is open at $\beta_T$
        (\autoref{lem:descent-stack}(1)), and for $T = 0$ because no call
        is fruitful at all.
    \item[Drop work.]
        All dequeues executed in period $\tau$ lie in descent $\tau$
        (\fullref{lem:phase-a-purity}), and the argument of
        \fullref{thm:worst-case-delay} applies verbatim --- the drops of a
        period all lie in its descent --- giving at most $k$ dequeues per
        node:
        $$\sum_{x \in V} k \cdot (1 + |\pre(x)|) \;=\; k\,(n+m).$$
    \item[Output work.]
        Period $\tau$ contains exactly the event $o_\tau$; producing an
        output costs at most $k + 1$ steps, and the terminal event costs
        none.
    \end{description}

    Summing, the per-period tally is bounded by
    $$k(n+m) + (k + m) + k(n+m) + (k + 1) \;\le\; 2(k+1)(n+m),$$
    the last inequality using $k \le n$ and $m \ge 1$; for $m = 0$ the run is
    the single fruitless call $\search(s)$ and the bound is trivial.

    Summing up over the first $p$ periods, one for each of $o_1, \dots, o_p$,
    yields the desired bound.
\end{proof}

The $O(k(n+m))$ worst-case and amortized delay bounds carry over to the
enumeration of all simple cycles of length at most $k$
after the initial $O(n(n+m))$ pre-processing of \autoref{sec:all-cycles}.
The runs $\bsdfs(G^i, v_i, v_i, k)$ for $v_i \in L$ output every such cycle
exactly once, and the delay bound arguments apply to $G^i$.
Terminating the current run costs $O(k(n+m))$ and setting up the next one
$O(n+m)$; since every executed run produces at least one output, both are
absorbed.

We close this section with a graph family on which $\bsdfs$ exhibits
$\Omega(k(n+m))$ delay.

\begin{lemma}[Clique Traps]\label{lem:clique-trap}
    Let $N_{c,k} := kc - k(k+1)/2 + 1$.
    \begin{enumerate}
    \item For $0 < k < c$ let $G_c$ be the digraph on
        $\{0,\dots,c-1\} \cup \{t\}$ whose edges are all ordered pairs of
        distinct nodes below $c$, so that $t$ is isolated and $n = c+1$,
        $m = c(c-1)$. Then, for every successor scan order,
        $\bsdfs(G_c,0,t,k)$ produces no output, and its single interval
        has delay exactly $c\,N_{c,k}$.
    \item For $3 \le k < c$ let $G_c^{+}$ arise from $G_c$ by adding the
        edge $(0,t)$, scanned first at node $0$; then $n = c+1$ and
        $m = c(c-1)+1$. Then $\bsdfs(G_c^{+},0,t,k)$ produces the single
        output path $(0,t)$ after $4$ steps, and its terminal interval has delay
        exactly $c\,N_{c,k} + c^{2} - 1$.
    \end{enumerate}
\end{lemma}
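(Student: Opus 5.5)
The plan is to analyse the execution exactly, using a barrier invariant, and then to count calls. We do not use the general delay machinery here. Since $t$ is isolated in $G_c$, no call is ever fruitful, so by \fullref{lem:fruitless-monotonicity} no cascade runs and every barrier write is a raise. Every call at a node $x<c$ costs exactly $1+|\suc(x)| = c$ elementary steps, because each such node has out-degree $c-1$. The delay of the single interval is therefore $c$ times the number of $\search$ calls, and part~1 reduces to showing that exactly $N_{c,k} = 1 + \sum_{h=1}^{k}(c-h)$ calls occur. The count $1+\sum_{h=1}^{k}(c-h)$ suggests the claim to aim for: exactly one call at depth $0$ and exactly $c-h$ calls at depth $h$, for $1\le h\le k$.

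The first step is a monotonicity fact. A raise at depth $h$ writes $k-h+1$, and there are no drops. A later entry of the same node at depth $h'$ needs $b\le k-h'$ (\fullref{lem:parent-pruning-guard}), hence $h'<h$. So every node is entered at strictly decreasing depths, at most once per depth, and \fullref{lem:one-entry-phase} confirms the uniqueness part. The second step is an invariant proved by induction on the recursion, bottom-up from depth $k$. It should state that when a call $\search(v)$ at depth $h$ returns, $b[v]=k-h+1$ and every other node $x\notin Pv$ has $b[x]\ge k-h$, with equality exactly for the nodes it or its descendants have raised.

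With this invariant I will trace the scan at each level. The successors that pass the guard at depth $h$ are those off the path with barrier at most $k-h-1$. Combining the invariant with the symmetry of the clique, I expect the run to be a single ``staircase'': one call per depth gets to recurse deeper, and at each depth $h$ precisely the $c-h$ nodes off the current length-$h$ prefix are entered once. Because the graph is a complete digraph, the scan order only permutes node names and does not change these counts, which gives independence from the scan order. Verifying this staircase precisely is the main obstacle. The invariant has to be stated so that it survives a parent scanning siblings after the first child has already raised them. I would first try stating it as ``all nodes off $P$ have barrier in $\{k-h,\,k-h+1\}$ at return'' and checking the guard arithmetic $b+h<k$ case by case.

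For part~2, the edge $(0,t)$ is scanned first. It passes the guard $b[t]+0<k$, so $(0,t)$ is output after entry, the push, the initializations and the test, which gives the claimed $4$ steps. The rest of the scan of $0$ is then exactly the run of part~1, so the same $c\,N_{c,k}$ count applies; this is where $k\ge 3$ should be needed, to keep the staircase unchanged. The root then returns fruitful with $sd=1$ and runs $\fruitful(0,1)$. Since every other node has barrier $\ge 2$ by then, the cascade drops all $c-1$ other nodes and dequeues each of them once. The cost is $c$ dequeues, each with about $c-1$ predecessor scans plus the root's own, which I expect to total exactly $c^2-1$ extra steps, after adjusting for the root's extra edge. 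I would pin down this constant last by careful bookkeeping of which guards fire.
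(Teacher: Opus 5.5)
\textbf{There is a genuine gap in part~1.} Your reduction is sound: no call is fruitful, so there are only raises and no cascade, and every call costs $c$. Part~1 therefore amounts to showing one call at depth $0$ and $c-h$ calls at depth $h$. That count is the entire content of part~1, and you never establish it. Depth monotonicity only gives the upper bound $1+(c-1)k$.

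\emph{Your invariant is false.} Take $c=4$, $k=2$ and ascending scan order. $\search(1)$ at depth $1$ enters $2$ and $3$ at depth $2$ and raises both to $1$. The root then enters $2$ at depth $1$; it rejects both off-path successors and returns. At that return $b[3]=1=k-h$, yet neither $\search(2)$ nor any descendant raised node $3$. The failure is structural: an induction over the call tree needs a hypothesis on the entry state, and that state is produced by earlier non-descendant calls.

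\emph{Your fallback does not yield the count either.} Besides having to exclude $t$, whose barrier stays $0$, the statement ``$b\in\{k-h,k-h+1\}$ off $P$'' leaves open exactly what must be counted. The parent's guard $b[w]+(h-1)<k$ passes precisely for the value $k-h$ and fails precisely for $k-h+1$.

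\emph{Your symmetry argument for scan-order independence is invalid.} Each node's successor order is chosen independently. Such families of orders are not related by one global relabeling of the clique.

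\textbf{The missing idea} is an induction on $h$ with a sharper claim. Name the nodes of the initial descent $0,1,2,\dots$. The claim is that the depth-$h$ calls are exactly the children of the \emph{first} depth-$(h-1)$ call, whose stack is $\{0,\dots,h-1\}$.
\begin{itemize}
\item When that call scans an off-stack successor $w$, either $w$ was never entered ($b[w]=0$), or it was entered only inside an earlier sibling's subtree at depth $\ge h+1$. In both cases $b[w]\le k-h$, so all $c-h$ such successors pass.
\item Afterwards each of them has $b\ge k-h+1$, and node $h-1$ has $k-h+2$. Hence every later depth-$(h-1)$ call rejects all of its off-stack successors.
\end{itemize}
This gives the count. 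It also gives order independence, since every scan returns a uniform verdict. Finally, it shows the final barriers are $b[j]=k$ for all $j\ge1$, because each such node is last entered at depth $1$.

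\textbf{Part~2 needs exactly these final values, and your argument there goes wrong.} The scan of $t$ enters no call and writes no barrier, so the call structure is unchanged for \emph{every} $k$. The hypothesis $k\ge3$ is needed for the cascade, not the staircase. The drop guard is $b[j]>sd+1=2$, so ``barrier $\ge2$'' is not sufficient; for $k=2$ only node $0$ is dequeued.

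With $b[j]=k>2$, all $c$ nodes are dequeued at cost $1+|\pre(x)|=c$ each, giving $c^2$ for the cascade. The whole run costs $cN_{c,k}+1+2+c^2$: the $+1$ is the root's extra scan and the $+2$ is the output. The first $4$ steps are the root's bookkeeping step, the scan of $t$, and the two output steps; push and initialization are not separate steps in this model. Subtracting these $4$ steps leaves $cN_{c,k}+c^2-1$.
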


\begin{proof}
    (1) No node reaches $t$, so no call returns fruitfully: every barrier
    is written by a raise, no cascade is executed, and there is no output.
    A call at $v$ entered at depth $h$ passed $b[v] + (h-1) < k$ and writes
    $b[v] = k-h+1$ on return, so any later call at $v$ has depth $h' < h$;
    entry depths of a node strictly decrease.

    We claim that for $1 \le h \le k$ the calls at depth $h$ are exactly
    the $c-h$ children of the \emph{first} call at depth $h-1$. That call
    is entered during the initial descent, so its stack is
    $\{0,\dots,h-1\}$ and no node outside has been entered. Of its $c-1$
    successors, $h-1$ lie on the stack. An off-stack successor $w$ was, if
    entered at all, entered inside the subtree of an earlier off-stack
    successor, hence at depth $\ge h+1$, so $b[w] \le k-h$ and the guard
    $b[w] + (h-1) < k$ holds; all $c-h$ of them are therefore entered at
    depth $h$. Afterwards each carries $b = k-h+1$, which fails that same
    guard, so no later call reaches depth $h$. Summing,
    $\sum_{h=1}^{k}(c-h) + 1 = N_{c,k}$ calls, each scanning $c-1$
    successors at cost $1 + (c-1) = c$ (\autoref{sec:work-attribution}),
    while drop and output work are void.

    (2) In $G_c^{+}$ the scan of $t$ at node $0$ passes $b[t] + 0 < k$ and
    sets $sd \gets 1$; it enters no call and writes no barrier, so the call
    structure and all raises are those of (1) verbatim, leaving $b[j] = k$
    for $1 \le j < c$. Node $0$ scans one successor more, giving
    $c\,N_{c,k} + 1$ scan steps, and the output $(0,t)$ costs $2$. Since
    $sd = 1 \le k$, $\search(0)$ returns fruitfully: the cascade sets
    $b[0] \gets 1$ and dequeues $0$ at cost $1 + |\pre(0)| = c$; as
    $k \ge 3$ we have $b[j] = k > 2$, so every $b[j]$ drops to $2$ and is
    dequeued at cost $1 + |\pre(j)| = c$, giving $c^{2}$ in total, after
    which no barrier exceeds $2$ and the cascade stops. The interval
    boundary lies after the output, at step $4$.
\end{proof}

In \fullref{lem:clique-trap}, we have $n+m = c^{2}+1$ resp.\ $c^{2}+2$.
For $c \to \infty$ and $k \in o(n)$, e.g.\ fixed $k$, the run on $G_c$ is
a single fruitless initial interval, for which
\fullref{cor:boundary-intervals} proves the bound $k(n+m)$; the delay
$c\,N_{c,k}$ divided by $k(n+m)$ tends to $1$, so that bound is
asymptotically tight: the trap saturates the entry account of
\fullref{cor:entry-bound} node for node, and pays nothing for drops or
outputs. Against the general theorems the same run leaves the factors
$2(k+1)/k$ resp.\ $3(k+1)/k$ --- slack residing entirely in the accounts
the trap does not exercise. The one added edge of $G_c^{+}$ supplies part
of what the trap lacks: $\search(0)$ returns fruitfully, and the
terminating cascade sweeps the clique at cost $1 + |\pre(x)|$ per node, a
further $c^{2} \sim n+m$ steps. Its terminal-interval delay divided by
$(k+1)(n+m)$ therefore tends to $1$, bounding
\autoref{thm:worst-case-delay} from below by a factor $3$ and the
terminal-interval bound of \fullref{cor:boundary-intervals} by a factor
$2$: the cascade dequeues each node once, where the budget admits $k$
dequeues per node. We are not aware of a family forcing more.

\section{Edge-Consistent Variants of $\bsdfs$}\label{sec:bsdfs-variants}

The pruning of $\bsdfs$ is governed entirely by the barrier values, and the only
freedom in the algorithm is \emph{how} the algorithm revises the barriers.
Several schemes share the same skeleton --- the recursive search, and the pruning condition
$b[w] + h < k$ --- and differ only in the barrier post-processing after recursion.
We call a scheme \emph{edge-consistent} if it maintains edge-consistency with respect to the current search path
at every \emph{quiescent point}, i.e.\ whenever control is in $\search$ and not inside a fruitful cascade;
by \fullref{lem:pruning-is-permissive} every edge-consistent scheme is complete.
The \emph{tight} scheme of the original $\bsdfs$ is one extreme.
We record three further points on the spectrum.

\subsection{The Trivial Scheme}\label{sec:trivial-scheme}
Leave every barrier at $0$: the algorithm does not manage barriers at all.
Then $b \equiv 0$ is trivially edge-consistent w.r.t.\ any search path, so the scheme is complete ---
it is plain depth-first search with the length cutoff $h < k$, performing no barrier-based pruning.
Soundness and completeness obviously hold, but there is no polynomial delay bound:
if the depth-first search first descends into a complete subgraph of size $k-1$ not
containing the target $t$, it traverses $\Theta((k-2)!)$ partial paths before
producing any output.

\subsection{The Loose Scheme}\label{sec:loose-scheme}

The \emph{loose} scheme keeps the backward cascade of the fruitful case but replaces
the distance values it deposits by $0$. That is, \fullref{alg:fruitful} is replaced by
$\textsc{Reset}(v)$: a breadth-first traversal backwards from $v$ over predecessors,
restricted to nodes off the search path, setting every visited barrier to $0$.

\begin{algorithm}[H]
    \caption{$\textsc{Reset}$ (loose fruitful case)}\label{alg:reset}
    \begin{algorithmic}[1]
        \STATE \textbf{$\textsc{Reset}(v)$}
        \STATE \quad $b[v] \gets 0$
        \STATE \quad $Q \gets$ queue containing $v$
        \STATE \quad \textbf{while} $Q$ not empty \textbf{do}
        \STATE \quad\quad $q \gets$ dequeue $Q$
        \STATE \quad\quad \textbf{for each} $p \in \pre(q)$ \textbf{do}
        \STATE \quad\quad\quad \textbf{if} $p \notin S$ \textbf{and} $b[p] \ne 0$ \textbf{then}
        \STATE \quad\quad\quad\quad $b[p] \gets 0$
        \STATE \quad\quad\quad\quad enqueue $p$ onto $Q$
    \end{algorithmic}
\end{algorithm}

The guard $b[p] \ne 0$ cannot be tightened to the depth-aware $b[p] > h + 1$ of the
tight cascade. Because $\textsc{Reset}$ records no distance information, edge-consistency
is preserved only by resetting the \emph{entire} reachable off-path cone: stopping the
propagation early would leave a node with a positive barrier adjacent to a node reset to
$0$, breaking consistency. The full cone is the price of discarding distances, and it is
why the loose scheme has no delay advantage --- a point we return to below.

The conceptual value of the loose scheme is that it isolates which part of the tight
correctness proof is essential. We show it is edge-consistent; completeness then follows
from \fullref{lem:pruning-is-permissive} as for any edge-consistent scheme. As in the
tight case, a cascade transiently violates edge-consistency while in flight and restores
it on completion, so the statements below concern quiescent points.

The case that dominates the tight proof disappears. A fruitful node ends its cascade
with $b[v] = 0$ (line 2 of \autoref{alg:reset}), so every \emph{outgoing} edge $(v,y)$
satisfies $b[v] \le b[y] + 1$ as $0 \le b[y] + 1$. This is precisely the case for which
the tight proof needs the exact value $sd = dist_{Pv}(v,t)$, and hence completeness;
here it is free. The only case requiring an argument is an off-path edge $(x,y)$ whose
head $y$ was reset to $0$, settled by the following closure property.

\begin{lemma}[Reset Closure]\label{lem:reset-closure}
    Let $\textsc{Reset}(v)$ run with search path $S$, and let $R$ be the set of nodes it
    sets to $0$. Then at the cascade's completion, every off-path predecessor of a node
    in $R$ has barrier $0$.
\end{lemma}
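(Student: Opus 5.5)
The plan is to read the closure property directly off the mechanics of $\textsc{Reset}$, and to separate what holds at the moment of the write from what holds at completion. Let $R$ consist of $v$ together with every node $p$ for which line 8 executes. Every node in $R$ is enqueued exactly once. For $v$ this is line 3. For any other $p$, line 8 fires only under the guard $b[p] \ne 0$, and it immediately sets $b[p]$ to $0$, so the same node cannot pass the guard a second time. Since the queue empties, as in \fullref{lem:termination}, every $q \in R$ is eventually dequeued, and its predecessor loop (lines 6--9) scans all of $\pre(q)$.

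Fix $q \in R$ and an off-path predecessor $p \in \pre(q)$, meaning $p \notin S$. When $q$ is dequeued and $p$ is examined, one of two things happens. If $b[p] \ne 0$, the guard fires and sets $b[p] \gets 0$. Otherwise $b[p] = 0$ already. In both cases $b[p] = 0$ immediately after this examination.

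It then remains to show that $b[p]$ stays $0$ until the cascade completes. The only writes during $\textsc{Reset}$ are the assignments $b[\cdot] \gets 0$ in lines 2 and 8, so no barrier ever becomes nonzero while the cascade runs. The search path $S$ is fixed throughout, so the classification of $p$ as off-path does not change. Hence $b[p] = 0$ at completion, which proves the claim.

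The main point of care is not computational but one of scope. The statement concerns barrier values at the cascade's completion. It does not concern the value of $b[p]$ at the time $q$ was written. A predecessor examined before $q$ was dequeued is therefore not an issue, because every member of $R$ is dequeued before the queue empties. I would also note explicitly that on-path predecessors of nodes in $R$ are excluded by the guard $p \notin S$. This is consistent with the statement, which only asserts the property for off-path predecessors. That is exactly what edge-consistency with respect to the search path requires of edges whose head lies in $R$.
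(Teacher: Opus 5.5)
Your proposal is correct and follows the paper's own proof. In both, every node of $R$ is enqueued and therefore dequeued before the queue empties. Each off-path predecessor is then either reset by the guard or already $0$, and because $\textsc{Reset}$ only ever writes zeros, that value persists to completion.
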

\begin{proof}
    Every node placed in $R$ is enqueued, and the while-loop runs until $Q$ is empty, so
    each $y \in R$ is dequeued and \emph{all} its predecessors are examined (lines 6--8).
    For an off-path predecessor $x$ of such a $y$: if $b[x] \ne 0$ the guard sets it to
    $0$; otherwise $b[x] = 0$ already. Either way $b[x] = 0$, and since the cascade only
    sets values to $0$, this persists to completion.
\end{proof}

If $(x,y)$ is off-path with $y \in R$, then $x$ is an off-path predecessor of a reset
node, so $b[x] = 0$ and $b[x] \le b[y] + 1$. The configuration one might fear --- a
fruitless node $x$ with a high barrier pointing into a freshly-reset $y$ --- cannot
arise, because the cascade reaches $x$ \emph{through} $y$ and resets it too. Off-path
edges with both endpoints untouched inherit consistency from before the cascade, and the
edges that the pop of $v$ newly constrains are consistent for the same two reasons
($b[v] = 0$ outgoing, \autoref{lem:reset-closure} with $y = v$ incoming). The fruitless
case and the push are as in the tight scheme. Hence, the loose scheme is edge-consistent,
and the four-way joint induction of \autoref{sec:completeness} collapses to a single
inequality invariant: no exact-value claim, no coupling to completeness.

\paragraph{Delay.}
Both tight and loose schemes are complete and deterministic and scan successors in the same order.
Since pruning removes only output-free subtrees, each emits its outputs in
lexicographic path order; hence the two output sequences are equal.
Therefore, the intervals correspond one-to-one and the spine paths coincide.
The fruitful calls returning within an interval are exactly the calls along its spine,
so the spine account of the tally in \fullref{thm:worst-case-delay} is identical in
both schemes.

Nonetheless, fruitless searches within a phase can multiply: barriers
are reset to $0$ instead of being updated to the tight bound,
\fullref{lem:one-entry-phase} fails for the loose scheme, and pruned
subtrees can be re-explored.
%
%
This is not merely a gap in the analysis.
In \autoref{app:loose-breaker} we construct a family $(G_k)$ with
$n + m = \Theta(k)$ on which the loose scheme performs $\Omega(k^2(n+m))$
steps between two consecutive outputs (\autoref{thm:loose-breaker}).
Hence, the loose scheme admits no $O(k(n+m))$ delay bound: the distance information
deposited by the tight cascade is essential to \fullref{thm:worst-case-delay}.

\subsection{The Lazy Scheme}\label{sec:lazy-scheme}

The loose scheme (\autoref{sec:loose-scheme}) discards distance information
but still incurs, at every fruitful step, the cost of a backward cascade
over the entire reachable off-path cone.
The \emph{lazy} scheme retains the same coarse reset to $0$ while deferring the cascade.
A fruitless node \emph{registers} itself in a dependency list at each of its successors,
and a later fruitful node \emph{consumes} those lists, clearing only the nodes that registered through it.
Distance information is still discarded---a fruitful node is reset to $0$ exactly as in the loose scheme---
so the lazy scheme inherits the loose scheme's simple correctness argument,
which does not rely on exact distance values.
The only difference is \emph{which} barriers are cleared by a fruitful step,
and \emph{when} they are cleared.

Each node $x$ now carries, in addition to its barrier $b[x]$, a \emph{dependency list} $B[x] \subseteq V$.
When a call $\search(v)$ returns fruitless at depth $h$, it performs the usual assignment
\[
b[v] \gets k-h+1
\]
and additionally inserts $v$ into $B[w]$ for every successor $w \in \suc(v)$ (the registration step, line~\ref{line:register}).
The fruitful case runs $\textsc{Update}(v)$ instead of the loose reset;
both are given in \autoref{alg:update-lazy}.

\begin{algorithm}[H]
    \caption{Lazy fruitful and fruitless handling, replacing the body of the
    fruitful/fruitless branch in $\search$ (\autoref{alg:search}).}\label{alg:update-lazy}
    \begin{algorithmic}[1]
        \STATE \textbf{if} $sd \le k$ \textbf{then} \quad \# fruitful
        \STATE \quad $b[v] \gets 0$
        \STATE \quad $\textsc{Update}(v)$
        \STATE \textbf{else} \quad \# fruitless
        \STATE \quad $b[v] \gets k - h + 1$
        \STATE \quad \textbf{for each} $w \in \suc(v)$ \textbf{do} \quad $B[w] \gets B[w] \cup \{v\}$ \label{line:register}
        \STATE
        \STATE \textbf{$\textsc{Update}(v)$}
        \STATE \quad \textbf{for each} $u \in B[v]$ \textbf{do}
        \STATE \quad\quad \textbf{if} $u \notin S$ \textbf{and} $b[u] > 0$ \textbf{then}
        \STATE \quad\quad\quad $b[u] \gets 0$
        \STATE \quad\quad\quad $\textsc{Update}(u)$
        \STATE \quad $B[v] \gets \emptyset$ \label{line:clear}
    \end{algorithmic}
\end{algorithm}

The barrier $b[\cdot]$ is initialized to $0$ and the dependency lists $B[\cdot]$ to
$\emptyset$; the rest of $\bsdfs$ (\autoref{sec:algorithm}) is unchanged.
The clear of $B[v]$ in line~\ref{line:clear} couples \emph{de-registration} with \emph{clearing}:
a node is removed from a dependency list exactly when the cascade that clears its
barrier passes through. Without it the lists would only grow, accumulating stale
entries; with it they record precisely the nodes whose positive barrier is still in
force, which is what the analysis below exploits.

Gupta and Suzumura~\cite{DBLP:journals/corr/abs-2105-10094} pair lazy registration
with the \emph{tight} (distance-depositing) fruitful case;
but that combination is not always complete, see \cite{boundedcycles2025}.
The lazy scheme is complete precisely because it is coarse: it resets to $0$, so no exact
distance need be maintained, and registration only has to control \emph{which} nodes
are reset.

Next, we show completeness of the lazy scheme.

\paragraph{The registration invariant.}
The mechanism rests on a single edge-local invariant.

\begin{lemma}[Registration Invariant]\label{lem:registration-invariant}
    At every instant of the execution,
    including while a cascade is in progress,
    every node $x$ with $b[x] > 0$ and $x \notin S$
    satisfies $x \in B[w]$ for every successor $w \in \suc(x)$.
\end{lemma}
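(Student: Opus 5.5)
We argue by induction over the sequence of elementary state changes of the lazy execution. The invariant concerns the predicate $I(x) :\Leftrightarrow \bigl(b[x] > 0 \wedge x \notin S\bigr) \Rightarrow x \in B[w]$ for all $w \in \suc(x)$. It holds initially, since $b \equiv 0$. A state change can falsify $I(x)$ in only three ways: by making the premise true (raising $b[x]$ above $0$, or popping $x$ from $S$), or by removing $x$ from some list $B[w]$ while the premise holds. The plan is to enumerate the operations of $\search$ and \autoref{alg:update-lazy} that can do any of these, and show that each one either re-establishes the conclusion or keeps the premise false.

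\emph{Pushes and barrier writes.} A push only adds a node to $S$, which can only make the premise false. Writes that set $b[\cdot] \gets 0$ (the fruitful assignment, and the assignments inside $\textsc{Update}$) also make the premise false. The only positive write is the fruitless assignment $b[v] \gets k-h+1$. At that moment $v \in S$, so the premise of $I(v)$ is still false. Line~\ref{line:register} then immediately inserts $v$ into $B[w]$ for every $w \in \suc(v)$, before $v$ is popped. Hence, when the pop makes $v \notin S$, the conclusion already holds.

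\emph{The fruitful pop.} When a fruitful $\search(v)$ pops $v$, we have $b[v] = 0$, because $v$ was set to $0$ and is never re-raised while on $S$ (Observation~\ref{obs:path-barrier-fixed}). So the premise is false and $I(v)$ holds.

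\emph{The main obstacle: clearing lists in $\textsc{Update}$.} The key step is line~\ref{line:clear}, $B[v] \gets \emptyset$, which removes every $u \in B[v]$ from the list of its successor $v$. We must show that every such $u$ fails the premise once the clear has happened. I would establish, for a call $\textsc{Update}(v)$, that just before line~\ref{line:clear} every $u \in B[v]$ satisfies $u \in S$ or $b[u] = 0$. The loop examines each $u$ once. If $u \in S$ or $b[u] = 0$, it is left untouched. Otherwise it is set to $0$, and the recursive $\textsc{Update}(u)$ only writes zeros, so $b[u]$ stays $0$.

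Two points remain to check. First, $S$ does not change during a cascade, and no positive write occurs inside one. Hence $u \in S$ or $b[u] = 0$ persists until the cascade ends, and after that the earlier cases apply. Second, we must handle the fact that $B[v]$ may be modified during the loop by the nested $\textsc{Update}$ calls. Nested calls only clear lists $B[u]$; they never insert, because insertion happens only at line~\ref{line:register}, outside cascades. A nested clear of $B[v]$ itself, reached through a cycle back to $v$, would require $b[v] > 0$ and $v \notin S$. But $b[v] = 0$ already holds: $v$ is either the origin or was set to $0$ before its recursive $\textsc{Update}(v)$. So no such re-entry occurs, and the loop iterates over a fixed list.

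With these cases covered, every transition preserves $I(x)$ for all $x$, including the transient states mid-cascade. Those states only lower barriers and remove entries of nodes already having $b = 0$ or lying on $S$.
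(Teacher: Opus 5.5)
Your proof is correct, but it is organized differently from the paper's.

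\textbf{The paper's route.} The paper argues backwards from a fixed instant. The current positive value of $b[x]$ must stem from a most recent fruitless assignment by some $\sigma = \search(x)$, and $\sigma$ registered $x$ at all successors. A later clear of a list $B[w]$ containing $x$ is then refuted by a three-way case analysis at the moment the loop of $\textsc{Update}(w)$ reaches $x$: $x\notin S$ with $b[x]>0$; $x\notin S$ with $b[x]=0$; $x\in S$. Each case forces a write to $b[x]$ later than $\sigma$'s, or puts $x$ on the stack at the instant in question, contradicting the choice of $\sigma$.

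\textbf{Your route.} You show instead that every elementary transition preserves $I(x)$. The same two facts now appear as preservation steps rather than contradictions:
\begin{enumerate}
\item positive values are written only while the node is on $S$, and are registered before its pop;
\item a clear removes only entries whose node is at that moment on $S$ or zeroed.
\end{enumerate}

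\textbf{What each buys.} In your version, the paper's ``$x\in S$'' case simplifies. There it needs an argument about a fresh $\search(x)$ that either returns (writing $b[x]$ again) or is still on the stack. For you it reduces to the observation that every pop re-establishes $I$. The mid-cascade instants are also covered automatically. You further make explicit why the loop of $\textsc{Update}(v)$ runs over a fixed list: there are no insertions inside a cascade, and there is no re-entry into $\textsc{Update}(v)$ because $b[v]=0$. The paper leaves this implicit in ``so $x \in B[w]$ when its loop reaches $x$''. The paper's backward route is shorter, because it only examines the single kind of event that removes list entries rather than enumerating all transitions.
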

\begin{proof}
    A barrier is written only by initialization ($0$), by the fruitful reset ($0$), by
    the fruitless assignment ($k - h + 1$), or by $\textsc{Update}$ ($0$).
    Any positive value of $b[x]$ must originate from a fruitless assignment, made by some
    invocation $\sigma = \search(x)$; immediately after it, $\sigma$ inserted $x$ into
    $B[w]$ for every $w \in \suc(x)$ (line~\ref{line:register}). We show $x$ remains in
    all of these lists until the instant in question.

    A node leaves a list only by the clear $B[w] \gets \emptyset$ on line~\ref{line:clear},
    inside some $\textsc{Update}(w)$. Suppose such an $\textsc{Update}(w)$ runs after
    $\sigma$'s assignment and removes $x$, so $x \in B[w]$ when its loop reaches $x$. At
    that moment:
    \begin{itemize}
        \item if $x \notin S$ and $b[x] > 0$, the loop executes $b[x] \gets 0$ ---
              a write to $b[x]$ later than $\sigma$'s, so $\sigma$'s is not the most recent,
              contradicting that $b[x] > 0$ now stems from $\sigma$;
        \item if $x \notin S$ and $b[x] = 0$, the barrier fell from its positive value to
              $0$ after $\sigma$ --- again a later write, the same contradiction;
        \item if $x \in S$, some $\search(x)$ is on the stack; it began after $\sigma$
              returned (since $\sigma$ popped $x$) and either returns before the instant in
              question, writing $b[x]$ later than $\sigma$, or is still on the stack then,
              making $x \in S$, contrary to assumption.
    \end{itemize}
    Every case is contradictory, so no $\textsc{Update}$ removes $x$ from any $B[w]$;
    hence $x \in B[w]$ for every $w \in \suc(x)$.
\end{proof}

\paragraph{Edge-consistency.}
We show the lazy scheme is edge-consistent w.r.t.\ the current search path at every
point at which control is not inside an $\textsc{Update}$ cascade; completeness then
follows from \fullref{lem:pruning-is-permissive} exactly as for the loose scheme. As
before, a cascade transiently breaks edge-consistency and restores it on completion,
so the statement concerns these quiescent points.

The induction over the run is identical to the loose case
(\autoref{sec:loose-scheme}) on entry, fruitless assignment, and push; only the
fruitful cascade differs, and there \autoref{lem:registration-invariant} replaces the
loose closure property. The crux is the following.

\begin{lemma}[Cascade clears predecessors]\label{lem:lazy-osec}
    During $\textsc{Update}(v)$, for every node $y$ on which $\textsc{Update}(y)$ is
    invoked, every off-stack predecessor $x$ of $y$ has $b[x] = 0$ from the moment
    $\textsc{Update}(y)$ returns.
\end{lemma}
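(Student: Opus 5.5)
The approach is to combine \autoref{lem:registration-invariant} with the fact that the cascade only writes zeros and never writes a positive value. Throughout $\textsc{Update}(v)$ the search path $S$ is fixed, so ``off-stack'' has a single meaning for the whole cascade. Moreover, every write inside the cascade is of the form $b[u] \gets 0$, so once a barrier is $0$ it stays $0$ until control leaves the cascade. This already settles the clause ``from the moment $\textsc{Update}(y)$ returns'': it suffices to show $b[x]=0$ at that single moment.

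Fix $y$ on which $\textsc{Update}(y)$ is invoked, and an off-stack predecessor $x$ of $y$, so $y \in \suc(x)$. I would look at the instant $\textsc{Update}(y)$ is \emph{entered}. If $b[x] = 0$ at that instant, we are done by monotonicity. Otherwise $b[x] > 0$ and $x \notin S$, and \autoref{lem:registration-invariant}, which explicitly holds during a cascade, gives $x \in B[y]$ at that instant.

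The remaining step is to show that $x$ is still in $B[y]$ when the loop of $\textsc{Update}(y)$ iterates over it. The loop then tests $x \notin S$ (true) and $b[x] > 0$. If the test holds, it sets $b[x] \gets 0$. If it fails, $b[x] = 0$ already. Either way $b[x] = 0$ after that iteration, hence also at return.

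The main obstacle is this membership during iteration, because the recursion may modify $B[y]$ before $x$ is reached. I would argue as follows. Within one cascade, $B[y]$ is altered only by the clear on line~\ref{line:clear} of some $\textsc{Update}(y)$, since registration happens only at fruitless returns, outside cascades. A nested re-invocation of $\textsc{Update}(y)$ inside $\textsc{Update}(y)$ cannot occur: $y$ is invoked only after $b[y]$ is set to $0$ (or $y=v$ with $b[v]=0$), and the guard $b[u]>0$ blocks any second call on $y$ while $b[y]$ stays $0$. So the loop of $\textsc{Update}(y)$ iterates over $B[y]$ as it stood at entry, which should be made precise by viewing the loop as iterating over a snapshot or by observing that the list is unmodified until the final clear, and $x$ is among its elements. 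The same no-reinvocation argument also shows the claim is not vacuous for a second call on $y$. With these points in place, the proof is complete.
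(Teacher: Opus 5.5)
Your proposal is correct and follows the paper's proof. At the entry of $\textsc{Update}(y)$, either $b[x]=0$ already or the Registration Invariant places $x$ in $B[y]$, so the loop zeroes $b[x]$; since the cascade writes only zeros, this value persists. Your extra argument that $\textsc{Update}(y)$ is never re-entered within the cascade, so that $B[y]$ is untouched until its final clear, makes explicit a step the paper passes over with ``the loop of $\textsc{Update}(y)$ reaches $x$''.
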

\begin{proof}
    Consider the start of $\textsc{Update}(y)$. If $b[x] > 0$ and $x \notin S$ then,
    since $y \in \suc(x)$, \autoref{lem:registration-invariant} gives $x \in B[y]$, so
    the loop of $\textsc{Update}(y)$ reaches $x$ and, as $x \notin S$, sets
    $b[x] \gets 0$. If instead $b[x] = 0$, nothing is required. A cascade only sets
    barriers to $0$ and never raises one, so $b[x] = 0$ persists until
    $\textsc{Update}(y)$ returns.
\end{proof}

With \autoref{lem:lazy-osec}, the fruitful cascade preserves edge-consistency just as the loose reset did.
Consider an off-path edge $(x,y)$ after $\textsc{Update}(v)$ completes.
If $b[y]$ was cleared by the cascade, then $\textsc{Update}(y)$ was invoked.
Since $x$ is an off-stack predecessor of $y$, \autoref{lem:lazy-osec} implies that $b[x]=0$ from the moment $\textsc{Update}(y)$ returns.
Therefore,
$b[x]=0 \le b[y]+1.$
If $b[y]$ was not cleared, then the edge retains the consistency it had before the cascade.
Indeed, $b[x]$ can only have decreased during the cascade, so any inequality
$b[x] \le b[y]+1$
that held before the cascade continues to hold afterward.
The fruitful node itself ends with $b[v]=0$.
Hence, its outgoing edges are consistent for free.
Its incoming edges are covered by \autoref{lem:lazy-osec} with $y=v$, for exactly the same reason as in the loose scheme.
The fruitless case, the push, and the pop are as in the loose scheme.
Thus, the lazy scheme is edge-consistent at every quiescent point of the execution.
By \fullref{lem:pruning-is-permissive}, it follows that the algorithm produces every simple $s$-$t$-path of length $\le k$.

\paragraph{Delay.}
Like the loose scheme, the lazy scheme discards distance information and
resets fruitful nodes to $0$, and it fails on the same family: on the
graphs $(G_k)$ of \autoref{app:loose-breaker} the deferred cascade retraces
the loose reset exactly, so the lazy scheme likewise performs
$\Omega(k^2(n+m))$ steps between two consecutive outputs and admits no 
$O(k(n+m))$ delay bound.

Its main interest is structural. It demonstrates that the fruitful cascade can be driven \emph{lazily},
through dependency lists accumulated during fruitless search and consumed upon success, while still preserving edge-consistency.
More importantly, it shows that correctness depends on the coarse reset itself rather than on the lazy bookkeeping.
Indeed, it is precisely the combination of lazy registration with a \emph{tight} fruitful case that fails
in $\textsc{CYCLE\_SEARCH}$, as shown in \cite{boundedcycles2025}.
The lazy scheme remains complete because it abandons exact distance information and resets fruitful nodes to $0$.


\section{Analyzing the $\textsc{BC-DFS}$ Algorithm}\label{sec:analyzing-bcdfs}

Now that we have established the correctness of $\bsdfs$ and its variants,
we turn to the $\textsc{BC-DFS}$ algorithm of Peng et al.~\cite{10.14778/3372716.3372720, peng_efficient_2021}.
We show that $\textsc{BC-DFS}$ is not complete and identify a corresponding gap in its correctness proof.
The crucial difference between $\textsc{BC-DFS}$ and $\bsdfs$ lies in the handling of fruitful returns.
In $\bsdfs$, a fruitful call always performs an unconditional barrier update through $\fruitful$ (\autoref{alg:fruitful}).
By contrast, $\textsc{BC-DFS}$ propagates a fruitful value only when a guard condition is satisfied.
Its update procedure, reproduced in \autoref{alg:UpdateBarrier}, assigns a value $l$ to a node $u$ only if $b[u] > l$.
Moreover, if the guard fails for the initial call to $\textsc{UpdateBarrier}$, the entire update cascade is skipped.
This behavior appears both in the conference version~\cite{10.14778/3372716.3372720} and in the journal version~\cite{peng_efficient_2021}.
As we show below, suppressing the cascade in this way can leave stale barrier values in place and may cause valid paths to be pruned.

\begin{algorithm}[H]
\caption{\textsc{UpdateBarrier} from $\textsc{BC-DFS}$ due to Peng et al.\label{alg:UpdateBarrier}}
\begin{algorithmic}[1]

\STATE \textbf{$\textsc{UpdateBarrier}(u, l)$}
\STATE \quad\textbf{if} $b[u] > l$  \textbf{then}
\STATE \quad\quad $b[u] \gets l$
\STATE \quad\quad \textbf{for each} $v \in \mathrm{pred}(u)$ \textbf{do}
\STATE \quad\quad\quad \textbf{if} $v \notin S$ \textbf{then}
\STATE \quad\quad\quad\quad $\textsc{UpdateBarrier}(v, l + 1)$

\end{algorithmic}
\end{algorithm}

\subsection{A Counter-Example to $\textsc{BC-DFS}$ Completeness}

Skipping the update cascade can violate edge-consistency and thereby destroy completeness.
This is demonstrated by the graph $X$ depicted in \autoref{fig:bcdfs-counterexample-X}.

\begin{figure}[H]
    \centering
    \begin{tikzpicture}[
        >=latex,
        node distance=18mm,
        every node/.style={draw, circle, minimum width=7mm, minimum height=7mm}
    ]

        \node (A) at (0,2) {A};
        \node (B) at (3,2) {B};
        \node (C) at (0,0) {C};
        \node (D) at (3,0) {D};
        \node (E) at (5,1) {E};

        \draw[->] (A) -- (B);
        \draw[->] (A) -- (C);

        \draw[->, bend left=10] (B) to (C);
        \draw[->, bend left=10] (C) to (B);

        \draw[->, bend left=10] (B) to (D);
        \draw[->, bend left=10] (D) to (B);

        \draw[->] (C) -- (D);

        \draw[->] (B) -- (E);

    \end{tikzpicture}

    \caption{A Counter-Example to $\textsc{BC-DFS}$ Completeness, Graph $X$ ($s=A$, $t=E$, $k=4$).}\label{fig:bcdfs-counterexample-X}
\end{figure}

The directed graph $X$ is assumed to be stored in adjacency list format in alphabetical order.
It contains the following simple paths starting at node $A$,
terminating at node $E$ and having length 4 or less:
\begin{samepage}
\begin{enumerate}
  \item $(A,B,E)$ of length 2,
  \item $(A,C,B,E)$ of length 3, and
  \item $(A,C,D,B,E)$ of length 4.
\end{enumerate}
\end{samepage}

However, when $\textsc{BC-DFS}(G,s,t,k)$ with $G=X$, $s=A$, $t=E$, $k=4$ is executed,
it will only find the first two paths and miss the third one.
An execution trace for the counter-example is given in
\autoref{list:bcdfs-counterexample-X}.

While $B$ is on the search path, the fruitless searches of $C$ and $D$
raise their barriers to $b[C] = b[D] = 3$.
When the search of $B$ then returns \emph{fruitful}, $\textsc{BC-DFS}$ skips its $\textsc{UpdateBarrier}$ cascade ---
visible as the absence of any update at \texttt{search(B) fruitful}, the barrier
vector being unchanged across the pop of $B$. The relaxations that would lower
$b[C]$ and $b[D]$ along the edges $(C, B)$ and $(D, B)$ are therefore never
performed, and $b[D] = 3$ persists.

The stale barrier is fatal one step later. When the search path becomes $(A, C)$,
the successor $D$ is examined at depth $h = 1$; the pruning test fails, since
$b[D] + h = 3 + 1 = 4 \ge k$, so $D$ is pruned (trace:
\texttt{D pruned (4 >= k)}). The only $A$-$E$-path through this edge,
$(A, C, D, B, E)$, is thereby lost: $\textsc{BC-DFS}$ outputs $(A, B, E)$ and
$(A, C, B, E)$ but misses the valid path $(A, C, D, B, E)$ proving that the algorithm is not complete.
The same stale $b[D] = 3$ had already pruned $D$ at the deeper path $(A, C, B)$,
where $b[D] + h = 3 + 2 = 5 \ge k$ (trace: \texttt{D pruned (5 >= k)}).

Additional counter-examples were found experimentally by searching small random graphs, of which $X$ is among the smallest.
Moreover, infinitely many counter-examples can be obtained by suitable augmentations of $X$.

\subsection{Proof Analysis of $\textsc{BC-DFS}$ Completeness}

The incompleteness of $\textsc{BC-DFS}$ is not merely an implementation issue.
Upon closer inspection, the correctness proof contains a gap precisely at the point where the algorithm fails on the counter-example above.
The proof of Theorem~1 in~\cite{10.14778/3372716.3372720, peng_efficient_2021} argues that
``for any vertex $u$, $u.\mathrm{bar}$ is correctly maintained in Algorithm $\textsc{BC-DFS}$''.
The argument distinguishes two cases, depending on whether the vertex $x$ coincides with a vertex $v$.
For the case $x \ne v$, the proof explicitly describes how barrier values are propagated along the path $p(u \to x)$.
For the case $x = v$, however, it states only that ``the proof is similar'' and does not provide the corresponding argument.
Our counter-example falls precisely into this omitted case. When $x=v$ and $N=\emptyset$, the propagation argument through the vertices of $N$ becomes vacuous.
At the same time, the guard $u.\mathrm{bar} > l$ in $\textsc{UpdateBarrier}$ may prevent the relaxation from reaching $u$.
Consequently, stale barrier values can remain in place and valid paths may be pruned, exactly as observed in the counter-example.
Thus, the failure exhibited by the counter-example corresponds precisely to the case that is not established in the published proof.

\subsection{Proof Analysis of $\textsc{BC-DFS}$ Time Complexity}

We also identify a gap in the time-complexity argument of $\textsc{BC-DFS}$.
Beyond its relevance to the complexity analysis, the discussion reveals several interesting structural properties of the algorithm.
In~\cite{10.14778/3372716.3372720, peng_efficient_2021}, the central step in the proof of Theorem~2, which states that
``$\textsc{BC-DFS}$ is a polynomial-delay algorithm with $O(km)$ time per output'', is the following argument:
\begin{quote}
Suppose a vertex $u$ is unstacked twice and there is no new output in Algorithm 1.
Let $S_1$ and $S_2$ denote the stack size
after $u$ is pushed into the stack at the first and the second time, respectively.
We have $u.bar = k-S_1+2$ after $u$ is unstacked at the first time.
As there is no new output, the propagation of barrier values will not be invoked.
Thus, $u.bar$ remains the same when $u$ is pushed to stack at the second time.
As $u$ passes the barrier-based pruning in the second visit,
we have $S_2 + u.bar \le k$, and hence $S_2 < S_1$.
\end{quote}

However, this argument does not correctly account for the interleaving of $\search$ and $\fruitful$
(here: $\textsc{UpdateBarrier}$) calls while the recursion unwinds. As a result, the claimed monotonicity property is false:
barrier values may change between two visits of the same vertex even when no new output is produced.
A counter-example demonstrating this behavior is depicted in \autoref{fig:bcdfs-counterexample-Y}.

\begin{figure}[H]
    \centering
    \begin{tikzpicture}[
        >=latex,
        every node/.style={draw, circle, minimum width=7mm, minimum height=7mm}
    ]
        \node (A) at (0, 2) {A};
        \node (B) at (2, 0) {B};
        \node (C) at (4, 2) {C};
        \node (D) at (2, 2) {D};
        \node (E) at (0, 0) {E};
        \node (F) at (4, 0) {F};

        \draw[->, bend left=10] (A) to (D);
        \draw[->, bend left=10] (D) to (A);

        \draw[->, bend left=10] (B) to (D);
        \draw[->, bend left=10] (D) to (B);

        \draw[->, bend left=10] (B) to (E);
        \draw[->, bend left=10] (E) to (B);

        \draw[->, bend left=10] (B) to (F);
        \draw[->, bend left=10] (F) to (B);

        \draw[->] (D) -- (C);
        \draw[->] (E) -- (A);
        \draw[->] (E) -- (D);
    \end{tikzpicture}
    \caption{A Counter-Example to $\textsc{BC-DFS}$ Monotonicity, Graph $Y$
             ($s = E$, $t = C$, $k = 6$).}\label{fig:bcdfs-counterexample-Y}
\end{figure}

The graph $Y$ is stored in adjacency-list format in alphabetical order. The complete
execution trace of $\textsc{BC-DFS}(G,s,t,k)$ with $G = Y$, $s = E$, $t = C$, $k = 6$
is given in \autoref{list:bcdfs-counterexample-Y}; the barrier vector
$(b[A], b[B], b[C], b[D], b[E], b[F])$ is shown after every step. $\textsc{BC-DFS}$
produces all three $s$-$t$-paths $(E, A, D, C)$, $(E, B, D, C)$, and $(E, D, C)$,
so completeness is not at issue in this example.

We focus on the node $F$. In the interval between the second output $(E, B, D, C)$ and the
third $(E, D, C)$, two observations are crucial.

First, after the second output $(E, B, D, C)$, the recursion unwinds and the
call $\search(B)$ at search path $(E, B)$ returns fruitful --- its
fruitfulness stems from that earlier output, produced within it at greater depth.
Its $\textsc{UpdateBarrier}$ cascade is therefore invoked during the unwind,
\emph{although no new output has been produced since}. The cascade lowers
$b[B] \gets 2$ and $b[F] \gets 3$. This already contradicts the premise of the
quoted argument --- that in the absence of a new output ``the propagation of
barrier values will not be invoked'': between two consecutive outputs,
cascades of fruitful calls returning on the spine of the previous output do run.

Second, and within the same interval, $F$ is unstacked fruitless \emph{twice}. It is
first searched fruitless at search path $(E, B, F)$, of length $2$, setting
$b[F] \gets 5$; later, after the cascade has reset it to $3$, it is searched fruitless
again at search path $(E, D, B, F)$, of \emph{greater} length $3$, setting
$b[F] \gets 4$. So the two stack sizes at which $F$ is unstacked within one interval
satisfy $S_2 > S_1$, directly contradicting the claimed $S_2 < S_1$. The barrier of $F$
is correspondingly non-monotone, moving $5 \to 3 \to 4$.

Because barrier propagation is invoked without a new output, the step ``$u.bar$ remains
the same when $u$ is pushed to stack the second time'' fails, and with it the conclusion
$S_2 < S_1$. The postulated interval delay bound of $O(km)$ therefore does not follow,
and remains open for $\textsc{BC-DFS}$.
Note that monotonicity does hold in the first interval, where no spine path unwinds and
hence no $\fruitful$ cascade runs.
The trace for $\bsdfs$ on this graph is similar, with additional cascade updates, but the
non-monotone assignments to $b[F]$ are the same. Accounting correctly for the interleaved
$\search$ and $\fruitful$ calls across \emph{all} intervals is the key point of our
$\bsdfs$ delay-bound proof.

\section{Experiments}\label{sec:experiments}

Our experiments address the following questions:
how often \textsc{BC-DFS} misses paths (\autoref{fig:missed-paths}),
what completeness costs, 
in elementary steps (\autoref{fig:steps}, \autoref{fig:steps-accounts})
and in running time (\autoref{fig:runtime}), and
how the measured delays of $\bsdfs$ compare to the constants
proven in \autoref{sec:delay-bounds} (\autoref{fig:delay-bounds}).

All experiments draw from two synthetic graph families; benchmarks on
real-world networks are beyond the scope of this paper, whose subject is
the correctness and the delay of the algorithm rather than its
performance at scale.
$\ER$ denotes directed Erd\H{o}s--R\'enyi graphs 
$G(n,m)$ with $n$ uniform on $\{6,\dots,30\}$ and
mean out-degree $m/n$ log-uniform on $[1,n-1]$, i.e.\ from tree-like
density up to the complete digraph; $\WS$ denotes Watts--Strogatz graphs
with $n = 1000$, degree $6$ and rewiring probability $0.2$, symmetrized to
a digraph with $m = 6000$. In both, $s$ and $t$ are drawn uniformly
without replacement. Each family contributes $1000$ instances per
$k \in \{3,\dots,10\}$, the same instances in every experiment.
Enumeration is capped at $10^{5}$ outputs; instances reaching the cap are
excluded, since a capped run reports no loss by construction. The cap
binds only in $\ER$ and only for $k \ge 5$, excluding $17$, $100$, $167$,
$223$, $262$ and $292$ of $1000$ instances for $k = 5,\dots,10$.

The experiments were executed on an Intel Core i7-14700K 28-core 32 GiB RAM Workstation
running Ubuntu 24.04.4 LTS under WSL2 and using Python 3.12.3.
The algorithms were directly translated from pseudocode to Python,
single threaded and with no further optimizations at the code level.
Timings are the minimum over five passes, the passes separated in time so
that a transient disturbance cannot affect all measurements of one instance.
All source code and logs are publicly available in our GitHub repository~\cite{bsdfs_github}.

\begin{figure}[tbp]
    \centering
    \includegraphics[width=0.8\textwidth]{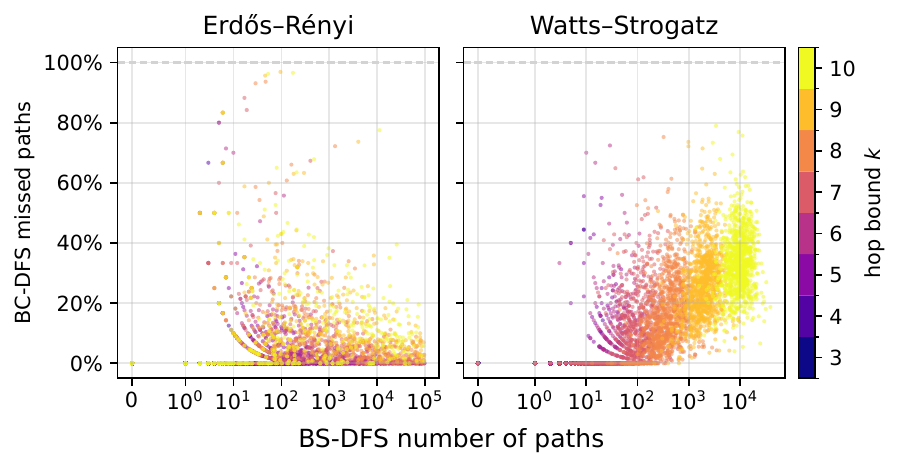}
    \caption{Incompleteness of $\textsc{BC-DFS}$: the fraction of paths it
        fails to report, against the number of paths reported by
        $\textsc{BS-DFS}$ (symmetric-log abscissa, so instances without any path
        appear at $0$). No point lies below the axis: $\textsc{BC-DFS}$ is sound,
        and its output is a subsequence of $\textsc{BS-DFS}$'s, so the loss is
        one-sided. The visible arcs are the level sets of a fixed absolute loss.
        Aggregated over instances, the loss reaches $9.4\%$ on $\ER$ and $33.6\%$
        on $\WS$ at $k = 10$; at that $k$ every $\WS$ instance and $43\%$ of the
        $\ER$ instances lose at least one path.}
    \label{fig:missed-paths}
\end{figure}

\begin{figure}[tbp]
    \centering
    \includegraphics[width=0.8\textwidth]{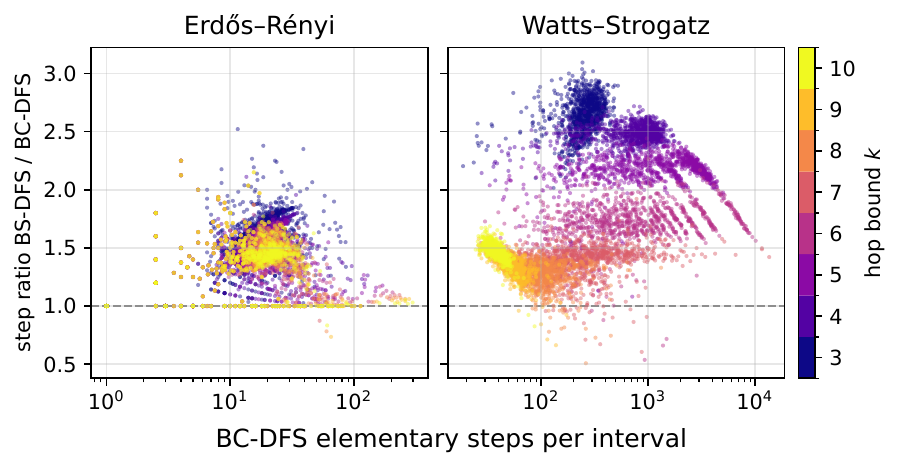}
    \caption{The cost of replacing $\textsc{BC-DFS}$ by $\textsc{BS-DFS}$,
        measured in the elementary steps of \autoref{sec:work-attribution} and
        normalized per interval, so that neither algorithm is charged for output
        the other did not produce. Values above $1$ are the price of
        completeness.
        Aggregated over the instances of one $k$, the ratio lies between
        $1.48$ and $1.78$ on $\ER$ and between $1.28$ and $2.66$ on $\WS$,
        where it is smallest around $k = 7$; individual instances scatter
        more widely.}
\label{fig:steps}
\end{figure}


\begin{figure}[tbp]
    \centering
    \includegraphics[width=0.8\textwidth]{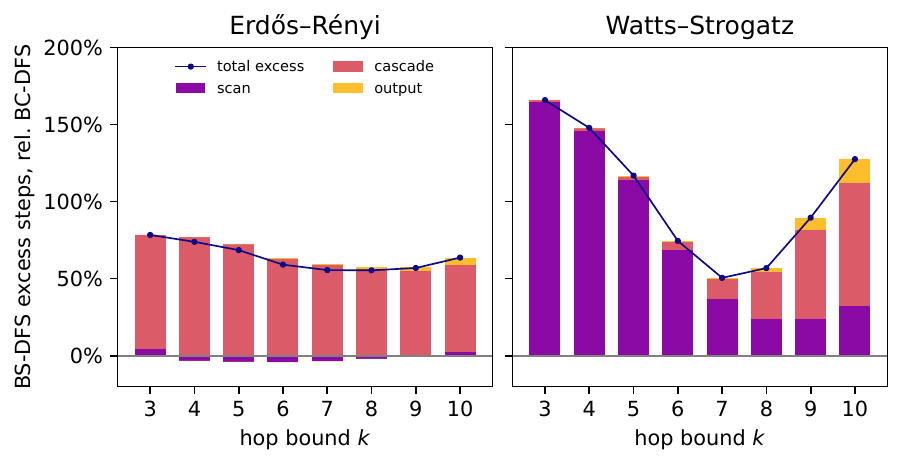}
    \caption{Where the additional steps of $\textsc{BS-DFS}$ are spent: the
        excess in each account of \autoref{sec:work-attribution}, relative to the
        total step count of $\textsc{BC-DFS}$. The three contributions sum to the
        net excess (line). A negative contribution means $\textsc{BC-DFS}$ spends
        more in that account: lacking the origin write of the $\fruitful$
        cascade, its barriers stay weaker, and it admits entries that
        $\textsc{BS-DFS}$ prunes --- on $\ER$ it makes about twice as many
        $\search$ calls. The excess is almost entirely cascade work on $\ER$,
        whereas on $\WS$ it is dominated by the larger search tree, the paths
        missed by $\textsc{BC-DFS}$ being subtrees it never enters.}
    \label{fig:steps-accounts}
\end{figure}

\begin{figure}[tbp]
    \centering
    \includegraphics[width=0.8\textwidth]{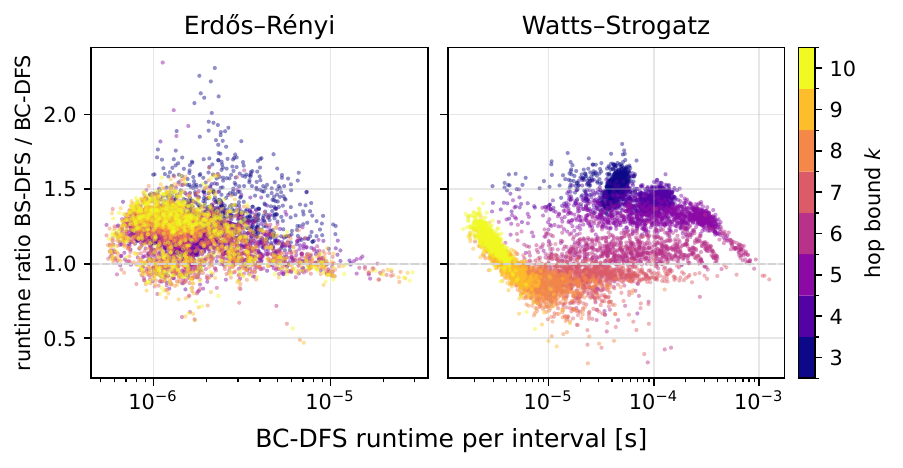}
    \caption{The same comparison as in \autoref{fig:steps}, but in wall-clock time, per interval.
        Timings include a measured harness overhead of about $0.23\,\mu\mathrm{s}$,
        which compresses the ratio towards $1$ at the left edge.
        The agreement in shape with \autoref{fig:steps},
        in particular the minimum near $k = 7$ on $\WS$,
        indicates that the step model of \autoref{sec:work-attribution}
        captures the structure of the observed running time.}
\label{fig:runtime}
\end{figure}

\begin{figure}[tbp]
    \centering
    \includegraphics[width=0.8\textwidth]{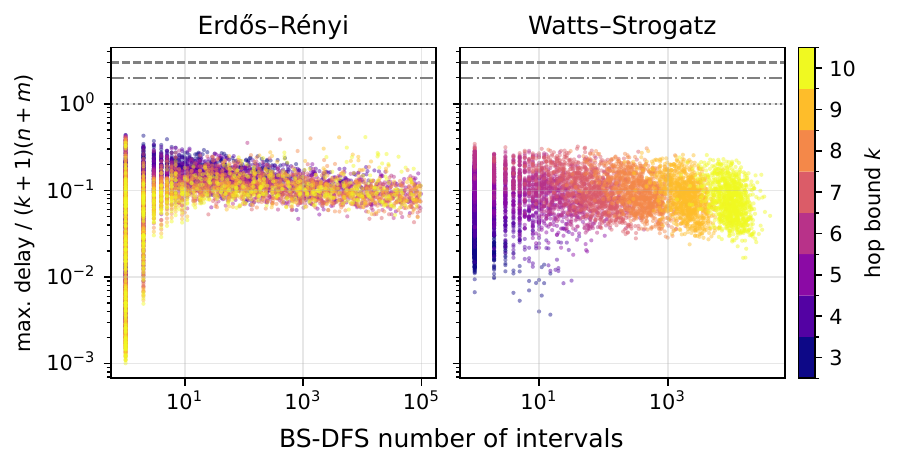}
    \caption{Delay of $\textsc{BS-DFS}$: for each run, the maximum over all
        its intervals of the delay divided by $(k+1)(n+m)$, against the number of
        intervals. The lines mark the proven bounds --- $1$ for the initial
        interval and $3$ for any interval (\fullref{cor:boundary-intervals},
        \autoref{thm:worst-case-delay}), with $2$ for the terminal interval and
        for the amortized delay (\autoref{thm:amortized-delay}). Only the line at
        $3$ bounds the plotted quantity; the observed maximum of $0.44$ on $\ER$
        and $0.34$ on $\WS$ stays below all three, and does not grow over four
        decades of run length. The instances at abscissa $1$ produce no output;
        their single interval is the fruitless case of
        \fullref{cor:boundary-intervals}.}
    \label{fig:delay-bounds}
\end{figure}

\FloatBarrier

Taken together, the experiments show that the defect of $\textsc{BC-DFS}$
is substantial  --- a third of all paths on $\WS$ at $k = 10$,
with every instance losing paths --- and that repairing it is affordable:
$\bsdfs$ needs between $1.3$ and $2.7$ times as many elementary steps
per interval, and at most $1.6$ times the running time.
On $\WS$ at $k = 7$ and $k = 8$ it is the faster of the two.
Its delay never exceeds $0.44\,(k+1)(n+m)$, less than a sixth of the
bound proven in \fullref{thm:worst-case-delay}, and does not grow with
the length of the run.

\section{Conclusions and Outlook}\label{sec:conclusions-and-outlook}

We have presented $\bsdfs$, a bounded-scope depth-first search that enumerates all
simple $s$-$t$-paths --- and, via node splitting, all simple $s$-cycles ---
of length at most $k$ in a directed graph, having a delay of $O(k(n+m))$.
The hidden constants are small: at most $3(k+1)(n+m)$ elementary steps
between consecutive events (\autoref{thm:worst-case-delay}) and, amortized,
at most $2(k+1)(n+m)$ steps per event, the $p$-th event being reached
within $2p(k+1)(n+m)$ steps (\autoref{thm:amortized-delay}).

The algorithm follows the line of barrier-based methods initiated by Johnson and
refined by Peng et al.\ and by Gupta and Suzumura, but it is built around a
single explicit invariant, \emph{edge-consistency}:
the barrier of an off-path node never exceeds one plus the barrier of any off-path successor.
Read as a heuristic in the sense of $A^*$, edge-consistency is exactly admissibility maintained as a
search-time invariant, and it yields a transparent completeness argument ---
an admissible barrier never prunes a path that respects the budget.

The same invariant organizes a whole family of schemes that differ only in how
barriers are revised (\autoref{sec:bsdfs-variants}).
The \emph{tight} scheme deposits exact distances and pays for its delay bound with a
correctness proof in which completeness, the exact return value, and
preservation of edge-consistency are mutually entangled. At the other end, the
\emph{loose} (\autoref{sec:loose-scheme}) and \emph{lazy}
(\autoref{sec:lazy-scheme}) schemes discard distance information by resetting
fruitful nodes to zero; this trades barrier precision for a markedly simpler
correctness argument, since zero is the most admissible value and no
exact-value claim is needed. The lazy scheme additionally shows that the
backward cascade can be driven by dependency lists accumulated during fruitless
search and consumed on success, while preserving edge-consistency.
The $O(k(n+m))$ delay bound, however, is provably lost: on the family of \autoref{app:loose-breaker},
both the loose and the lazy scheme perform $\Omega(k^2(n+m))$ steps between
two consecutive outputs (\autoref{thm:loose-breaker}).

The framework also makes precise where a published algorithm goes wrong.
In $\textsc{BC-DFS}$ (\autoref{sec:analyzing-bcdfs}) a single misplaced guard
causes correct barrier relaxations to be silently dropped, breaking
edge-consistency and, with it, completeness; our counter-example and the
matching gap in the original proof both sit exactly at this point.
The accompanying experiments (\autoref{sec:experiments}) suggest that the resulting
missed-path phenomenon is not confined to isolated counter-examples:
under the graph models considered, the fraction of missed paths 
reaches $9.4\%$ in one model and $33.6\%$ in the other at $k = 10$.

As already proposed for many other path and cycle search algorithms~\cite{
    DBLP:journals/siamcomp/Johnson75,
    DBLP:journals/corr/abs-2105-10094,
    10.14778/3372716.3372720,
    peng_efficient_2021,
    DBLP:conf/spaa/BlanusaIA22,
    DBLP:journals/topc/BlanusaAI23},
our  $\bsdfs$ algorithm can be combined with preprocessing steps like
strongly connected component decomposition and breadth-first search
for reachability from one or both sides (bidirectional search)
to further reduce the search space and improve practical performance.
One practical optimization is to initialize the barrier values in $\bsdfs$ by
a breadth-first search (BFS) from the target node $t$ in the reverse graph,
which sets the initial barrier value of each node to its exact distance towards $t$.
Hence, the initialization is admissible and edge-consistent, and $\bsdfs$ preserves
edge-consistency thereafter, so completeness is preserved.
This initialization allows for more pruning of the search space and thus practically
faster execution, especially if the nodes $x$ which cannot reach $t$ in budget
are initialized with $b[x] = k + 1$ thus being eliminated from the search.

In this paper, edge-consistency is stated for unit edge weights and a single
source and target; extending it to multiple targets and to non-simple and
undirected graphs appears to be a natural direction, and one for which the
invariant-based proof technique developed here should carry over. Weighted
budgets are a different matter. Edge-consistency is a metric condition and
generalizes verbatim to nonnegative weights, so we expect completeness to
survive, with the cascade's unit-layer propagation replaced by Dijkstra-like
label-setting. The delay analysis, however, rests on the unit value space:
barriers take at most $k+1$ values, and the monotone write chains of
\autoref{sec:delay-bounds} are short only for that reason. Integer weights
with budget $K$ yield only a pseudo-polynomial $O(K(n+m))$ by the same
counting, and real weights no bound at all. Polynomial delay as such remains
available by answering every pruning test with a fresh shortest-path
computation; whether the barrier certificates admit comparable \emph{reuse}
under weights is open.

A related question is how far edge-consistency can be relaxed:
completeness needs only local admissibility (when tested for),
which global edge-consistency implies but strictly exceeds,
so characterizing the full class of admissible --- and hence
complete --- bounded-scope schemes, edge-consistent or not, is open.

Our experiments are deliberately limited in scope: they use two classes of
random graph models and limited $k$ values to evaluate $\bsdfs$.
A broader empirical and comparative study measuring per-output delay
on real-world road, social, and similar networks would require substantial code-level optimization.
Such engineering work lies outside the focus of this paper and is left to future work.

\section*{Acknowledgements}
All ideas, claims, and proofs, the $\bsdfs$ algorithm and its variants,
and the $\textsc{BC-DFS}$ counter-examples and findings are the authors' own work.
Generative AI tools 
were used under the authors' direction for proofreading, checking,
polishing of the author-written text, and for minor coding assistance.
The graph family $(G_k)$ of \autoref{app:loose-breaker} was discovered
with assistance from Claude (Anthropic), model Claude Fable~5.
All AI-assisted material was reviewed and verified by the authors,
who take full responsibility for the content of the paper.

\hrulefill

\appendix

\section{Execution Trace of Counter-Example Graph $X$}\label{sec:counter-ex-X}

    \begin{lstlisting}[
        caption={Execution Trace of $\textsc{BC-DFS}(G,s,t,k)$ with $G=X$, $s=A$, $t=E$, $k=4$},
        label=list:bcdfs-counterexample-X
    ]
S path: barriers  : trace
      : A B C D E :
      : 0 0 0 0 0 : search(A) enter
A     : 0 0 0 0 0 : .search(B) enter
AB    : 0 0 0 0 0 : ..search(C) enter
ABC   : 0 0 0 0 0 : ...search(B) B pruned (in S)
ABC   : 0 0 0 0 0 : ...search(D) enter
ABCD  : 0 0 0 0 0 : ....search(B) B pruned (in S)
ABCD  : 0 0 0 0 0 : ....search(D) fruitless bar[D] := 2 (was 0)
ABC   : 0 0 0 2 0 : ...search(D) exit
ABC   : 0 0 0 2 0 : ...search(C) fruitless bar[C] := 3 (was 0)
AB    : 0 0 3 2 0 : ..search(C) exit
AB    : 0 0 3 2 0 : ..search(D) enter
ABD   : 0 0 3 2 0 : ...search(B) B pruned (in S)
ABD   : 0 0 3 2 0 : ...search(D) fruitless bar[D] := 3 (was 2)
AB    : 0 0 3 3 0 : ..search(D) exit
AB    : 0 0 3 3 0 : ..search(E) enter
ABE   : 0 0 3 3 0 : ...search(E) output ['A', 'B', 'E']
AB    : 0 0 3 3 0 : ..search(B) fruitful
A     : 0 0 3 3 0 : .search(B) exit
A     : 0 0 3 3 0 : .search(C) enter
AC    : 0 0 3 3 0 : ..search(B) enter
ACB   : 0 0 3 3 0 : ...search(C) C pruned (in S)
ACB   : 0 0 3 3 0 : ...search(D) D pruned (5 >= k)
ACB   : 0 0 3 3 0 : ...search(E) enter
ACBE  : 0 0 3 3 0 : ....search(E) output ['A', 'C', 'B', 'E']
ACB   : 0 0 3 3 0 : ...search(B) fruitful
AC    : 0 0 3 3 0 : ..search(B) exit
AC    : 0 0 3 3 0 : ..search(D) D pruned (4 >= k)
AC    : 0 0 3 3 0 : ..search(C) fruitful
AC    : 0 0 3 3 0 : ..update(C) bar[C] := 2 (was 3)
A     : 0 0 2 3 0 : .search(C) exit
A     : 0 0 2 3 0 : .search(A) fruitful
      : 0 0 2 3 0 : search(A) exit
    \end{lstlisting}

\section{Execution Trace of Counter-Example Graph $Y$}\label{sec:counter-ex-Y}

    \begin{lstlisting}[
        caption={Execution Trace of $\textsc{BC-DFS}(G,s,t,k)$ with $G=Y$, $s=E$, $t=C$, $k=6$},
        label=list:bcdfs-counterexample-Y
    ]
S path :   barriers  : trace
       : A B C D E F :
       : 0 0 0 0 0 0 : search(E) enter
E      : 0 0 0 0 0 0 : .search(A) enter
EA     : 0 0 0 0 0 0 : ..search(D) enter
EAD    : 0 0 0 0 0 0 : ...search(A) A pruned (in S)
EAD    : 0 0 0 0 0 0 : ...search(B) enter
EADB   : 0 0 0 0 0 0 : ....search(D) D pruned (in S)
EADB   : 0 0 0 0 0 0 : ....search(E) E pruned (in S)
EADB   : 0 0 0 0 0 0 : ....search(F) enter
EADBF  : 0 0 0 0 0 0 : .....search(B) B pruned (in S)
EADBF  : 0 0 0 0 0 0 : .....search(F) fruitless bar[F] := 3 (was 0)
EADB   : 0 0 0 0 0 3 : ....search(F) exit
EADB   : 0 0 0 0 0 3 : ....search(B) fruitless bar[B] := 4 (was 0)
EAD    : 0 4 0 0 0 3 : ...search(B) exit
EAD    : 0 4 0 0 0 3 : ...search(C) enter
EADC   : 0 4 0 0 0 3 : ....search(C) output ['E', 'A', 'D', 'C']
EAD    : 0 4 0 0 0 3 : ...search(D) fruitful
EA     : 0 4 0 0 0 3 : ..search(D) exit
EA     : 0 4 0 0 0 3 : ..search(A) fruitful
E      : 0 4 0 0 0 3 : .search(A) exit
E      : 0 4 0 0 0 3 : .search(B) enter
EB     : 0 4 0 0 0 3 : ..search(D) enter
EBD    : 0 4 0 0 0 3 : ...search(A) enter
EBDA   : 0 4 0 0 0 3 : ....search(D) D pruned (in S)
EBDA   : 0 4 0 0 0 3 : ....search(A) fruitless bar[A] := 4 (was 0)
EBD    : 4 4 0 0 0 3 : ...search(A) exit
EBD    : 4 4 0 0 0 3 : ...search(B) B pruned (in S)
EBD    : 4 4 0 0 0 3 : ...search(C) enter
EBDC   : 4 4 0 0 0 3 : ....search(C) output ['E', 'B', 'D', 'C']
EBD    : 4 4 0 0 0 3 : ...search(D) fruitful
EB     : 4 4 0 0 0 3 : ..search(D) exit
EB     : 4 4 0 0 0 3 : ..search(E) E pruned (in S)
EB     : 4 4 0 0 0 3 : ..search(F) enter
EBF    : 4 4 0 0 0 3 : ...search(B) B pruned (in S)
EBF    : 4 4 0 0 0 3 : ...search(F) fruitless bar[F] := 5 (was 3)
EB     : 4 4 0 0 0 5 : ..search(F) exit
EB     : 4 4 0 0 0 5 : ..search(B) fruitful
EB     : 4 4 0 0 0 5 : ..update(B) bar[B] := 2 (was 4)
EB     : 4 2 0 0 0 5 : ..update(F) bar[F] := 3 (was 5)
E      : 4 2 0 0 0 3 : .search(B) exit
E      : 4 2 0 0 0 3 : .search(D) enter
ED     : 4 2 0 0 0 3 : ..search(A) enter
EDA    : 4 2 0 0 0 3 : ...search(D) D pruned (in S)
EDA    : 4 2 0 0 0 3 : ...search(A) fruitless bar[A] := 5 (was 4)
ED     : 5 2 0 0 0 3 : ..search(A) exit
ED     : 5 2 0 0 0 3 : ..search(B) enter
EDB    : 5 2 0 0 0 3 : ...search(D) D pruned (in S)
EDB    : 5 2 0 0 0 3 : ...search(E) E pruned (in S)
EDB    : 5 2 0 0 0 3 : ...search(F) enter
EDBF   : 5 2 0 0 0 3 : ....search(B) B pruned (in S)
EDBF   : 5 2 0 0 0 3 : ....search(F) fruitless bar[F] := 4 (was 3)
EDB    : 5 2 0 0 0 4 : ...search(F) exit
EDB    : 5 2 0 0 0 4 : ...search(B) fruitless bar[B] := 5 (was 2)
ED     : 5 5 0 0 0 4 : ..search(B) exit
ED     : 5 5 0 0 0 4 : ..search(C) enter
EDC    : 5 5 0 0 0 4 : ...search(C) output ['E', 'D', 'C']
ED     : 5 5 0 0 0 4 : ..search(D) fruitful
E      : 5 5 0 0 0 4 : .search(D) exit
E      : 5 5 0 0 0 4 : .search(E) fruitful
       : 5 5 0 0 0 4 : search(E) exit
    \end{lstlisting}

\section{A Graph Family Breaking the Loose and Lazy Delay}
\label{app:loose-breaker}

This appendix exhibits a graph family $(G_k)$ on which the loose scheme
(\autoref{sec:loose-scheme}) and the lazy scheme (\autoref{sec:lazy-scheme})
perform $\Omega(k^2(n+m))$ steps between two consecutive outputs.
Hence, neither scheme attains the $O(k(n+m))$ delay bound that
\fullref{thm:worst-case-delay} establishes for the tight scheme: the distance
information deposited by the fruitful cascade is essential to the delay
bound, not only to its proof. We give the construction and a counting
argument with generous constants; the demonstration code, exact hub-entry
counts, and successor-scan measurements are available at~\cite{bsdfs_github}.

\paragraph{Construction.}
Let $k \equiv 0 \pmod 4$, $k \ge 8$, and set $p = k/2 - 1$, $D = 2p$, and
$F = k$. The node set of $G_k$ consists of $s$, $t$, a \emph{spine}
$v_1, \dots, v_p$, a \emph{chain} $d_1, \dots, d_D$, a \emph{hub} $x$, and a
\emph{fan} $f_1, \dots, f_F$ of sinks. The successors of each node, in scan
order, are:
\begin{center}
\begin{tabular}{ll}
node & successors, in scan order \\[1pt]
\hline \\[-9pt]
$s$              & $v_1,\; t$ \\
$v_i$ \ ($i<p$)  & $v_{i+1},\; d_1$ \\
$v_p$            & $t,\; d_1$ \\
$d_j$ \ ($j<D$)  & $d_{j+1},\; x,\; [\,v_i \text{ if } j = j(i)\,]$ \\
$d_D$            & $x,\; [\,v_p\,]$ \\
$x$              & $f_1,\dots,f_F,\; d_1$ \\
$f_q$            & --- \\
\end{tabular}
\end{center}
where the \emph{branch} edges $(d_{j(i)}, v_i)$ with $j(i) = p + i$ exist
for $i = 2, \dots, p$. The sizes are $n = 5k/2$ and $m = 9k/2 - 6$, so
$n + m = 7k - 6 = \Theta(k)$. \autoref{fig:loose-breaker} shows $G_8$, the smallest member.

\begin{figure}[H]
    \centering
    \begin{tikzpicture}[
        >=latex,
        every node/.style={draw, circle, minimum size=7mm, inner sep=1pt},
        heading/.style={draw=none, font=\small},
        branch/.style={->, red, thick},
        conduit/.style={->, blue, thick}
    ]
        \node[heading] at (1.8, 0.9) {spine};
        \node[heading] at (4.5, 0.9) {chain};
        \node[heading] at (7.2, 0.9) {hub};
        \node[heading] at (9.9, 0.9) {fan};
        \node (s)  [double] at (0.9,  0.0)  {$s$};
        \node (v1) at (1.8, -1.45) {$v_1$};
        \node (v2) at (1.8, -2.9)  {$v_2$};
        \node (v3) at (1.8, -4.35) {$v_3$};
        \node (t)  [double] at (0.9, -5.8)  {$t$};
        \node (d1) at (4.5,  0.0)  {$d_1$};
        \node (d2) at (4.5, -1.16) {$d_2$};
        \node (d3) at (4.5, -2.32) {$d_3$};
        \node (d4) at (4.5, -3.48) {$d_4$};
        \node (d5) at (4.5, -4.64) {$d_5$};
        \node (d6) at (4.5, -5.8)  {$d_6$};
        \node (x)  at (7.2, -2.9)  {$x$};
        \node (f1) at (9.9,  0.00) {$f_1$};
        \node (f2) at (9.9, -0.83) {$f_2$};
        \node (f3) at (9.9, -1.66) {$f_3$};
        \node (f4) at (9.9, -2.49) {$f_4$};
        \node (f5) at (9.9, -3.31) {$f_5$};
        \node (f6) at (9.9, -4.14) {$f_6$};
        \node (f7) at (9.9, -4.97) {$f_7$};
        \node (f8) at (9.9, -5.80) {$f_8$};
        \draw[->] (s)  -- (v1);
        \draw[->, dashed] (s) -- (t);
        \draw[->] (v1) -- (v2);
        \draw[->] (v2) -- (v3);
        \draw[->] (v3) -- (t);
        \draw[->] (v1) -- (d1);
        \draw[->] (v2) -- (d1);
        \draw[->] (v3) -- (d1);
        \draw[->] (d1) -- (d2);
        \draw[->] (d2) -- (d3);
        \draw[->] (d3) -- (d4);
        \draw[->] (d4) -- (d5);
        \draw[->] (d5) -- (d6);
        \draw[->] (d1) -- (x);
        \draw[->] (d2) -- (x);
        \draw[->] (d3) -- (x);
        \draw[->] (d4) -- (x);
        \draw[->] (d5) -- (x);
        \draw[->] (d6) -- (x);
        \draw[branch] (d5) -- (v2);
        \draw[branch] (d6) -- (v3);
        \draw[->] (x) -- (f1);
        \draw[->] (x) -- (f2);
        \draw[->] (x) -- (f3);
        \draw[->] (x) -- (f4);
        \draw[->] (x) -- (f5);
        \draw[->] (x) -- (f6);
        \draw[->] (x) -- (f7);
        \draw[->] (x) -- (f8);
        \draw[conduit] (x) to[bend right=12] (d1);
    \end{tikzpicture}
    \caption{The graph $G_8$ with branch edges $(d_{j(i)}, v_i)$ (red), the
    conduit $(x, d_1)$ (blue), and the edge $(s,t)$ scanned last (dashed).}
    \label{fig:loose-breaker}
\end{figure}

\paragraph{Output structure.}
Exactly two simple $s$-$t$-paths of length $\le k$ exist:
$P_1 = s, v_1, \dots, v_p, t$ of length $p + 1 = k/2$, and $P_2 = s, t$.
Indeed, any other path enters the chain from some spine node $v_r$, i.e.,
$d_1$ is reached at depth $r + 1$ and $d_j$ at depth $r + j$; it can return
to the spine only via a branch, reaching $v_i$ at depth $r + j(i) + 1$ and
$v_p$ at depth $r + j(i) + 1 + (p - i)$, so $t$ is scanned at depth
\[
h \;=\; r + j(i) + 1 + (p - i) \;=\; r + 2p + 1 \;=\; k + r - 1 \;\ge\; k ,
\]
independently of $i$ since $j(i) = p + i$; the scan guard $b[t] + h < k$
fails. The hub admits no return at all: its successors are the fan sinks and
the conduit $(x, d_1)$, and $d_1$ is on the search path whenever $x$ is
entered. Since both schemes emit outputs in lexicographic path order
(\autoref{sec:loose-scheme}) and $(s,t)$ is scanned last, the entire
adversarial work below falls into the single interval between the two
outputs.

\paragraph{Interval anatomy.}
Between the two outputs the search backtracks along the spine. A call is
fruitful if and only if its subtree emitted an output, so within the
interval exactly the spine calls $v_p, \dots, v_1$ pop fruitful (each
contains output $P_1$), and every chain, hub, and fan call pops fruitless.
Before each fruitful pop, $v_i$ scans its last successor $d_1$, launching a
fruitless exploration $\expll{i}$ of the chain--hub--fan region rooted at
depth $i$.

\paragraph{Admission invariant.}
Within the interval, a barrier of a chain node $d_j$ is either $0$
(initialization or a reset) or was written as $k - (i' + j) + 1$ by the
fruitless pop of $d_j$ at depth $i' + j$ inside an earlier exploration
$\expll{i'}$. Since the roots $i$ strictly decrease over the interval,
$i' > i$ holds whenever $\expll{i}$ scans $d_j$, and the scan guard, tested
at the parent depth $i + j - 1$, evaluates to
\[
\bigl(k - (i' + j) + 1\bigr) + (i + j - 1) \;=\; k - (i' - i) \;\le\; k - 1
\;<\; k ,
\]
and to $i + j - 1 < k$ in the zero case. Hence, $\expll{i}$ enters at least
$d_1, \dots, d_{k - i - 1}$, and each of these pops fruitless at depth
$i + j \le k - 1$, writing the positive barrier $k - (i + j) + 1 \ge 2$.

\paragraph{Counting.}
We now count, with generous constants, the work of the loose scheme in this
single interval.

\begin{enumerate}
    \item \emph{$\Omega(k)$ arming resets.} Let $2 \le i \le k/4$.
        When $v_i$ pops, the exploration $\expll{i}$ has just written
        positive barriers on $d_1, \dots, d_{j(i)}$ (note $j(i) = k/2 - 1 + i \le
        k - i - 1$ for $i \le k/4$), and no reset ran in between. The backward
        cascade $\textsc{Reset}(v_i)$ therefore proceeds from the branch
        predecessor $d_{j(i)}$ down the chain to $d_1$ and through the conduit
        predecessor to the hub: after $\textsc{Reset}(v_i)$ completes, $b[x] = 0$,
        and since every chain node is a predecessor of $x$, also $b[d_j] = 0$ for
        all $j$.
    \item \emph{$\Omega(k)$ hub entries per re-armed exploration.} The next
        exploration $\expll{i-1}$, with root $r = i - 1 \le k/4  -
        1$, starts on an all-zero chain and hub. It descends the chain and, on
        backtrack, each $d_j$ scans $x$ at parent depth $r + j$, deepest first.
        The first scan meets $b[x] = 0$; each entry of $x$ at depth $r + j + 1$
        pops fruitless and writes $b[x] = k - r - j$, so the next scan, from
        $d_{j-1}$, evaluates the guard to $(k - r - j) + (r + j - 1) = k - 1 < k$
        and is admitted again. Hence, $x$ is entered at least $k - r - 2 \ge k/2$
        times within $\expll{i-1}$.
    \item \emph{$\Theta(k)$ scans per hub entry.} Every entry of $x$ runs the
        successor loop over all $F + 1 = k + 1$ edges, unconditionally of whether
        any fan node is entered.
\end{enumerate}
The explorations $\expll{i-1}$ for $i = 2, \dots, k/4$ are
pairwise distinct, so the interval contains at least
\[
\bigl(k/4  - 1\bigr) \cdot \frac{k}{2} \cdot (k + 1)
\;=\; \Omega(k^3) \;=\; \Omega\bigl(k^2 (n+m)\bigr)
\]
successor scans.

\paragraph{The lazy scheme.}
The lazy scheme breaks on the same family with the same counting. On $G_k$,
the registrations coincide with the nonzero backward relation used above:
the fruitless pops of $\expll{i}$ register $d_{j-1} \in B[d_j]$,
$x \in B[d_1]$ (conduit), and $d_{j(i)} \in B[v_i]$ (branch), so
$\textsc{Update}(v_i)$ retraces $\textsc{Reset}(v_i)$ and re-arms the hub
under the same condition; steps 2 and 3 are unchanged. Deferral buys
nothing here: the $F$ registrations performed at each fruitless pop of $x$
are re-performed in every re-armed exploration and cannot be amortized.

\paragraph{The tight scheme.}
The tight scheme is immune, structurally and unconditionally. Structurally,
the calibrated cascade from $v_i$ deposits values $d + 1$ that grow along
the chain, so the value offered to $x$ is large and the guard $b > d + 1$
of \fullref{alg:fruitful} stops the propagation early --- the hub is never
re-armed, and each exploration after the first admits at most one hub entry
(via $d_1$). Unconditionally, \fullref{thm:worst-case-delay} caps the interval
at $3(k+1)(n+m)$ steps regardless.

\medskip
We summarize the results in
\begin{theorem}[Loose/Lazy Lower Bound]\label{thm:loose-breaker}
On the family $(G_k)$, both the loose and the lazy scheme perform
$\Omega(k^2(n+m))$ successor scans (hence steps) between the first and the
second output. Consequently, neither scheme admits an $O(k(n+m))$ delay bound.
\end{theorem}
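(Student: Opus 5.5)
The plan is to assemble the appendix's pieces into a single chain, in the order output structure, interval localization, per-exploration lower bound, summation. The final step converts the bound using $n+m = 7k-6 = \Theta(k)$.

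\emph{Interval localization.} First I would fix the output structure. By the depth computation $h = k + r - 1 \ge k$ for every chain detour, together with the fact that the hub admits no return, the only outputs are $P_1$ and $P_2 = (s,t)$. Both schemes are complete, by edge-consistency and \fullref{lem:pruning-is-permissive}. They emit outputs in lexicographic scan order, and $(s,t)$ is scanned last at $s$. So every step performed after the output of $P_1$ and before the output of $P_2$ lies in one interval. That interval contains the fruitful pops of $v_p, \dots, v_1$ and, before each pop of $v_i$, the fruitless exploration $\expll{i}$ launched by the scan of $d_1$. It therefore suffices to lower-bound the scans inside these explorations.

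\emph{Per-exploration bound.} This step is the core. I would use the admission invariant: roots strictly decrease, so any nonzero chain barrier met by $\expll{i}$ stems from some $\expll{i'}$ with $i' > i$, and the guard evaluates to $k-(i'-i) < k$. This shows that $\expll{i}$ descends at least to $d_{k-i-1}$ and leaves positive barriers there. For $2 \le i \le k/4$ we have $j(i) \le k-i-1$. Hence the reset cascade from $v_i$ enters the chain through the branch predecessor $d_{j(i)}$. It then runs down the chain to $d_1$, reaches $x$ via the conduit, and from $x$ reaches every chain node, since each is a predecessor of $x$. The result is the all-zero state on chain and hub. In the subsequent $\expll{i-1}$, the backtracking scans of $x$ from $d_j$, taken deepest first, each see $b[x] = k-r-j$ written by the previous hub pop, giving guard value $k-1$. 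So $x$ is re-entered at least $k-r-2 \ge k/2$ times, and each entry costs $F+1 = k+1$ successor scans.

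\emph{Summation and lazy case.} Summing over the $k/4-1$ distinct explorations $\expll{i-1}$ gives $\Omega(k^3)$ scans. For the lazy scheme I would verify that the registration lists built by the fruitless pops are exactly the backward edges traversed above: $d_{j-1}\in B[d_j]$, $x\in B[d_1]$, and $d_{j(i)}\in B[v_i]$. With these lists, $\textsc{Update}(v_i)$ clears the same set as $\textsc{Reset}(v_i)$. To make this exact I would use \autoref{lem:registration-invariant}, or alternatively argue that Update only walks registrations, which suffices for a lower bound on zeros once each needed registration is shown present.

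\emph{Main obstacle.} The hardest step is the claim that the reset from $v_i$ actually reaches $x$ and zeroes the entire chain. This needs $b \ne 0$ (loose) or a live registration (lazy) at each traversed node at exactly that moment. That in turn requires that no reset between the fruitless writes of $\expll{i}$ and the pop of $v_i$ intervened, and that the monotone-decreasing root order governs all earlier writes. I would handle this by an explicit invariant on the barrier vector at each spine pop, established by induction on $i$ from $p$ downward. The delay consequence then follows immediately, since $\Omega(k^2(n+m))$ exceeds any $O(k(n+m))$ bound.
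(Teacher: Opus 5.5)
Your proposal is correct and follows the paper's appendix argument essentially step for step: the same output structure and single-interval localization, the same admission invariant, the same arming resets for $2 \le i \le k/4$, the same count of at least $k-r-2 \ge k/2$ hub entries at $k+1$ scans each, and the same observation that the lazy registrations retrace the loose reset. One small index slip to fix when you write it out: the scan of $x$ from $d_{j-1}$ (at parent depth $r+j-1$) is the one that sees $b[x] = k-r-j$, written by the hub entry from $d_j$ at depth $r+j+1$, and this gives the guard value $(k-r-j)+(r+j-1) = k-1$; read literally, your phrasing (the scan from $d_j$ seeing $k-r-j$) would give guard value $k$ and fail.
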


\bibliography{bsdfs.bib}
\bibliographystyle{abbrvurl}

\end{document}